\title[Discrete-time PMP on matrix Lie groups]{A discrete-time Pontryagin maximum principle on matrix Lie groups}
\thanks{The authors were supported partially by the grant 14ISROC007 from the Indian Space Research Organization.}
\author[K.\ S.\ Phogat]{Karmvir Singh Phogat}
\address{Systems \& Control Engineering, IIT Bombay, Mumbai 400076, India}
\email[K.\ S.\ Phogat]{karmvir.p@iitb.ac.in}
\urladdr[K.\ S.\ Phogat]{\url{http://www.sc.iitb.ac.in/~karmvir.p}}
\author[D.\ Chatterjee]{Debasish Chatterjee}
\email[D.\ Chatterjee]{dchatter@iitb.ac.in}
\urladdr[D.\ Chatterjee]{\url{http://www.sc.iitb.ac.in/~chatterjee}}
\author[R.\ Banavar]{Ravi Banavar}
\email[R.\ Banavar]{banavar@iitb.ac.in}
\urladdr[R.\ Banavar]{\url{http://www.sc.iitb.ac.in/~banavar}}
\keywords{Pontryagin maximum principle, Lie groups, optimal control, discrete mechanics, state constraints}
\date{\today}
\begin{document}

\begin{abstract}
	In this article we derive a Pontryagin maximum principle (PMP) for discrete-time optimal control problems on matrix Lie groups. The PMP provides first order necessary conditions for optimality; these necessary conditions typically yield two point boundary value problems, and these boundary value problems can then solved to extract optimal control trajectories. Constrained optimal control problems for mechanical systems, in general, can only be solved numerically, and this motivates the need to derive discrete-time models that are accurate and preserve the non-flat manifold structures of the underlying continuous-time controlled systems. The PMPs for discrete-time systems evolving on Euclidean spaces are not readily applicable to discrete-time models evolving on non-flat manifolds. In this article we bridge this lacuna and establish a discrete-time PMP on matrix Lie groups. Our discrete-time models are derived via discrete mechanics, (a structure preserving discretization scheme,) leading to the preservation of the underlying manifold over time, thereby resulting in greater numerical accuracy of our technique. This PMP caters to a class of constrained optimal control problems that includes point-wise state and control action constraints, and encompasses a large class of control problems that arise in various field of engineering and the applied sciences.    
\end{abstract}

\maketitle

\section{introduction}

The Pontryagin maximum principle (PMP) provides first order necessary conditions for a broad class of optimal control problems. These necessary conditions typically lead to two-point boundary value problems that characterize optimal control, and these problems may be solved to arrive at the optimal control functions. This approach is widely applied to solve optimal control problems for controlled dynamical systems that arise in various fields of engineering including robotics, aerospace \cite{brockett1973lie, leokattitude,lee2008, agrachev}, and quantum mechanics \cite{bonnard2012, khaneja2001time}. 

	Constrained optimal control problems for nonlinear continuous-time  systems can, in general, be solved only numerically, and two technical issues inevitably arise. First, the accuracy guaranteed by a numerical technique largely depends on the discretization of the continuous-time system underlying the problem. For control systems evolving on complicated state spaces such as manifolds, preserving the manifold structure of the state space under discretization is a nontrivial matter. For controlled mechanical systems evolving on manifolds, discrete-time models should preferably be derived via discrete mechanics since this procedure respects certain system invariants such as momentum, kinetic energy, (unlike other discretization schemes derived from Euler's step,) resulting in greater numerical accuracy \cite{dm_marsden, sinathesis, dm_ober}. Second, classical versions of the PMP are applicable only to optimal control problems in which the dynamics evolve on Euclidean spaces, and do not carry over directly to systems evolving on more complicated manifolds. Of course, the PMP, first established by Pontryagin and his students \cite{pmpdiscovery, pontryagin} for continuous-time controlled systems with smooth data, has, over the years, been greatly generalized,  see e.g., \cite{ agrachev, barbero, clarkeopbook, clarke, dubovitskii,  holtzman, milyutin_book, mordukhovich, sussmanpmp, wargaoptimal}. However, there is still no PMP that is readily applicable to control systems with discrete-time dynamics evolving on manifolds. As is evident from the preceding discussion, numerical solutions to optimal control problems via digital computational means need a discrete-time PMP. The present article contributes towards filling this lacuna: here we establish a PMP for a class of discrete-time controlled systems evolving on matrix Lie groups. 
    	
    Optimal control problems on Lie groups are of great interest due to their wide applicability  across the discipline of engineering: robotics \cite{bullo2001kinematic}, computer vision \cite{vemulapalli2014human}, quantum dynamical systems \cite{bonnard2012, khaneja2001time}, and aerospace systems such  as attitude maneuvers of a spacecraft \cite{leokattitude, kobilarov, saccon}. 
    
    Early results on optimal control problems on Lie groups for discrete-time systems derived via discrete mechanics may be found in \cite{kobilarov, leokattitude, lee2008, lee2008}. It is worth noting that simultaneous state and action constraints have not been considered in any of these formulations. The inclusion of state and action constraints in optimal control problems, while of crucial importance in all real-world problems, makes constrained optimal control problems technically challenging, and, moreover, classical variational analysis techniques are not applicable in deriving first order necessary conditions for such constrained problems \cite[p.\ 3]{pontryagin}. More precisely, the underlying assumption in calculus of variations that an extremal trajectory admits a neighborhood in the set of admissible trajectories does not necessarily hold for such problems due to the presence of the constraints. This article addresses a class of optimal control problems in which the discrete-time controlled system dynamics evolve on matrix Lie groups, and are subject to simultaneous state and action constraints. We derive first order necessary conditions bypassing techniques involving classical variational analysis. Discrete-time PMPs for various special cases are subsequently derived from the main result. 

     A discrete-time PMP is fundamentally different from a continuous-time PMP due to intrinsic technical differences between continuous and discrete-time systems \cite[p.\ 53]{bourdin2016optimal}. While a significant research effort has been devoted to developing and extending the PMP in the continuous-time setting, by far less attention has been given to the discrete-time versions. A few versions of discrete-time PMP can be found in \cite{dubovitskii_discrete, holtzman, boltyanski}.\footnote{Some early attempts in establishing discrete-time PMP in Euclidean spaces have been mathematically incorrect \cite[p.\ 53]{bourdin2016optimal}.} In particular, Boltyanskii developed the theory of tents using the notion of local convexity, and derived general discrete-time PMP's that address a wide class of optimal control problems in Euclidean spaces subject to simultaneous state and action constraints \cite{tent}. This discrete-time PMP serves as a guiding principle in the development of our discrete-time PMP on matrix Lie groups even though it is not directly applicable in our problem; see Remark \ref{rem:memory} ahead for details. 

	Our main result, a discrete-time PMP for controlled dynamical systems on matrix Lie groups, and its applications to various special cases are derived in \secref{sec:pd}. \secref{sec:proofDMP} provides a detailed proof of our main result, and the proofs of the other auxiliary results and corollaries are collected in the Appendices. 

\section{Background and Main Results}\label{sec:pd}
This section contains an introduction to Lie group variational integrators that motivates a general form of discrete-time systems on Lie groups. Later in this section we establish a discrete-time PMP for optimal control problems associated with these discrete-time systems.

	To illustrate the engineering motivation for our work, and ease understanding, we first consider an aerospace application. Let us first consider an example of control of spacecraft attitude dynamics in continuous time. The configuration space \m{\SO{3}} (the set of \m{3\times3} orthonormal matrices with real entries and determinant 1) of a spacecraft performing rotational maneuvers \cite{leokattitude}, is a matrix Lie group with matrix multiplication as the group operation. Let \m{R \in \SO{3}} be the rotation matrix that relates coordinates in the spacecraft body frame to the inertial frame, (see Figure \ref{fig:rigid_body},) let \m{\omega \in \R^3} be the spacecraft momentum vector in the body frame, and let \m{u \in \R^3} be the torque applied to the spacecraft in the body frame. The attitude dynamics in this setting is given in the spacecraft body frame in a standard way \cite{leokattitude} as:
\begin{align} \label{eq:at_k}
\dot{R} &= R \hat{\omega},\\  \label{eq:at_d}
J \dot{\omega} &= \hat{\omega} J \omega + u,
\end{align}
where \m{J} is the \m{3 \times 3} moment of inertia matrix of the spacecraft in the body frame, \m{\hat{\omega} \in \so{3}} and \m{\so{3}} (the set of \m{3\times 3} skew-symmetric matrices with real entries) is the Lie algebra \cite{sachkovnotes} corresponding to the Lie group \m{\SO{3}}. The first equation \eqref{eq:at_k} describes the kinematic evolution and the second equation \eqref{eq:at_d} describes the dynamics.

\begin{figure}[!ht]
\centering
\tdplotsetmaincoords{70}{115}
   \tdplotsetrotatedcoords{-50}{40}{50}
  \begin{tikzpicture}[scale=2,tdplot_rotated_coords]
    \coordinate (O) at (0,0,0);
    \tdplotsetcoord{P}{1.414213}{54.68636}{45}
 {\draw[color=blue,very thick,tdplot_main_coords,->] (0,0,0) -- (2,0,0) node[anchor=north east]{$\hat{n}_1$};}%
 	{\draw[color=blue,very thick,tdplot_main_coords] (0,0,1.8) node[anchor=north west]{Inertial Frame};}
    {\draw[color=black,very thick,tdplot_main_coords] (0,1.8,1.8) node[anchor=north west]{$\hat{n}_i = R \,\hat{b}_i $};}
    {\draw[color=blue,very thick,tdplot_main_coords,->] (0,0,0) -- (0,2,0) node[anchor=north west]{$\hat{n}_2$};}%
    {\draw[color=blue,very thick,tdplot_main_coords,->] (0,0,0) -- (0,0,2) node[anchor=south]{$\hat{n}_3$};}%
{\draw[color=red,very thick,tdplot_rotated_coords,->] (0,0,0) -- (2,0,0) node[anchor=north east]{$\hat{b}_1$};}%
{\draw[color=red,very thick,tdplot_rotated_coords,->] (0,0,0) -- (0,2,0) node[anchor=north west]{$\hat{b}_2$};}%
{\draw[color=red,very thick,tdplot_rotated_coords,->] (0,0,0) -- (0,0,2) node[anchor=south]{$\hat{b}_3$};}%
{\draw[color=red,very thick,tdplot_rotated_coords] (0,0,1.8) node[anchor=north east]{Body Frame};}
    \draw[fill=gray!50,fill opacity=0.6] (O) -- (Py) -- (Pyz) -- (Pz) -- cycle;
    \draw[fill=blue,fill opacity=0.6] (O) -- (Px) -- (Pxy) -- (Py) -- cycle;
    \draw[fill=yellow,fill opacity=0.6] (O) -- (Px) -- (Pxz) -- (Pz) -- cycle;
    \draw[fill=green,fill opacity=0.6] (Pz) -- (Pyz) -- (P) -- (Pxz) -- cycle;
    \draw[fill=red,fill opacity=0.6] (Px) -- (Pxy) -- (P) -- (Pxz) -- cycle;
    \draw[fill=magenta,fill opacity=0.7] (Py) -- (Pxy) -- (P) -- (Pyz) -- cycle;
\end{tikzpicture}
\caption{Rigid body orientation.}
\label{fig:rigid_body}
\end{figure}
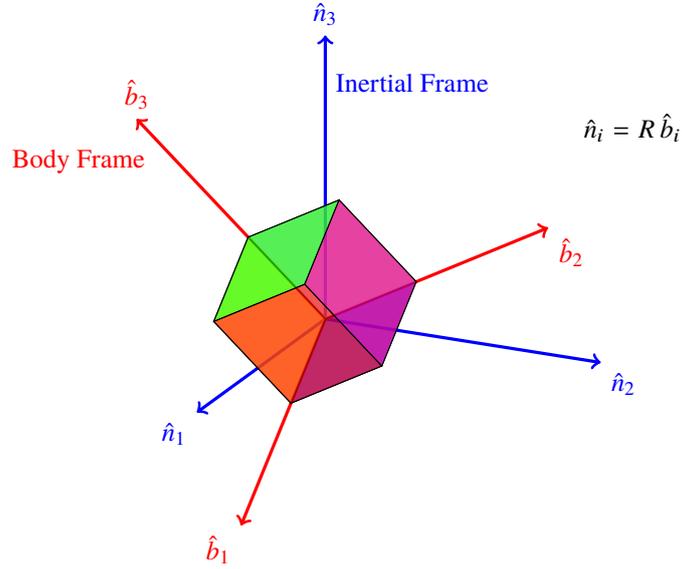

Let us, as a first step, uniformly discretize the continuous-time model \eqref{eq:at_k}-\eqref{eq:at_d} to arrive at an approximate discrete-time model. Fixing a step length  \m{h>0}, we have the discrete-time instances \m{t \in \{0\}\cup \N} corresponding to the continuous-time instances  \m{t h \in \R} in a standard way. Therefore, the system configurations at the discrete-time instances defined above are given by
\[
	R_{t} \defas R(th) , \quad \omega_t \defas \omega(th) \quad \text{for all }\quad t \in \{0\}\cup \N.
\]

If we assume that spacecraft body momentum is constant on the interval \m{{[}th, (t+1)h{[} }, i.e., \m{\omega(s)=\omega(th) \; \text{for} \; s \in {[}th, (t+1)h{[},} then the corresponding kinematic equations \m{\dot{R}(s) = R(s) \hat{\omega}_t \text{\;for\;} s \in {[}th, (t+1)h{[} } represent a linear system\footnote{This linear system can be written in the standard form, i.e., \m{ \dot{x} = Ax+bu, } by embedding \m{ \SO{3} } into \m{ M (3, \R) } (the set of \m{ 3 \times 3 } real matrices) and identifying \m{ M (3, \R) } with \m{ R^9 }.} in the time interval \m{{[}th, (t+1)h{[}}. This linear system admits an analytical solution, and the discrete-time evolution of the continuous-time kinematic equation \eqref{eq:at_k} is approximated as
\begin{align}\label{eq:at_kd}
R_{t+1} = R_{t}\e^{h \hat{\omega}_t},
\end{align}
where \m{\e: \so{3} \rightarrow \SO{3} } is the exponential map \cite[p.\ 256]{abraham} from the Lie algebra \m{\so{3}} to the Lie group \m{\SO{3}.}\footnote{{Let \m{\lieg} be the Lie group with associated Lie algebra \m{\liea}. Then, for any \m{X \in \liea}, there exist a map \m{\e^{X}(\cdot) : \R \rightarrow \lieg} such that: \m{\e^{X}(0) = e \in \lieg,\; \left.\frac{d}{dt}\right|_{t=0} \e^{X}(t) = X,\; \text{and} \e^{X}(t+s) = \e^{X}(s)\e^{X}(t),} where \m{e} is the group identity.}}   
Similarly, the discrete-time system corresponding to  the continuous-time dynamics \eqref{eq:at_d} is approximated using Euler's step to be
\begin{align}\label{eq:at_dd}
J \omega_{t+1} = (I_{3\times 3} + h \hat{\omega}_t) J \omega_t + h u_t.
\end{align}
It is worth noting here that the discrete integration step \m{\e^{h \hat{\omega}_t}} that describes the discrete evolution of the kinematic equation \eqref{eq:at_kd} can, in general, be a function of the configuration space variable \m{R_t \in \SO{3}} along with the spacecraft momentum vector \m{\hat{\omega}_t \in \so{3}}. Similarly, the discrete-time evolution of the spacecraft momentum dynamics \eqref{eq:at_dd} can also, in general, depend on the orientation especially if the spacecraft is subjected to internal actuations via reaction wheels. 

	The above considerations lead to the following general form of the state dynamics:
\begin{align}\label{eq:at_ddg}
\begin{cases}
R_{t+1} &= R_{t} s_t(R_t, \omega_t), \\
\omega_{t+1} &= f_t(R_t, \omega_t, u_t),
\end{cases}
\end{align}
where \m{f_t: \SO{3} \times \R^3 \times \R^3 \rightarrow \R^3, s_t: \SO{3} \times \R^3 \rightarrow \SO{3}} are maps that define the discrete evolution of the system. Note that the exponential map \m{\e:\so{3} \rightarrow \SO{3}} is a diffeomorphism on a suitable neighborhood of \m{0 \in \so{3}}. Let \m{\mathcal{O} \subset \so{3}} be a bounded open set; then there exists \m{\bar{h}>0} such that for all \m{h < \bar{h}, \mathcal{O} \ni \hat{a}_t \mapsto \e^{\hat{a}_t h} \in \e(\mathcal{O})}  is a diffeomorphism. The diffeomorphic property of the map \m{\mathcal{O} \ni \hat{a}_t \mapsto \e^{\hat{a}_t h} \in \e(\mathcal{O})} restricted to a suitable open set is crucial for defining the local parametrization of the Lie group \m{\SO{3}} in terms of Lie algebraic elements, thus distilling a vector space structure to the discrete-time optimal control problem defined on the Lie group \m{\SO{3}}.

	We are now in a position to generalize the idea of discretization brought forth in the example of the attitude dynamics of a spacecraft to dynamical systems evolving on matrix Lie groups. To this end, let \m{N} be a natural number; in the sequel \m{N} will play the role of a planning or control horizon, and will be fixed throughout. Inspired by \eqref{eq:at_ddg}, we consider the dynamics split into two parts, the first of which occurs on a matrix Lie group \m{G}, and the second on a Euclidean space \m{\R^{\nox}}. The discrete-time evolution of our control system on the configuration space \m{\lieg \times \R^{\nox}} is defined by the recursion
\begin{align}\label{eq:sys}
\begin{cases}q_{t+1} \nsp&= q_t s_t \left(q_t,x_t\right),\\
x_{t+1} \nsp&= f_t \left(q_t,x_t,u_t\right), \end{cases} &  \quad t = 0, \ldots, N-1,
\end{align} 
with the following data:
		\begin{enumerate}[label=(\ref{eq:sys}-\alph*), leftmargin=*, widest=b, align=left]
            \item \label{eq:sys:qx} \m{q_t \in \lieg, x_t \in \R^{\nox}} are the states of the system, 
			\item \label{eq:sys:s} \m{s_t: \lieg\times\R^{\nox} \rightarrow \lieg } is a map depicting the dynamics on the matrix Lie group \m{\lieg},
			\item \label{eq:sys:f} \m{f_t: \lieg\times\R^{\nox}\times\R^{\nou} \rightarrow \R^{\nox}} is a map capturing the dynamics on \m{\R^{\nox},}			
            \item \label{eq:sys:c} \m{ u_t \in \cset \subset \R^{\nou}}, where \m{\cset} is a nonempty set of feasible control actions at time \m{t}. 
		\end{enumerate}
A control action \m{u_t} is applied to our system at the instant \m{t} to drive the system states from \m{\left(q_t,x_t\right)} to \m{\left(q_{t+1}, x_{t+1} \right)} governed by \eqref{eq:sys}. The sequence \m{\{u_t \}_{t=0}^{N-1}} is known as the \textit{controller}, the sequence \m{\{\left(q_t,x_t\right)\}_{t=0}^{N}} describing  system states is called the  \textit{system trajectory} \cite{tent} under the controller \m{\{u_t \}_{t=0}^{N-1}}, with the pair \m{\left(\left\{\left(q_t,x_t\right)\right\}_{t=0}^{N},\left\{u_t\right\}_{t=0}^{N-1} \right)} referred to as a state-action trajectory.  
  
    We synthesize a controller for our system \eqref{eq:sys} by minimizing the performance index 
\begin{align}\label{eq:cost}
\mathscr{J} \left(\vq,\vx,\vu\right) \defas \sum_{t=0}^{N-1} c_t \left(q_t,x_t,u_t\right) + c_N \left(q_N,x_N\right)
\end{align}  
with the following data:
		\begin{enumerate}[label=(\ref{eq:cost}-\alph*), leftmargin=*, widest=b, align=left]
            \item \label{eq:cost:ct} \m{c_t: \lieg\times\R^{\nox} \times\R^{\nou} \rightarrow \R} is a map that accounts for the cost-per-stage for each \m{t =0,\ldots,N-1}, 
			\item \label{eq:cost:cn} \m{c_N: \lieg\times\R^{\nox} \rightarrow \R} is a map that accounts for the final cost.  
            \end{enumerate}
In addition, we impose
           \begin{enumerate}[resume,label=(\ref{eq:cost}-\alph*), leftmargin=*, widest=b, align=left]
			\item \label{eq:cost:cnt} control constraints \m{u_t \in \cset} for each \m{t = 0,\ldots, N-1,}			
            \item \label{eq:cost:scnt} state constraints \m{g_t \left(q_t,x_t\right)\leq 0,} for each \m{t=1,\ldots,N,} where  \m{g_t: \lieg\times\R^{\nox} \rightarrow \R^{\nog{t}}} is a given map,
            \item \label{eq:cost:bnd} Initial conditions  \m{\left(q_0,x_0\right) =\left(\bar{q}_0,\bar{x}_0 \right),} where \m{\left(\bar{q}_0,\bar{x}_0 \right)} is fixed.
		\end{enumerate}
The set \[ \mathcal{A} \defas \left\{\left(\left\{\left(q_t,x_t\right)\right\}_{t=0}^{N},\left\{u_t\right\}_{t=0}^{N-1} \right) \,\Big| \, \left(q_0,x_0\right) =\left(\bar{q}_0,\bar{x}_0 \right) ,\; g_t \left(q_t,x_t\right)\leq 0,\text{\;and\;} u_t \in \cset\right\} \]
 is termed as the set of \textit{admissible state-action trajectories}. 

\begin{assumption}
\label{ass:asm}
The following assumptions on the various maps in \eqref{eq:sys} and \eqref{eq:cost} are enforced throughout this article:

\begin{enumerate}[label=(A-\roman*), leftmargin=*, widest=b, align=left]

\item \label{asm:1} The maps \m{s_t, f_t, g_t, c_t,c_N} are smooth.

\item \label{asm:2} There exists an open set \m{\mathcal{O} \subset \liea} such that: 
\begin{enumerate}[label=(\alph*)]
\item the exponential map \m{\e:\mathcal{O} \rightarrow \e(\mathcal{O}) \subset \lieg} is a diffeomorphism, and 
\item the integration step \m{s_t \in \e(\mathcal{O}) \text{\; for all\;} t;} see Figure \ref{fig:para}.
\end{enumerate}
\item \label{asm:3} The set of feasible control actions \m{\cset} is convex for each \m{t=0,\ldots,N-1}.

\end{enumerate} 
\end{assumption}
 
   Assumption \ref{ass:asm} is crucial, as we shall see, in order to ensure the existence of multipliers that appear in the necessary optimality conditions for the optimal control problem. In particular, \ref{asm:1} ensures the existence of a convex approximation (known as a tent \cite{tent}) of the feasible region in a neighborhood of an optimal triple \m{(\op{q}_t,\op{x}_t,\op{u}_t)}. \ref{asm:2} gives the local representation of admissible trajectories \m{\left \{\left( q_t,x_t \right)\right\}_{t=0}^{N} \in \mathcal{A}} in a Euclidean space. This assumption naturally holds in situations in which the discrete-time dynamics are derived from an underlying continuous-time system \eqref{eq:sys}, thereby transforming our optimal control problem to a Euclidean space; first order necessary conditions for optimality are thereafter obtained using Boltyanskii's method of tents \cite{tent}. These first order necessary conditions are interpreted in terms of the (global) configuration space variables.  \ref{asm:3} leads to a pointwise non-positive condition of the gradient of the Hamiltonian over the set of feasible control actions, which is explained in detail in \secref{sec:proofDMP}.

	Before defining the optimal control problem \eqref{eq:cost} formally, let us introduce the geometric notions that frequently arise in this article.

\begin{definition}[{{\cite[p.\ 124]{marsden}}}]
Let \m{f : M \rightarrow \R} be a function defined on a manifold \m{M}. The derivative of the function \m{f} at a point \m{q_0 \in M} is the map
\[ T_{q_0}M \ni v \mapsto \mathcal{D} f(q_0) v  \defas \left.\frac{d}{dt} \right|_{t=0}  f \left(g\left(t\right)\right) \in \R, \]
where \m{g(t)} is a path in the manifold \m{M} with \m{g(0) = q_0} and \m{\left.\frac{d}{dt} \right|_{t=0} g(t) = v.}
\end{definition}
    \begin{figure}[t] 
\begin{tikzpicture}[scale=0.9]
\newcommand{\eplane}[2]{
	(#1, #2, 1.35) --
	++(3, 1, 0.0) --
	++(1, -1.8, -2.7) --
	++(-3, -1, -0.0) --
	cycle}
\newcommand{\qplane}[2]{
	(#1, #2, 0) --
	++(3, -1, 0.0) --
	++(1, -1, -3.5) --
	++(-3, 1, -0.0) --
	cycle}
\coordinate (edir) at (0.5, 0.5, 0.0);  
\coordinate (qdir) at (0.8, -0.15, 0);
\newcommand{\evec}[2]{#1 -- ++#2}
\newcommand{\qvec}[2]{#1 -- ++#2}
\coordinate (e) at (0.5,2,0);
\coordinate (ed) at (0.5,4,0);
\coordinate (q) at (6.5,2,0);
\coordinate (qd) at (6.5,4,0);
\coordinate (eq) at (3.5,1.5,0);
\coordinate (qvec) at (6.9, 1.93, 0); 
\coordinate (evec) at (0.75,2.25,0); 
\coordinate (qdvec) at (6.9, 3.93, 0); 
\coordinate (edvec) at (0.75,4.25,0); 
\coordinate (eqs) at (3.5,3,0);
\coordinate (eqd) at (3.5,3.5,0);
\coordinate (eqds) at (3.5,5,0);
\coordinate (A) at (0,0,0);
\coordinate (B) at (10,0,0);
\coordinate (C) at (9,3,1);
\coordinate (D) at (-3,2,1);
\node at (5,0.5,0) {$\lieg$}; 
\node at (-0.4,1,0) {$\liea$}; 
\node at (9,1.2,0) {$T_{q}\lieg$}; 
\node at (-0.4,3,0) {$\liea^*$}; 
\node at (9,3.2,0) {$T^*_{q}\lieg$}; 
\draw[ultra thick, color=black] (A) to [bend left=20] (B);
\draw[ultra thick, color=black] (A) to [bend left=-30] (D);
\draw[ultra thick, color=black] (D) to [bend left=20] (C);
\draw[ultra thick, color=black] (C) to [bend left=20] (B);
\draw[fill=gray!10]\eplane{-1.3}{2.4}; 
\draw[fill=gray!10]\qplane{4}{2.3};
\draw[fill=blue!10]\eplane{-1.3}{4.4}; 
\draw[fill=blue!10]\qplane{4}{4.3};
\draw[->, thick]\evec{(e)}{(edir)} node[anchor=east] {$w$};
\draw[->, thick]\evec{(ed)}{(edir)};
\draw[dotted] (e) -- (ed);
\draw[->, thick]\qvec{(q)}{(qdir)};
\draw[->, thick]\qvec{(qd)}{(qdir)} node[anchor=west] {$a$};
\draw[dotted] (q) -- (qd);
\draw[color=blue, <-] (q) to [bend left=10] (eq);
\draw[color=blue, <-] (eq) node[anchor=north] {\m{\Phi_{q}}} to [bend left=10] (e);
\draw[color=red, <-] (qvec) to [bend left=-10] (eqs);
\draw[color=red, <-] (eqs) node[anchor=north] {\m{T_{e}\Phi_{q}}} to [bend left=-10] (evec);
\draw[color=red, ->] (qdvec) to [bend left=-10] (eqds);
\draw[color=red, ->] (eqds) node[anchor=north] {\m{T^*_{e} \Phi_{q}}} to [bend left=-10] (edvec);	
\draw[fill] (q) circle[radius=2pt] node[anchor=east] {$q$};
\draw[fill] (qd) circle[radius=2pt] node[anchor=east] {};
\draw[fill] (e) circle[radius=2pt] node[anchor=east] {$e$};
\draw[fill] (ed) circle[radius=2pt] node[anchor=east] {};
\end{tikzpicture}
\caption{Pictorial representation of the cotangent lift of an action \m{\Phi} on \m{\lieg}.}
\label{fig:lift}
\end{figure}
\begin{definition}[{{\cite[p.\ 173]{marsden}}}]
Let \m{\Phi: \lieg \times \lieg \rightarrow \lieg} be a left action, so \m{\Phi_{g}: \lieg \rightarrow \lieg} for all \m{g \in \lieg}. The tangent lift of \m{\Phi}, \m{T\Phi:\lieg \times T \lieg \rightarrow T \lieg} is the action 
\[\left(g,\left(h,v \right)\right) \mapsto T\Phi_{g}\left(h,v \right) = \left(\Phi_{g}(h),T_{h} \Phi_{g} (v) \right), \quad \left(h,v\right) \in T_{h}G,  \]
where \m{T_{h} \Phi_{g} (v) \defas \mathcal{D} \Phi_{g} (h) v}.

	The cotangent lift of \m{\Phi}, \m{T^* \Phi:\lieg \times T^* \lieg \rightarrow T^* \lieg} is the action
\[\left(g,\left(h,a \right)\right) \mapsto T^*\Phi_{g}\left(h,a \right) = \left(\Phi_{g}(h),T^*_{\Phi_{g}(h)} \Phi_{g^{-1}} (a) \right).  \]
In particular, if we choose \m{g=q^{-1}} and \m{h=q} then for \m{a \in T^*_{q}G},
\[\ip{T^*_{e} \Phi_{q} (a)}{w} \defas \ip{a}{T_{e} \Phi_{q} (w)}\quad \text{for all\;} w \in \liea.\]
Note that the map \m{T^*_{e} \Phi_{q}: T^{*}_{q}\lieg \rightarrow \liea^*} 
is known as the cotangent left trivialization, see Figure \ref{fig:lift}.
\end{definition}

\begin{definition}[{{\cite[p.\ 311]{marsden}}}]
The Adjoint action of \m{\lieg} on \m{\liea} is 
\[ \lieg \times \liea \ni \left(g,\beta \right) \mapsto \ad{g}\beta \defas \left.\frac{d}{ds}\right|_{s=0} g \e^{s \beta}g^{-1}  \in \liea.\]
The Co-Adjoint action of \m{g \in \lieg} on \m{\liea^*} is the dual of the adjoint action of \m{g^{-1}} on \m{\liea}, i.e.,
\[ \lieg \times \liea^* \ni \left(g,a \right) \mapsto \Ad^*(g,a) = \ad{g^{-1}}^* a \in  \liea^*,\]
where \[\ip{\ad{g^{-1}}^* a}{\beta}= \ip{a}{\ad{g^{-1}} \beta}\] 
for all \m{\beta \in \liea, a \in \liea^*}.
\end{definition}

Hereinafter we let \m{[N]} denote the set of all integers from zero to \m{N} in increasing order. Now we shall proceed to define the optimal control problem \eqref{eq:cost} in a mathematical form, and derive the first order necessary conditions for optimality for the optimal control problem. 
   
     Collecting the definitions from above, our optimal control problem stands as: 
\begin{equation}
\label{eq:sopt}
\begin{aligned}
\minimize_{\left\{u_t\right\}_{t=0}^{N-1}} &&& \mathscr{J} \left(\mathbf{q},\mathbf{x},\mathbf{u}\right) \defas \sum_{t=0}^{N-1} c_t \left(q_t,x_t,u_t\right) + c_N \left(q_N,x_N\right)\\
\text{subject to} &&&
\begin{cases}
\begin{cases} q_{t+1} = q_t s_t \left(q_t,x_t\right)\\
 x_{t+1} = f_t \left(q_t,x_t,u_t\right)\\
 u_t \in \cset
 \end{cases} 
 \text{for each}\; t \in [N-1], \\
 g_t \left(q_t,x_t\right) \leq 0 \quad \text{for each}\quad t =1,\ldots,N, \\
\left(q_0,x_0\right)=\left(\bar{q}_0,\bar{x}_0\right),\\
 \text{Assumption \ref{ass:asm}.}
 \end{cases} 
\end{aligned} 
\end{equation}

Our main result is the following:
\begin{theorem}[{{Discrete-time PMP}}]
\label{thm:DMP}
Let \m{\{\op{u}_t\}_{t=0}^{N-1}} be an optimal controller that solves the problem \eqref{eq:sopt} with \m{\{\left(\op{q}_t,\op{x}_t \right)\}_{t=0}^N} the corresponding state trajectory. Define the Hamiltonian 
\begin{equation}
\begin{aligned}\label{def:Hamiltonian}
& \hamdef \ni\hamvar \mapsto \\
& \ham{\hamvar} \defas\nu c_\tau \left(q,x,u\right) + \left\langle\zeta,\e^{-1}\left(s_\tau \left(q,x\right)\right)\right\rangle_{\liea} + \left\langle \xi,f_\tau \left(q,x,u\right) \right\rangle  \in \R, 
\end{aligned}
\end{equation}
for \m{\nu \in \R.}
There exist 
\begin{itemize}[ leftmargin=*]
\item an adjoint trajectory \m{\left\{ \left(\zeta^t,\xi^t \right)\right\}_{t=0}^{N-1} \subset \liea^*\times \left(\R^{\nox}\right)^*  }, covectors \m{ \mu^t \in \left(\R^{\nog{t}}\right)^*} for \m{t=1,\ldots, N,} and
\item a scalar \m{\nu \in \{-1, 0\} }
\end{itemize}
 such that, with 
 \[ 
 \optraj{t} \defas \left(t,\zeta^t,\xi^t,\op{q}_t,\op{x}_t, \op{u}_t \right)\quad \text{and} \quad \rho^{t} \defas \codexp(\zeta^t),
 \]
 the following hold:
	\begin{enumerate}[leftmargin=*, label={\rm (MP-\roman*)}, widest=iii, align=left]
\item \label{main:dyn} state and adjoint system dynamics
\begin{align*}
state & \begin{cases}
\op{q}_{t+1} = \op{q}_t \e^{\mathcal{D}_{\zeta} \ham{\left(\optraj{t} \right)}},\\
\op{x}_{t+1} = \mathcal{D}_{\xi} \ham{\left(\optraj{t} \right)}, 
\end{cases}\\
adjoint & \begin{cases}
\rho^{t-1} = \ad{\e^{-\mathcal{D}_{\zeta} \ham{\left(\optraj{t} \right)}}}^* \rho^{t} + \triv{\op{q}_t} \Big(\mathcal{D}_{q} \ham{\left(\optraj{t} \right)} + \mu^t \mathcal{D}_{q} g_{t}\left(\op{q}_t,\op{x}_t \right) \Big),\\
\xi^{t-1} = \mathcal{D}_{x} \ham{\left(\optraj{t} \right)} + \mu^t \mathcal{D}_{x} g_{t}\left(\op{q}_t,\op{x}_t \right),
\end{cases}
\end{align*}
\item \label{main:trans} transversality conditions
\begin{align*}
\rho^{N-1}&= \triv{\op{q}_N}\Big(\nu \mathcal{D}_{q} c_N \left(\op{q}_N,\op{x}_N \right) + \mu^N \mathcal{D}_{q} g_N \left(\op{q}_N,\op{x}_N \right)\Big),\\
\xi^{N-1}&= \nu \mathcal{D}_{x} c_N \left(\op{q}_N,\op{x}_N \right) + \mu^N \mathcal{D}_{x} g_N \left(\op{q}_N,\op{x}_N \right),
\end{align*}
\item \label{main:hmax} Hamiltonian non-positive gradient condition \label{it:hamnpg}
\[\ip{\mathcal{D}_{u} \ham{\left(t,\zeta^t,\xi^t, \op{q}_t,\op{x}_t, \op{u}_t\right)}}{w-\op{u}_t} \leq 0 \quad \text{for all} \quad w \in \cset,\]
\item \label{main:comp} complementary slackness conditions
\begin{align*}
\mu^{t}_j g^j_{t}(\op{q}_t,\op{x}_t) = 0 \quad \text{for all} \quad j=1,\ldots,\nog{t},\quad \text{and} \quad t=1,\ldots,N, 
\end{align*}
\item \label{main:npos} non-positivity condition
\[ \mu^t \leq 0 \quad \text{for all}\quad t=1,\ldots,N,\]
\item \label{main:ntriv} non-triviality condition \\
adjoint variables \m{\left\{\left(\zeta^t,\xi^t\right)\right\}_{t=0}^{N-1}}, covectors \m{\left\{ \mu^t  \right\}_{t=1}^{N}}, and the scalar \m{\nu} do not simultaneously vanish.
\end{enumerate}
\end{theorem}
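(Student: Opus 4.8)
The plan is to reduce the optimal control problem~\eqref{eq:sopt} to an equality- and inequality-constrained optimization problem over a product of Euclidean spaces, apply Boltyanskii's method of tents~\cite{tent} to extract the multipliers, and then re-express the resulting Euclidean necessary conditions intrinsically in terms of the Lie-group data via the cotangent lift and the Adjoint action. This is exactly the roadmap suggested by the discussion following Assumption~\ref{ass:asm}.

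First I would use Assumption~\ref{asm:2} to trivialize the group dynamics. Setting \m{\Omega_t \defas \e^{-1}\bigl(s_t(q_t,x_t)\bigr) \in \mathcal{O} \subset \liea}, the group recursion in~\eqref{eq:sys} becomes \m{q_t^{-1} q_{t+1} = \e^{\Omega_t}}, and since \m{\e} restricts to a diffeomorphism on \m{\mathcal{O}} the increment is faithfully encoded by the vector \m{\Omega_t}. Perturbing the optimal trajectory by left translations, \m{q_t = \op{q}_t \e^{Y_t}} with \m{Y_t \in \liea} small, I would coordinatize the admissible trajectories near the optimum by the Euclidean variables \m{(Y_t, x_t, u_t)}. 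This recasts~\eqref{eq:sopt} as a finite-dimensional problem with a smooth cost, smooth equality constraints coming from the now vector-valued dynamics, smooth inequality constraints \m{g_t \leq 0}, and the convex control constraints \m{u_t \in \cset}.

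Next I would erect tents at the optimal point: a tent of the sublevel set of the cost \m{\mathscr{J}}, for each active inequality a tent of the feasible set of \m{g_t \leq 0}, for the control constraints a tent of \m{\prod_t \cset} (legitimate precisely because each \m{\cset} is convex by Assumption~\ref{asm:3}), and for the equality constraints the tangent subspace cut out by the linearized dynamics. Boltyanskii's intersection/separation theorem for a finite family of tents then yields a scalar \m{\nu \in \{-1,0\}}, covectors \m{\mu^t \leq 0} (which is~\ref{main:npos}), and Lagrange-type covectors \m{(\zeta^t,\xi^t)} attached to the equality constraints, not all vanishing simultaneously (this is the non-triviality~\ref{main:ntriv}), such that the associated functionals sum to zero. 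Complementary slackness~\ref{main:comp} follows from restricting the inequality tents to the active indices, and the non-positive gradient condition~\ref{main:hmax} expresses that the control-derivative of the Hamiltonian lies in the polar of the tangent cone to \m{\cset} at \m{\op{u}_t}, i.e.\ the first-order stationarity of the separating functional over the convex set \m{\cset}.

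Finally---and this is where the genuine work resides---I would translate these Euclidean conditions into the intrinsic form claimed in~\ref{main:dyn}--\ref{main:trans}. The costate recursion arises by propagating the variations \m{Y_t} through the linearized group dynamics. The delicate point is that a left perturbation \m{q_t \mapsto q_t \e^{Y_t}} propagates into \m{q_{t+1}} both through the leading group factor and through the \m{q_t}-dependence of \m{s_t}, so that the first variation of \m{\Omega_t} is governed by the derivative of the logarithm \m{\e^{-1}}; transporting the dual variable across one time step then produces exactly the Adjoint factor \m{\ad{\e^{-\mathcal{D}_{\zeta} \ham{\left(\optraj{t}\right)}}}^*} appearing in the adjoint equation, whereas the cotangent left trivialization \m{\triv{\op{q}_t}} converts the Euclidean partial derivatives \m{\mathcal{D}_{q}\ham{\left(\optraj{t}\right)}} and \m{\mathcal{D}_{q} g_t} into the \m{\liea^*}-valued quantities of~\ref{main:dyn}. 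I expect this linearization of the non-commutative recursion---reconciling the derivative of the exponential and logarithm maps with the \m{\Ad^*}-transport of covectors---to be the principal obstacle, since it is precisely the step at which the matrix Lie group structure departs from the classical Euclidean discrete-time PMP. The transversality conditions~\ref{main:trans} then fall out as the terminal boundary term of the same dual recursion, retaining only the final-cost and terminal-constraint contributions.
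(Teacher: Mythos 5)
Your overall route coincides with the paper's: trivialize the group dynamics through the exponential chart guaranteed by \ref{asm:2}, recast \eqref{eq:sopt} as a smooth finite-dimensional optimization problem with convex control constraints, apply Boltyanskii's method of tents, and transport the resulting Euclidean stationarity conditions back to \m{\liea^*} via the cotangent left trivialization and the \m{\Ad^*} action. The only structural difference is in the choice of chart: the paper coordinatizes the whole trajectory by the successive Lie-algebra increments, writing \m{q_t=\bar{q}_0\,\e^{\vechm^{-1}(\eta_0)}\cdots\e^{\vechm^{-1}(\eta_t)}} (which makes the local dynamics a system ``with memory,'' cf.\ Remark \ref{rem:memory}), whereas you perturb each \m{\op{q}_t} pointwise by \m{\e^{Y_t}}. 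The two parametrizations are related precisely by the computation the paper performs in \eqref{eq:etav}--\eqref{eq:etavg}, and either feeds into Boltyanskii's Theorem \ref{thm:Bolt} equally well; your identification of the delicate step (reconciling the derivative of \m{\e^{-1}} with the \m{\Ad^*}-transport to obtain the adjoint recursion) is exactly where the paper spends its effort in Step (III).

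There is, however, one genuine gap: the non-triviality condition \ref{main:ntriv} does not come out of the tent/separation theorem for free, contrary to what your sketch asserts. Boltyanskii's theorem only guarantees that if \m{\nu} and all the equality/inequality multipliers vanish, then at least one of the covectors \m{a_0,\ldots,a_{N-1},b} attached to the support cones of the convex control sets and of the initial-condition set is nonzero; this alternative is perfectly compatible with all of \m{\nu,\mu^t,\zeta^t,\xi^t} being zero, in which case the theorem you would obtain is vacuous. To exclude this one must prove that the family of support cones \m{K^0_u(\op{\vz}),\ldots,K^{N-1}_u(\op{\vz}),K_B(\op{\vz})} is inseparable, so that a relation \m{a_0+\cdots+a_{N-1}+b=0} with not all terms zero is impossible; this is Claim \ref{claim:inseparable} in the paper, proved by observing that the nontrivial parts of these cones sit in complementary coordinate blocks and invoking \cite[Theorem~7]{tent}. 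You should add this step. A more minor omission: since your derivation is carried out in one particular chart, the assertion that the resulting adjoint and transversality conditions are intrinsically defined on \m{\liea^*} still requires the coordinate-invariance verification of Step (IV) of the paper's proof.
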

We present a proof of Theorem \ref{thm:DMP} in \secref{sec:proofDMP}. This discrete-time PMP on matrix Lie groups is a generalization of the standard discrete-time PMP on Euclidean spaces since the variable \m{q} in the combined state \m{\left(q,x\right)} evolves on the Lie group \(G\); consequently, the assertions of Theorem \ref{thm:DMP} appear different from the discrete-time PMP on Euclidean spaces \cite{tent}. Let us highlight some of its features: 
\begin{itemize}[leftmargin=*]
	\item The adjoint system of the discrete-time PMP on matrix Lie groups corresponding to the states \m{q} evolves on the dual \m{\liea^*} of the Lie algebra \m{\liea} despite the fact that the state dynamics \eqref{eq:sys} evolves on the Lie group \m{\lieg}. 

\item The adjoint system is linear in the adjoint variables \m{\left(\zeta^t, \xi^t \right)} because the maps \m{\liea^* \ni \zeta^t \mapsto\codexp(\zeta^t) \in \liea,} \m{\liea^* \ni \rho^t \mapsto \ad{\e^{-\mathcal{D}_{\zeta} \ham{\left(\optraj{t} \right)}}}^*\left( \rho^t \right) \in \liea^*} and \m{T^*_q G \ni \mathcal{D}_{q} \ham{\left(\optraj{t} \right)} \mapsto \triv{\op{q}_t} \Bigl(\mathcal{D}_{q} \ham{\left(\optraj{t} \right)}\Bigr) \liea^* } are linear for all \m{t.}

\item The assertions the ``Hamiltonian non-positive gradient condition'', the ``complementary slackness condition'', the ``non-positivity condition'', and the ``non-triviality condition'' are identical to the discrete-time PMP on Euclidean spaces. 
\end{itemize}
\begin{remark}
\begin{enumerate}[label=(\alph*), leftmargin=*]
\item Assumption \ref{ass:asm} is not the most general set of hypotheses for which we can derive a discrete-time PMP on matrix Lie groups. Continuous differentiability of  the functions \m{s_t, f_t, g_t, c_t,c_N} suffices for Theorem \ref{thm:DMP} to hold. 

\item Due to certain fundamental differences between discrete-time and continuous-time optimal control problems, the standard Hamiltonian maximization condition in continuous time \cite[Theorem MP on p.\ 14]{Sussmann_copmp} does not carry over to the discrete-time version. For a detailed discussion, see e.g., \cite[p.\ 199]{pshenichnyi1971necessary}. We do, however, get a weaker version contained in assertion \ref{it:hamnpg} of Theorem \ref{thm:DMP}. If the Hamiltonian is concave in control \m{u} over the set of feasible control actions \m{\cset} and \m{\cset} is compact for each \m{t}, in addition to \ref{asm:3}, the assertion can be strengthened to 
\begin{align*}
\ham{\left(t,\zeta^t,\xi^t, \op{q}_t,\op{x}_t, \op{u}_t\right)} = \underset{w \in \cset}{\max} \quad \ham{\left(t,\zeta^t,\xi^t, \op{q}_t,\op{x}_t, w\right)}.
\end{align*}
\end{enumerate}
\end{remark}
  
	In the rest of this section we apply Theorem \ref{thm:DMP} to a class of optimal control problems on Lie groups that frequently arise in engineering applications, and derive the corresponding first order necessary optimality conditions.

\subsection{\textbf{Problem 1.}}      
    Consider the version of \eqref{eq:sopt} in which the final conditions \m{\left(q_N,x_N\right)} are constrained to take values in an immersed submanifold \m{M_{\mathrm{fin}}} in \m{\lieg \times \R^{\nox}}.
Let us define the optimal control problem: 
\begin{equation}
\label{eq:c1sopt}
\begin{aligned}
\minimize_{\left\{u_t\right\}_{t=0}^{N-1}} &&& \mathscr{J} \left(\vq,\vx,\vu\right) \defas \sum_{t=0}^{N-1} c_t \left(q_t,x_t,u_t\right) + c_N \left(q_N,x_N\right)\\
\text{subject to} &&&
\begin{cases}
\begin{cases} q_{t+1} = q_t s_t \left(q_t,x_t\right)\\
x_{t+1} = f_t \left(q_t,x_t,u_t\right)\\ 
u_t \in \cset
\end{cases} \text{for all} \; t \in [N-1],\\
g_t \left(q_t,x_t\right) \leq 0 \quad \text{for all} \quad t=1,\ldots,N, \\
 \left(q_0,x_0\right) \in \left(\bar{q}_0,\bar{x}_0 \right),\\  
 \left(q_N,x_N\right) \in M_{\mathrm{fin}}, \\
 \text{Assumption\;} \ref{ass:asm}.
 \end{cases}
\end{aligned}
\end{equation}
The first order necessary conditions for optimality for \eqref{eq:c1sopt} are given by:
\begin{restatable}{corollary}{c1DMP}
\label{thm:c1DMP}
    Let \m{\{\op{u}_t\}_{t=0}^{N-1}} be an optimal controller that solves the problem \eqref{eq:c1sopt} with \m{\{\left(\op{q}_t,\op{x}_t \right)\}_{t=0}^N} the corresponding state trajectory. For the Hamiltonian defined in \eqref{def:Hamiltonian}, there exist  
\begin{itemize}[leftmargin=*]
\item an adjoint trajectory \m{\left\{ \left(\zeta^t,\xi^t \right)\right\}_{t=0}^{N-1} \subset \liea^*\times \left(\R^{\nox}\right)^*  }, covectors \m{ \mu^t  \in \left(\R^{\nog{t}}\right)^*} for \m{t=1,\ldots,N,} and
\item a scalar \m{\nu \in \{-1,0\} }
\end{itemize}
such that, with 
\[\optraj{t} \defas \left(t,\zeta^t,\xi^t,\op{q}_t, \op{x}_t,\op{u}_t\right) \quad \text{and} \quad \rho^{t} \defas \codexp(\zeta^t),\] the following hold: 

	\begin{enumerate}[leftmargin=*, label={\rm (\roman*)}, widest=b, align=left]
\item \ref{main:dyn} holds,

\item transversality conditions
\begin{align*}
\Big\{\Bigl(T_{\op{q}_N}^* &\Phi_{\op{q}^{-1}_N} \left( \rho^{N-1} \right), \xi^{N-1} \Bigr)- \nu \mathcal{D}_{\left(q,x\right)} c_N \left(\op{q}_N,\op{x}_N \right) - \mu^N \mathcal{D}_{\left(q,x\right)} g_N \left(\op{q}_N,\op{x}_N \right)\Big\} \perp T_{\left(\op{q}_N,\op{x}_N\right)}M_{\mathrm{fin}},
\end{align*}

\item \ref{main:hmax} holds,

\item \ref{main:comp} holds,

\item \ref{main:npos} holds,

\item \ref{main:ntriv} holds.

\end{enumerate}
\end{restatable}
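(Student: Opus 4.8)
The plan is to derive Corollary~\ref{thm:c1DMP} from Theorem~\ref{thm:DMP} by a local reduction that converts the terminal manifold membership into the kind of pointwise constraint already handled by the main result. The only structural difference between \eqref{eq:c1sopt} and \eqref{eq:sopt} is the terminal requirement $(q_N,x_N)\in M_{\mathrm{fin}}$ in place of a free endpoint; all stage dynamics, stage and state constraints, and control constraints are identical. Since first order necessary conditions are a local notion around the optimal state-action trajectory, I would work in a neighbourhood of the optimal terminal point $(\op{q}_N,\op{x}_N)$.

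First I would use the fact that $M_{\mathrm{fin}}$ is an immersed submanifold of $\lieg\times\R^{\nox}$: near $(\op{q}_N,\op{x}_N)$ it is embedded, so by the local submersion (slice) theorem there is an open neighbourhood $U$ and a submersion $\Psi\colon U\to\R^{m}$, with $m=\dim(\lieg\times\R^{\nox})-\dim M_{\mathrm{fin}}$, such that $M_{\mathrm{fin}}\cap U=\Psi^{-1}(0)$ and $\ker\mathcal{D}\Psi(\op{q}_N,\op{x}_N)=T_{(\op{q}_N,\op{x}_N)}M_{\mathrm{fin}}$. I would then replace the terminal constraint by the equality $\Psi(q_N,x_N)=0$, realized as the two inequalities $\Psi\le 0$ and $-\Psi\le 0$ appended to $g_N$; writing $\tilde g_N\defas(g_N,\Psi,-\Psi)$ turns \eqref{eq:c1sopt} locally into an instance of \eqref{eq:sopt} whose hypotheses (Assumption~\ref{ass:asm}) are inherited since $\Psi$ is smooth.

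Next I would apply Theorem~\ref{thm:DMP} to this augmented problem. The interior data are untouched, so assertions~\ref{main:dyn} and~\ref{main:hmax} transfer verbatim, and the terminal multiplier splits as $\tilde\mu^N=(\mu^N,\mu^N_+,\mu^N_-)$ with $\tilde\mu^N\le0$. Because $\Psi(\op{q}_N,\op{x}_N)=0$, complementary slackness is automatic on the $\pm\Psi$ blocks, so $\mu^N_\pm$ are free apart from sign. Setting $\lambda^N\defas\mu^N_+-\mu^N_-$, which ranges over all of $\R^{m}$ as $\mu^N_\pm\le0$ vary, the terminal transversality of the main theorem reads
\begin{align*}
\bigl(T^*_{\op{q}_N}\Phi_{\op{q}^{-1}_N}(\rho^{N-1}),\,\xi^{N-1}\bigr) = \nu\,\mathcal{D}_{(q,x)}c_N(\op{q}_N,\op{x}_N) + \mu^N\mathcal{D}_{(q,x)}g_N(\op{q}_N,\op{x}_N) + \lambda^N\mathcal{D}_{(q,x)}\Psi(\op{q}_N,\op{x}_N).
\end{align*}
Since $\{\lambda^N\mathcal{D}\Psi:\lambda^N\in\R^{m}\}$ is exactly the annihilator of $\ker\mathcal{D}\Psi=T_{(\op{q}_N,\op{x}_N)}M_{\mathrm{fin}}$, the left-hand side minus the $c_N$ and $g_N$ terms lies in $\bigl(T_{(\op{q}_N,\op{x}_N)}M_{\mathrm{fin}}\bigr)^{\perp}$, which is precisely the orthogonality transversality condition of the corollary; the remaining assertions~\ref{main:comp} and~\ref{main:npos} carry over for the original $g_t$ multipliers.

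I expect the main obstacle to be the non-triviality assertion~\ref{main:ntriv}. The split-inequality device gives the correct form of the transversality condition painlessly, but it admits a degenerate multiplier tuple with $\mu^N_+=\mu^N_-=c\ne0$ and everything else zero: then the $\pm\Psi$ contributions cancel, forcing the right-hand side of the transversality to vanish and hence $\rho^{N-1}=0$, $\xi^{N-1}=0$; by the homogeneous linear adjoint recursion in~\ref{main:dyn} this yields $(\zeta^t,\xi^t)=0$ for all $t$ and $\lambda^N=0$ — a tuple that is trivial in the corollary's variables. To rule this out I would not rely on the black-box reduction for this step but instead revisit the tent construction underlying the proof of Theorem~\ref{thm:DMP}, replacing the terminal tent of a sublevel set by the tangent space $T_{(\op{q}_N,\op{x}_N)}M_{\mathrm{fin}}$ itself, which is already a convex cone. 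Boltyanskii's separation theorem applied to this terminal tent together with the cost-decreasing and state-constraint tents then produces a single terminal covector $\lambda^N$ and the genuine non-triviality of $\bigl(\{(\zeta^t,\xi^t)\},\{\mu^t\}_{t=1}^N,\nu\bigr)$ directly, closing the argument.
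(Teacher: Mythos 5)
Your proposal follows essentially the same route as the paper's own proof: represent $M_{\mathrm{fin}}$ near $(\op{q}_N,\op{x}_N)$ as the zero set of a smooth submersion $b_{\mathrm{fin}}$, encode the equality $b_{\mathrm{fin}}=0$ as the pair of inequalities $\pm b_{\mathrm{fin}}\le 0$ appended to the terminal state constraint, apply Theorem \ref{thm:DMP} to the augmented problem, and observe that the resulting multiplier term $(\sigma_{+}-\sigma_{-})\,\mathcal{D}_{(q,x)}b_{\mathrm{fin}}(\op{q}_N,\op{x}_N)$ sweeps out exactly the annihilator of $T_{(\op{q}_N,\op{x}_N)}M_{\mathrm{fin}}$, which yields the stated transversality condition. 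The degenerate-multiplier case you flag for the non-triviality assertion is precisely the delicate point in the paper's argument as well: the paper disposes of it by using the full rank of $\mathcal{D}_{(q,x)}b_{\mathrm{fin}}$ to force $\sigma_{+}=\sigma_{-}$ and then deriving a separability contradiction from the tent construction, which is the same inseparability mechanism your proposed terminal-tent refinement would invoke.
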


\subsection{\textbf{Problem 2.}} 
Consider the version of \eqref{eq:sopt} in which the boundary conditions are given and fixed, and the final cost and the state inequality constraints are absent. In other words, we have the control problem 
\begin{equation}
\label{eq:c2sopt}
\begin{aligned}
\minimize_{\left\{u_t\right\}_{t=0}^{N-1}} &&& \mathscr{J} \left(\vq,\vx,\vu\right) \defas \sum_{t=0}^{N-1} c_t \left(q_t,x_t,u_t\right) \\
\text{subject to} &&&
\begin{cases}
\begin{cases} q_{t+1} = q_t s_t \left(q_t,x_t\right)\\
x_{t+1} = f_t \left(q_t,x_t,u_t\right) \\
u_t \in \cset\\
  \end{cases} \text{for all\;} t \in [N-1], \\
  \left(q_0,x_0\right) = \left(\bar{q}_0,\bar{x}_0\right),\\  
 \left(q_N,x_N\right) = \left(\bar{q}_N,\bar{x}_N\right),\\
 \text{Assumption\;} \ref{ass:asm}.
 \end{cases}
\end{aligned}
\end{equation}
The first order necessary conditions for optimality for \eqref{eq:c2sopt} are given by:

\begin{restatable}{corollary}{c2DMP}
\label{thm:c2DMP}
Let \m{\{\op{u}_t\}_{t=0}^{N-1}} be an optimal controller that solves the problem \eqref{eq:c2sopt} with \m{\{\left(\op{q}_t,\op{x}_t \right)\}_{t=0}^N} the corresponding state trajectory. For the Hamiltonian defined in \eqref{def:Hamiltonian}, there exist
\begin{itemize}[leftmargin=*]
\item an adjoint trajectory \m{\left\{ \left(\zeta^t,\xi^t \right)\right\}_{t=0}^{N-1} \subset \liea^*\times \left(\R^{\nox}\right)^*} and
\item a scalar \m{\nu \in \{-1,0\}  }
\end{itemize}
such that, with 
\[\optraj{t} \defas \left(t,\zeta^t,\xi^t,\op{q}_t, \op{x}_t,\op{u}_t\right)\quad \text{and} \quad \rho^{t} \defas \codexp(\zeta^t),\]
the following hold:

	\begin{enumerate}[leftmargin=*, label={\rm (\roman*)}, widest=b, align=left]
\item state and adjoint system dynamics
\begin{align*}
state & \begin{cases}
\op{q}_{t+1} = \op{q}_t \e^{\mathcal{D}_{\zeta} \ham{\left(\optraj{t} \right)}},\\
\op{x}_{t+1} = \mathcal{D}_{\xi} \ham{\left(\optraj{t} \right)}, 
\end{cases}
\\
adjoint & \begin{cases}
\rho^{t-1} = \ad{\e^{-\mathcal{D}_{\zeta} \ham{\left(\optraj{t} \right)}}}^* \rho^{t} + \triv{\op{q}_t}\Big(\mathcal{D}_{q} \ham{\left(\optraj{t} \right)} \Big),\\
\xi^{t-1} = \mathcal{D}_{x} \ham{\left(\optraj{t} \right)},
\end{cases}
\end{align*}
\item \ref{main:hmax} holds,
\item non-triviality condition \\
	adjoint variables \m{\left\{\left(\zeta^t,\xi^t \right) \right\}_{t=0}^{N-1},} and the scalar \m{\nu} do not simultaneously vanish.
\end{enumerate}
\end{restatable}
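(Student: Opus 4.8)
The plan is to derive Corollary \ref{thm:c2DMP} as a direct specialization of Corollary \ref{thm:c1DMP}, since the problem \eqref{eq:c2sopt} is exactly the problem \eqref{eq:c1sopt} with three simplifications of the terminal data: the final cost is $c_N \equiv 0$, there are no state inequality constraints $g_t$, and the terminal submanifold is the singleton $M_{\mathrm{fin}} \defas \{(\bar{q}_N,\bar{x}_N)\}$. A single point is a zero-dimensional embedded submanifold of $\lieg \times \R^{\nox}$, so Corollary \ref{thm:c1DMP} applies to \eqref{eq:c2sopt} and produces an adjoint trajectory $\{(\zeta^t,\xi^t)\}_{t=0}^{N-1}$, covectors $\mu^t$, and a scalar $\nu \in \{-1,0\}$ satisfying its six assertions. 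I would then show that each of these assertions collapses onto the corresponding claim of Corollary \ref{thm:c2DMP}.

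First I would dispose of the multipliers $\mu^t$ attached to the (absent) state inequality constraints. Since \eqref{eq:c2sopt} carries no constraint $g_t(q_t,x_t) \leq 0$, these covectors are not present; equivalently, modeling each $g_t$ as a strictly negative constant map, the complementary slackness assertion \ref{main:comp} forces $\mu^t = 0$ for every $t$. With all $\mu^t$ vanishing, the terms $\mu^t \mathcal{D}_{q} g_t$ and $\mu^t \mathcal{D}_{x} g_t$ drop out of the adjoint recursion in \ref{main:dyn}, leaving precisely the adjoint system stated in Corollary \ref{thm:c2DMP}, while the state recursion is untouched. The Hamiltonian non-positive gradient condition \ref{main:hmax} is inherited verbatim, as neither $c_N$ nor the $g_t$ enter the $u$-dependence of the Hamiltonian \eqref{def:Hamiltonian}.

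Next I would verify that the transversality condition of Corollary \ref{thm:c1DMP} is vacuous. Because $M_{\mathrm{fin}}$ is a single point, its tangent space is trivial, $T_{(\op{q}_N,\op{x}_N)}M_{\mathrm{fin}} = \{0\}$, so the orthogonality requirement holds automatically and imposes no constraint; in particular the terminal covectors $\rho^{N-1}$ and $\xi^{N-1}$ are left free, which is exactly the fixed-endpoint situation in which no terminal transversality survives. Setting $c_N \equiv 0$ removes the $\nu\,\mathcal{D} c_N$ terms everywhere, and with every $\mu^t = 0$ the non-triviality assertion \ref{main:ntriv} reduces to the statement that $\{(\zeta^t,\xi^t)\}_{t=0}^{N-1}$ and $\nu$ do not simultaneously vanish. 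Collecting these reductions yields assertions (i)--(iii) of Corollary \ref{thm:c2DMP}.

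The step I expect to require the most care is the preservation of non-triviality: I must confirm that collapsing $M_{\mathrm{fin}}$ to a point and deleting the state constraints does not force the entire surviving tuple $(\{(\zeta^t,\xi^t)\},\nu)$ to vanish. This follows because Corollary \ref{thm:c1DMP} asserts non-triviality for the full multiplier tuple $(\{(\zeta^t,\xi^t)\},\{\mu^t\},\nu)$; once $\mu^t = 0$ is established on independent grounds, the nontriviality must be carried by the remaining multipliers, so $(\{(\zeta^t,\xi^t)\},\nu)$ cannot all be zero, as claimed.
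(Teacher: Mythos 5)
Your proposal is correct and follows essentially the same route as the paper's proof: specialize Corollary \ref{thm:c1DMP} with $g_t$ a strictly negative constant (the paper uses $g_t \equiv -1$) so that complementary slackness forces $\mu^t = 0$, take $c_N \equiv 0$ and $M_{\mathrm{fin}}$ the singleton $\{(\bar{q}_N,\bar{x}_N)\}$ whose trivial tangent space renders transversality vacuous, and let non-triviality be carried by the surviving multipliers. The only cosmetic difference is that the paper drops the $g_t$ terms from the adjoint recursion via $\mathcal{D}_{(q,x)}g_t = 0$ rather than via $\mu^t = 0$, which is immaterial.
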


\subsection{\textbf{Problem 3.}}
 Consider the version of \eqref{eq:sopt} with fixed boundary conditions, without state inequality constraints, and without the final cost. Let us consider the case in which the integration step \m{s_t \in \lieg}  of the discrete-time evolution is related to the states \m{\left(q_t,x_t\right)} by an implicit equation \m{v_t\left(s_t,q_t,x_t\right) = 0} such that the map \m{v_t\left(\cdot,q_t,x_t\right) : O_{e} \rightarrow \R^{\noq}} is a diffeomorphism for all admissible trajectories, i.e., \m{\{(q_t,x_t)\}_{t=0}^{N} \in \mathcal{A},} where \m{O_{e}} is a neighborhood of  \m{e} in \m{\lieg}. The optimal control problem can be defined as follows: 
\begin{equation}
\label{eq:c3sopt}
\begin{aligned}
\minimize_{\left\{u_t\right\}_{t=0}^{N-1}} &&& \mathscr{J} \left(\vs, \vq,\vx,\vu\right) \defas \sum_{t=0}^{N-1} c_t \left(s_t,q_t,x_t,u_t\right) \\
\text{subject to} &&&
\begin{cases}
\begin{cases} q_{t+1} = q_t s_t\\
v_t\left(s_t,q_t,x_t\right) = 0\\
x_{t+1} = f_t \left(s_t,q_t,x_t,u_t\right)\\
u_t \in \cset \\
\end{cases} \quad \text{for all}\quad t=0,\ldots,N-1, \\
   \left(q_0,x_0\right) = \left(\bar{q}_0,\bar{x}_0\right),\\  
 \left(q_N,x_N\right) = \left(\bar{q}_N,\bar{x}_N\right), \\
 \text{Assumption\;} \ref{ass:asm}.
 \end{cases}
\end{aligned}
\end{equation}
The first order necessary conditions for optimality for \eqref{eq:c3sopt} are given by: 
\begin{restatable}{corollary}{c3DMP}
\label{thm:c3DMP}
Let \m{\{\op{u}_t\}_{t=0}^{N-1}} be an optimal controller that solves the problem \eqref{eq:c3sopt} with \m{\{\left(\op{q}_t,\op{x}_t \right)\}_{t=0}^N} the corresponding state trajectory  and \m{\{ \op{s}_t\}_{t=0}^{N-1} \subset \lieg} such that \m{v_t\left( \op{s}_t, \op{q}_t,\op{x}_t\right) = 0} for \m{t=0,\ldots,N-1}. Define the Hamiltonian function	
\begin{equation}\label{eq:sham} 
\begin{aligned}
& \hamdefs \ni \left(\tau, \zeta,\xi,s,q,x,u \right) \mapsto \\ 
&  \ham{\left(\tau, \zeta,\xi,s,q,x,u\right)} \defas \nu c_\tau \left(s,q,x,u\right) + \left\langle\zeta,\e^{-1}\left(s\right)\right\rangle_{\liea}  + \left\langle \xi,f_\tau    \left(s,q,x,u\right) \right\rangle  \in \R
\end{aligned}
\end{equation}
for \m{\nu \in \R}. 
There exist
\begin{itemize}[leftmargin=*]
\item an adjoint trajectory \m{\left\{ \left(\zeta^t,\xi^t \right)\right\}_{t=0}^{N-1} \subset \liea^*\times \left(\R^{\nox}\right)^*  }, and
\item a scalar \m{\nu \in \{-1,0\}}
\end{itemize}
such that, with 
\[ \optraj{t}  \defas \left(t,\zeta^t,\xi^t, \op{s}_t, \op{q}_t,\op{x}_t, \op{u}_t \right), \; \; \op{v}_t \defas v_t\left(\op{s}_t,\op{q}_t,\op{x}_t\right) \]
and \[\rho^{t} \defas \codexp(\zeta^t),\]
the following hold:

	\begin{enumerate}[leftmargin=*, label={\rm (\roman*)}, widest=b, align=left]
\item state and adjoint system dynamics
\begin{align*}
state & \begin{cases}
\op{q}_{t+1} = \op{q}_t \e^{\mathcal{D}_{\zeta} \ham{\left(\optraj{t} \right)}},\quad 0 = v_t\left(\op{s}_t,\op{q}_t,\op{x}_t\right),\\
\op{x}_{t+1} = \mathcal{D}_{\xi} \ham{\left(\optraj{t} \right)} , 
\end{cases}
\\
adjoint & \begin{cases}
\rho^{t-1} = &\ad{\e^{-\mathcal{D}_{\zeta} \ham{\left(\optraj{t} \right)}}}^* \rho^{t}+ \triv{\op{q}_t}\Big(\mathcal{D}_{q} \ham{\left(\optraj{t} \right)} \Big) \\
& - \triv{\op{q}_t}\left(\mathcal{D}_{s} \ham{\left(\optraj{t} \right)} \circ \mathcal{D}_s v_t\left(\op{s}_t,\op{q}_t,\op{x}_t\right)^{-1} \circ \mathcal{D}_q v_t\left(\op{s}_t,\op{q}_t,\op{x}_t\right)\right),\\
\xi^{t-1} =& - \mathcal{D}_{s} \ham{\left(\optraj{t} \right)} \circ \mathcal{D}_s v_t\left(\op{s}_t,\op{q}_t,\op{x}_t\right)^{-1} \circ \mathcal{D}_x v_t\left(\op{s}_t,\op{q}_t,\op{x}_t\right) +\mathcal{D}_{x} \ham{\left(\optraj{t} \right)}, 
\end{cases}
\end{align*}

\item Hamiltonian non-positive gradient condition
\[\ip{\mathcal{D}_{u} \ham{\left(t,\zeta^t,\xi^t,\op{s}_t,\op{q}_t,\op{x}_t,\op{u}_t\right) }}{w - \op{u}_t}  \leq 0 \quad \text{for all\;} w \in \cset,\]
\item non-triviality condition \\
adjoint variables \m{\left\{\left(\zeta^t,\xi^t\right)\right\}_{t=0}^{N-1}}, and the scalar \m{\nu} do not simultaneously vanish.
\end{enumerate}
\end{restatable}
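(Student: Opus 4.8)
The plan is to remove the implicit integration step $s_t$ and recast \eqref{eq:c3sopt} as an instance of Problem~2, so that Corollary~\ref{thm:c2DMP} can be applied directly. First I would solve the algebraic constraint: since $v_t(\cdot,q,x)\colon O_e\to\R^{\noq}$ is a diffeomorphism for every admissible $(q,x)$, its differential $\mathcal{D}_s v_t$ is invertible there, and the implicit function theorem furnishes, for each $t$, a smooth map $(q,x)\mapsto\sigma_t(q,x)\in O_e$ uniquely characterized by $v_t\bigl(\sigma_t(q,x),q,x\bigr)=0$. Substituting $s_t=\sigma_t(q_t,x_t)$ into \eqref{eq:c3sopt} yields an equivalent problem with explicit $\lieg$-dynamics $q_{t+1}=q_t\,\sigma_t(q_t,x_t)$, Euclidean dynamics $x_{t+1}=\tilde f_t(q_t,x_t,u_t)\defas f_t\bigl(\sigma_t(q_t,x_t),q_t,x_t,u_t\bigr)$, running cost $\tilde c_t(q,x,u)\defas c_t\bigl(\sigma_t(q,x),q,x,u\bigr)$, fixed boundary data, and neither terminal cost nor state inequality constraints. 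The reduced maps are smooth and Assumption~\ref{ass:asm} is inherited (in particular $\sigma_t\in\e(\mathcal{O})$), so this is exactly the setting of Problem~2.

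Next I would apply Corollary~\ref{thm:c2DMP} to the reduced problem, obtaining an adjoint trajectory $\{(\zeta^t,\xi^t)\}_{t=0}^{N-1}$ and a scalar $\nu\in\{-1,0\}$ satisfying the conditions of that corollary, stated through the reduced Hamiltonian $\widetilde H_\tau(\zeta,\xi,q,x,u)\defas\nu\tilde c_\tau(q,x,u)+\langle\zeta,\e^{-1}(\sigma_\tau(q,x))\rangle_{\liea}+\langle\xi,\tilde f_\tau(q,x,u)\rangle$. The key observation is that, evaluated at $s=\sigma_\tau(q,x)$, this $\widetilde H_\tau$ coincides with the Hamiltonian $H$ of \eqref{eq:sham}, because $\zeta$ enters $H$ only through $\e^{-1}(s)$; hence all conclusions of Corollary~\ref{thm:c2DMP} may be read off in terms of $H$ restricted to the constraint surface.

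It then remains to translate the conditions back to the original variables by the chain rule. Since $\zeta$, $\xi$, and $u$ do not act through $\sigma_t$, one has $\mathcal{D}_\zeta\widetilde H=\mathcal{D}_\zeta H$, $\mathcal{D}_\xi\widetilde H=\mathcal{D}_\xi H$, and $\mathcal{D}_u\widetilde H=\mathcal{D}_u H$ at $s=\sigma_t(q,x)$; in particular $\e^{\mathcal{D}_\zeta\widetilde H}=\sigma_t(\op{q}_t,\op{x}_t)=\op{s}_t$, which recovers simultaneously the state update $\op{q}_{t+1}=\op{q}_t\op{s}_t$ and its defining relation $0=v_t(\op{s}_t,\op{q}_t,\op{x}_t)$, while the non-positive gradient condition in $u$ transfers verbatim. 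Differentiating the identity $v_t(\sigma_t(q,x),q,x)=0$ gives $\mathcal{D}_q\sigma_t=-\mathcal{D}_s v_t^{-1}\circ\mathcal{D}_q v_t$ and $\mathcal{D}_x\sigma_t=-\mathcal{D}_s v_t^{-1}\circ\mathcal{D}_x v_t$, whence $\mathcal{D}_q\widetilde H=\mathcal{D}_q H-\mathcal{D}_s H\circ\mathcal{D}_s v_t^{-1}\circ\mathcal{D}_q v_t$ together with the analogous identity for $\mathcal{D}_x\widetilde H$. Inserting these into the adjoint recursions supplied by Corollary~\ref{thm:c2DMP}, and pulling the correction term back to $\liea^*$ through the trivialization $\triv{\op{q}_t}$, reproduces verbatim the adjoint equations claimed here; the non-triviality condition passes over unchanged.

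The one genuinely delicate point I anticipate is the well-posedness and smoothness of $\sigma_t$ via the implicit function theorem---this is precisely where the hypothesis that $v_t(\cdot,q,x)$ be a diffeomorphism, equivalently that $\mathcal{D}_s v_t$ be invertible along every admissible trajectory, becomes indispensable---together with the careful bookkeeping of the cotangent trivialization as the $T^*_{\op{q}_t}\lieg$-valued correction $\mathcal{D}_s H\circ\mathcal{D}_s v_t^{-1}\circ\mathcal{D}_q v_t$ is transported to $\liea^*$. Once these are handled, the remainder is a routine application of Corollary~\ref{thm:c2DMP} and the chain rule.
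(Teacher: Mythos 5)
Your proposal is correct and follows essentially the same route as the paper: eliminate \(s_t\) via the implicit function theorem (the paper's Lemma \ref{lemma:implicit}, giving \(s_t=\kappa_t(q_t,x_t)\) with \(\mathcal{D}_q\kappa_t=-\mathcal{D}_s v_t^{-1}\circ\mathcal{D}_q v_t\) and \(\mathcal{D}_x\kappa_t=-\mathcal{D}_s v_t^{-1}\circ\mathcal{D}_x v_t\)), recast the problem as an instance of Problem~2, apply Corollary \ref{thm:c2DMP}, and push the conditions back through the chain rule. The one step you elide is that the implicit function theorem is only local, so the paper first restricts the controls to \(\tilde{U}_t=\cset\cap\mathcal{B}_{r}(\op{u}_t)\) to keep trajectories inside the neighborhoods where \(\kappa_t\) is defined, and then recovers the Hamiltonian non-positive gradient condition on all of \(\cset\) by a convexity and rescaling argument rather than having it transfer verbatim.
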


\section{Proof of the Maximum Principle (Theorem \ref{thm:DMP})} \label{sec:proofDMP}

\textbf{Sketch of proof:} We present our proof through the following steps:
\begin{enumerate}[label=Step (\Roman*), leftmargin=*, widest=b, align=left]
\item We prove the existence of a local parametrization of the Lie group \m{\lieg} and define the optimal control problem \eqref{eq:sopt} in local coordinates. \label{step:1}
\item First order necessary conditions for the optimal control problem defined in local coordinates are derived using the method of tents \cite{boltyanski}. \label{step:2}
\item The first order necessary conditions derived in \ref{step:2} are represented in configuration space variables. \label{step:3}
\item We prove that the first order necessary conditions derived in \ref{step:3} are independent of the choice of the coordinate system. \label{step:4}
\end{enumerate}   

Henceforth \m{\left(\left\{\left(\op{q}_t,\op{x}_t\right)\right\}_{t=0}^{N},\left\{\op{u}_t\right\}_{t=0}^{N-1} \right)} denotes an optimal state-action trajectory.

\subsection{\ref{step:1}. Local parametrization of the Lie group \texorpdfstring{\m{\lieg}}{TEXT}:} 
	Let us define the following local parametrization of the Lie group \m{\lieg} induced by the exponential map. 
    \begin{fact}
    If \ref{asm:2} holds, then for \m{\mathcal{Q}_t \defas \left\{\Phi_{q_t} (s) \,|\, s \in \e(\mathcal{O}) \right\}} and \m{\Phi_{q_t}(s) \defas q_t s} for all \m{s \in \lieg,} the map \m{\phi_{q_t} \defas \left(\Phi_{q_t} \circ\e\right)^{-1} : \mathcal{Q}_t \rightarrow \mathcal{O} \subset \liea} provides a unique representation of \m{q_{t+1} \in \lieg} on the Lie algebra \m{\liea} for a given \m{q_t \in \lieg} for all \m{t=0, \ldots, N-1}.    
    \end{fact}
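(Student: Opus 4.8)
The plan is to exhibit \m{\phi_{q_t}} as the inverse of a composition of two diffeomorphisms, and then to read off both the well-definedness of \m{\phi_{q_t}(q_{t+1})} and the uniqueness of the representation directly from the bijectivity of that composition. The whole statement is an assembly of standard facts, so the work is in checking that the domains line up correctly rather than in any hard estimate.

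First I would observe that the left translation \m{\Phi_{q_t} : \lieg \rightarrow \lieg,\ s \mapsto q_t s} is a diffeomorphism of the Lie group onto itself: group multiplication is smooth, and \m{\Phi_{q_t}} admits the smooth two-sided inverse \m{\Phi_{q_t^{-1}}}, since \m{q_t^{-1}(q_t s) = s} and \m{q_t(q_t^{-1} s) = s}. For a matrix Lie group this is especially transparent, as \m{s \mapsto q_t s} is the restriction to \m{\lieg} of a linear isomorphism of the ambient matrix space. Consequently \m{\Phi_{q_t}} restricts to a diffeomorphism from \m{\e(\mathcal{O})} onto its image \m{\mathcal{Q}_t = \Phi_{q_t}(\e(\mathcal{O}))}.

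Next, by \ref{asm:2} the exponential map \m{\e : \mathcal{O} \rightarrow \e(\mathcal{O})} is a diffeomorphism, so the composite \m{\Phi_{q_t}\circ\e : \mathcal{O} \rightarrow \mathcal{Q}_t} is again a diffeomorphism, and therefore its inverse \m{\phi_{q_t} \defas (\Phi_{q_t}\circ\e)^{-1} : \mathcal{Q}_t \rightarrow \mathcal{O}} is a well-defined diffeomorphism. In particular \m{\phi_{q_t}} is a single-valued bijection, so every element of \m{\mathcal{Q}_t} has exactly one preimage in \m{\mathcal{O}}, which is precisely the uniqueness assertion. It then remains only to confirm that \m{q_{t+1}} lands in the domain \m{\mathcal{Q}_t}: from the dynamics \m{q_{t+1} = q_t s_t(q_t,x_t) = \Phi_{q_t}(s_t(q_t,x_t))}, and since \m{s_t(q_t,x_t) \in \e(\mathcal{O})} by \ref{asm:2}, we obtain \m{q_{t+1} \in \Phi_{q_t}(\e(\mathcal{O})) = \mathcal{Q}_t}. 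Hence \m{\phi_{q_t}(q_{t+1}) \in \mathcal{O}} is well-defined and equals the unique \m{a \in \mathcal{O}} with \m{q_t\e(a) = q_{t+1}}, i.e. \m{a = \e^{-1}(s_t(q_t,x_t))}, for each \m{t = 0,\ldots,N-1}.

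Since the result is merely a composition of two standard diffeomorphisms, I do not anticipate a genuine obstacle; the one point that must be handled with care is the matching of domains, namely ensuring that \m{q_{t+1}} actually falls inside \m{\mathcal{Q}_t} so that \m{\phi_{q_t}(q_{t+1})} is meaningful. This is exactly where part (b) of \ref{asm:2} is indispensable: without the guarantee \m{s_t(q_t,x_t) \in \e(\mathcal{O})}, the quantity \m{\e^{-1}(s_t(q_t,x_t))} need not be defined, and the subsequent transfer of the optimal control problem to the flat chart \m{\mathcal{O} \subset \liea} would collapse.
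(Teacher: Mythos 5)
Your proof is correct and is precisely the reasoning the paper leaves implicit: it states this as a \emph{Fact} without proof, taking for granted that \m{\Phi_{q_t}\circ\e} is a diffeomorphism of \m{\mathcal{O}} onto \m{\mathcal{Q}_t} (left translation composed with the restricted exponential) and that part (b) of \ref{asm:2} places \m{q_{t+1}=q_t s_t(q_t,x_t)} inside \m{\mathcal{Q}_t}. Your identification of \m{\phi_{q_t}(q_{t+1})=\e^{-1}\left(s_t(q_t,x_t)\right)} as the unique representative is exactly what the paper uses afterwards in \eqref{eq:soptlcl}.
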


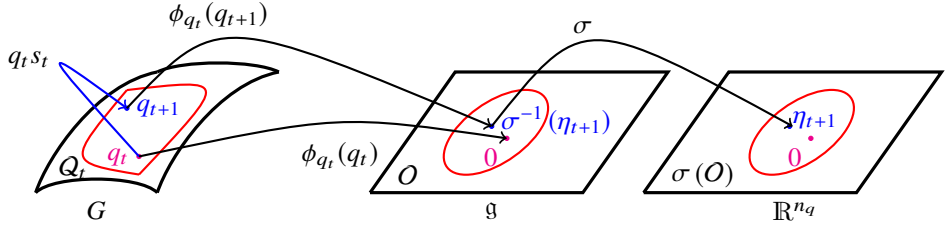
\begin{figure}[H] 
\begin{tikzpicture}[scale=0.8]
\filldraw [magenta] (5.75,0.9) circle (1pt) node[below left]{$0$};
\filldraw [blue] (5.5,1.1) circle(1pt);
\filldraw [blue](5.5,1.2)node[right]{$\vechm^{-1} \left(\eta_{t+1}\right)$};
\draw (3.8,0.3) node[right]{$\mathcal{O}$};
\draw (3,1) node [below]{$\phi_{q_t} (q_t)$};
\draw (1,2.55) node [above]{$\phi_{q_t} (q_{t+1})$};
\draw (5.5,0) node[below]{$\liea$};
\draw[very thick] (-2,0) .. controls (-1.8,0.6) and (-1,1.8) .. (0,2);
\draw[very thick] (-2,0) .. controls (-1.7,0.3) and (-0.3,0.3) .. (0,0);
\draw[very thick] (0,0) .. controls (0.2,0.6) and (1,1.8) .. (2,2);
\draw[very thick] (0,2) .. controls (0.4,2.2) and (1.6,2.2) .. (2,2);
\draw (-1.38,0.77) node[below]{$\mathcal{Q}_t$};
\draw (-1,0) node[below]{$\lieg$};
\draw[thick, red] (-0.5,1.7) .. controls (1.2,1.9) and (1.2,1.9) .. (-0.3,0.3);
\draw[thick, red] (-0.5,1.7) .. controls (-1.5,0.5) and (-1.5,0.5) .. (-0.3,0.3);
\filldraw [magenta]  (-0.3,0.6) circle (1pt) node[left]{$q_t$};
\filldraw [blue]  (-0.5,1.4) circle (1pt) node [right]{$q_{t+1}$};
\draw (-1.6,2.2) node[left]{$q_t s_t$};
\draw[blue,thick, ->](-0.3,0.6) .. controls (-2,2.5) and (-2,2.5) .. (-0.5,1.4);
\filldraw [magenta]  (10.75,0.9) circle (1pt) node[below left]{$0$};
\filldraw [blue] (10.4,1.1) circle(1pt); 
\filldraw [blue](10.3,1.2)node[right]{$\eta_{t+1}$};
\draw (8.3,0.3) node[right]{$\vechm\left(\mathcal{O}\right)$};
\draw (10.5,0) node[below]{$\R^{\noq}$};
\draw (7,3) node [below]{$\vechm$};
\pgftransformxslant{0.7}
\draw [red, thick]  (9.9,1) circle (20pt);
\draw[very thick] (3.5,0) -- (7,0)--(7,2)--(3.5,2)--(3.5,0);
\draw[very thick] (8,0) -- (11.5,0)--(11.5,2)--(8,2)--(8,0);
\draw [red, thick]  (4.87,1) circle (20pt);
\pgftransformxslant{-0.7}
\draw [thick, ->] (-0.3,0.6) .. controls (3,1.3) and (3,1.3) .. (5.75,0.9);
\draw [thick, ->] (-0.5,1.4) .. controls (1,3) and (1,3) .. (5.5,1.1);
\draw [thick, ->](5.5,1.1) .. controls (7,3) and (7,3) .. (10.45,1.1);
\end{tikzpicture}
\caption{Local parametrization of $q_{t+1}$ given $q_t$.}
\label{fig:para}
\end{figure}
 	With the help of this local parametrization, we define the dynamics evolving on the Lie group \m{\lieg} in local coordinates. The Lie algebra \m{\liea} of the matrix Lie group \m{\lieg} is a finite dimensional vector space \cite[Theorem 8.37]{lee2013introduction}. Therefore, there exists a linear homeomorphism 
\[ 
\vechm: \liea \rightarrow  \R^{\noq},
\]
 where \m{\noq} is the dimension of the Lie algebra.
To compress the notation we define
\[
\vx \defas \left(x_0,\ldots, x_N\right) \in \R^{(N+1)\nox},\quad \vu \defas \left(u_0,\ldots, u_{N-1}\right) \in \R^{N \nou},
\] 
\[
\text{and} \quad \vq \defas \left(q_0,\ldots,q_N \right) \in \underbrace{\lieg \times \cdots \times \lieg}_{\left(N+1\right) \text{\;factors}}.
\]

	Let us define the product manifold
\[ 
 \mathcal{M} \defas \underbrace{\lieg  \cdots  \lieg}_{\left(N+1\right) \text{\;factors}}\times \overbrace{\R^{\nox} \cdots \R^{\nox}}^{\left(N+1\right) \text{\;factors}} \times \underbrace{\R^{\nou} \cdots \R^{\nou}}_{N \text{\; factors}}
\] 
such that the state-action trajectory is a point on \m{\mathcal{M}}, i.e.,
\[
\left(\vq,\vx,\vu\right) = \left(q_0,\ldots,q_N, x_0,\ldots, x_N,u_0,\ldots,u_{N-1}\right) \in \mathcal{M}.
\]

In order to translate the optimal control problem \eqref{eq:sopt} to a Euclidean space, we need to define a diffeomorphism from an open subset of a Euclidean space to an open subset of the product manifold \m{\mathcal{M}} such that the state-action trajectories lie in the image of that diffeomorphism.

Let us define the map 
\begin{align} \label{eq:Diff}
 \Lambda  &\ni \left(\beta_0,\ldots, \beta_N,\vx,\vu\right) \mapsto \Psi \left(\beta_0,\ldots, \beta_N,\vx,\vu\right) 
\defas \left(\psi_0(\vbeta),\ldots, \psi_N(\vbeta), \vx, \vu \right) \in \Psi \left(\Lambda \right) \subset \mathcal{M}, 
\end{align}
where 
\[ 
\Lambda \defas \underbrace{\vechm (\mathcal{O}) \cdots \vechm (\mathcal{O})}_{(N+1) \text{\;factors}}\times \R^{(N+1)\nox} \times \R^{N\nou} \subset \underbrace{\R^{\noq} \cdots \R^{\noq}}_{\left(N+1\right) \text{\; factors}} \times \R^{(N+1)\nox} \times \R^{N\nou},
\] 
\[
\psi_t(\vbeta)\defas \bar{q}_0 \exv{\beta_0}\cdots\exv{\beta_t} \text{\;for\;} t=0,\ldots,N, \text{\;and\;} \bar{q}_0 \text{\;is a fixed element in\;} \lieg. 
\]

Observe that the map \m{\Psi} is a smooth bijection, and the inverse map is given by
\begin{align} \label{eq:InvDiff}
& \Psi\left(\Lambda\right) \ni  \left(\alpha_0,\alpha_1,\ldots, \alpha_N, \vx, \vu\right) \mapsto 
 \Psi^{-1} \left(\alpha_0,\alpha_1,\ldots, \alpha_N,\vx, \vu\right) \\ \nonumber
& = \left(\inexv{\bar{q}^{-1}_0 \alpha_0} ,\inexv{\alpha^{-1}_0 \alpha_1}, \ldots,\inexv{\alpha^{-1}_{N-1}\alpha_N}, \vx, \vu\right) \in \Lambda.
\end{align}
Since the inverse map \m{\Psi^{-1}} is also smooth, \m{\Psi} is a diffeomorphism.

\begin{restatable}{claim}{diffeomorphism}
\label{claim:diffeomorphism}
State-action trajectories corresponding to all admissible control actions and starting at \m{\bar{q}_0} lie in the image of \m{\Psi.}
\end{restatable}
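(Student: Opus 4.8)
The plan is to prove the claim constructively: given an arbitrary admissible state-action trajectory $\left(\{(q_t,x_t)\}_{t=0}^N,\{u_t\}_{t=0}^{N-1}\right)\in\mathcal{A}$ with $q_0=\bar{q}_0$, I will exhibit an explicit point of $\Lambda$ whose image under $\Psi$ is precisely this trajectory, viewed as a point of $\mathcal{M}$. Since the $\vx$- and $\vu$-components of $\Psi$ are the identity, and $\Lambda$ imposes no constraint on them beyond membership in $\R^{(N+1)\nox}\times\R^{N\nou}$, the state-inequality and control-action constraints built into $\mathcal{A}$ are irrelevant here; the entire content of the claim concerns the group components $q_0,\ldots,q_N$.

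First I would define the candidate preimage $\vbeta=(\beta_0,\ldots,\beta_N)$ by setting $\beta_0\defas 0$ and $\beta_t\defas\inexv{s_{t-1}(q_{t-1},x_{t-1})}$ for $t=1,\ldots,N$; equivalently $\beta_t=\inexv{q_{t-1}^{-1}q_t}$ via the kinematic recursion $q_t=q_{t-1}s_{t-1}(q_{t-1},x_{t-1})$ in \eqref{eq:sys}. The crux is to verify that $\beta_t\in\vechm(\mathcal{O})$ for every $t$, so that $(\vbeta,\vx,\vu)\in\Lambda$. For $t\geq 1$ this is exactly where Assumption \ref{asm:2}(b) enters: the integration step $s_{t-1}(q_{t-1},x_{t-1})$ lies in $\e(\mathcal{O})$, hence $\e^{-1}$ is well defined on it and returns a point of $\mathcal{O}$, and applying the linear homeomorphism $\vechm$ lands in $\vechm(\mathcal{O})$. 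For the initial index, $q_0=\bar{q}_0$ forces $\bar{q}_0^{-1}q_0=e=\e(0)$, so $\beta_0=0\in\vechm(\mathcal{O})$, using that the neighbourhood $\mathcal{O}$ on which $\e$ is a diffeomorphism contains the origin.

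It then remains to check that $\Psi(\vbeta,\vx,\vu)$ reproduces the trajectory, i.e.\ $\psi_t(\vbeta)=q_t$ for all $t$. I would argue by induction on $t$, equivalently by telescoping the product defining $\psi_t$: since $\exv{\beta_0}=\e(0)=e$ and $\exv{\beta_t}=\e(\e^{-1}(s_{t-1}(q_{t-1},x_{t-1})))=s_{t-1}(q_{t-1},x_{t-1})$ for $t\geq 1$ — here using $\vechm^{-1}\circ\vechm=\mathrm{id}$ and that $\e\circ\e^{-1}$ is the identity on $\e(\mathcal{O})$ — the defining product collapses to $\psi_t(\vbeta)=\bar{q}_0\,s_0(q_0,x_0)\cdots s_{t-1}(q_{t-1},x_{t-1})$, which equals $q_t$ by repeated application of the recursion. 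Hence the trajectory equals $\Psi(\vbeta,\vx,\vu)$ with $(\vbeta,\vx,\vu)\in\Lambda$, proving that it lies in $\Psi(\Lambda)$.

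The argument is essentially bookkeeping, and I do not anticipate a genuine obstacle; the only point demanding care is the membership $\beta_t\in\vechm(\mathcal{O})$, which rests entirely on Assumption \ref{asm:2} — part (b) handles the $N$ genuine dynamical increments, while part (a) together with $0\in\mathcal{O}$ handles the initial parameter $\beta_0$. It is worth flagging that the parametrization carries one coordinate $\beta_0$ beyond the $N$ dynamical steps; the fixed initial condition $q_0=\bar{q}_0$ pins it to $0$, and this is precisely why the claim is restricted to trajectories starting at $\bar{q}_0$.
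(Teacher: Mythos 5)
Your proposal is correct and follows essentially the same route as the paper's own proof: both construct the preimage explicitly by setting the initial Lie-algebra coordinate to $0$ and taking $\eta_{t+1}=(\vechm\circ\e^{-1}\circ s_t)(q_t,x_t)$, with membership in $\vechm(\mathcal{O})$ guaranteed by Assumption \ref{asm:2}. You are somewhat more explicit than the paper about the telescoping verification $\psi_t(\vbeta)=q_t$ and about why $\beta_0=0$ lies in $\vechm(\mathcal{O})$, but these are elaborations of the same argument, not a different one.
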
 
\begin{proof}
See Appendix \ref{app:diffeomorphism}.
\end{proof}
	We now employ the diffeomorphism \m{\Psi} to translate the optimal control problem \eqref{eq:sopt} to the open subset \m{\Lambda} as: for \m{\veta \defas \left(\eta_0,\ldots,\eta_N \right) \in \left(\R^{\noq}\right)^{N+1},}
\begin{equation}
\label{eq:soptlcl}
\begin{aligned}
\minimize_{\left\{u_t\right\}_{t=0}^{N-1}} &&& \tilde{\mathscr{J}} \left(\veta,\vx,\vu\right) = \sum_{t=0}^{N-1} c_t \left(\psi_t\left(\veta\right),x_t,u_t\right) + c_N\left(\psi_N\left(\veta\right),x_N\right) \\
\text{subject to} &&&
\begin{cases} 
\psi_t\left(\veta\right) \defas \bar{q}_0 \exv{\eta_0} \cdots \exv{\eta_t} \quad \text{for}\quad t = 0, \ldots,N, \\
\begin{cases}
\eta_{t+1} = \left(\vechm \circ \e^{-1} \circ s_t\right) \left(\psi_t\left(\veta\right),x_t\right)\\
x_{t+1} = f_t \left(\psi_t\left(\veta\right),x_t,u_t\right) \\ 
 u_t \in \cset \\
\end{cases} \text{for\;} t=0,\ldots,N-1, \\
g_t \left(\psi_t\left(\veta\right),x_t\right) \leq 0 \quad \text{for\;} t=1,\ldots,N,\\
\left(\eta_0,x_0\right)=\left(0, \bar{x}_0\right).
 \end{cases}
\end{aligned} 
\end{equation}

\subsection{\ref{step:2}. Necessary optimality conditions in local coordinates} \label{sec:optlcl}
In \ref{step:1} we have distilled the optimal control problem \eqref{eq:soptlcl} from \eqref{eq:sopt} such that \eqref{eq:soptlcl} is defined on a Euclidean space. We apply first order necessary conditions for optimality for constrained optimal control problems on Euclidean spaces derived via method of tents in \cite{tent} to \eqref{eq:soptlcl}.
\begin{remark}\label{rem:memory}
Note that even though \eqref{eq:soptlcl} has been distilled from \eqref{eq:sopt} and is a well-defined problem on a Euclidean space, the standard discrete-time PMP does not apply to \eqref{eq:soptlcl} on account of the fact that the first constraint clearly shows that the ``system dynamics'' has memory. In other words the system in \eqref{eq:soptlcl} is in a non-standard form. It turns out that first order necessary conditions do lead to a PMP for \eqref{eq:soptlcl} once we lift back the necessary conditions to the configuration space. 
\end{remark}

We convert the optimal control problem \eqref{eq:soptlcl} into a relative extremum problem in a higher-dimensional Euclidean space.

   Let    
    \begin{align} \label{eq:z}
 	\vz \defas \left( \veta,\vx,\vu\right) = \left(\eta_0^\top, \ldots, \eta_N^\top, x_0^\top, \ldots, x_N^\top,u_0^\top, \ldots, u_{N-1}^\top \right)^\top
    \end{align}
    be the stacked vector of states and controls corresponding to \eqref{eq:soptlcl}; clearly,
 \begin{align*}
 \vz \in \R^m \text{\;\; where \;\;} m = (\nox+\noq) (N+1) + \nou N.
 \end{align*}
 Let us define the admissible action set \m{\cset} in terms of \m{\vz \in \R^{m}} as
 \begin{align*}
 \Omega^t_u \defas \bigl(\R^{\noq}\bigr)^{N+1} \times \bigl(\R^{\nox}\bigr)^{N+1} \times \underbrace{\R^{\nou} \cdots \nspc \overbrace{\cset}^{(t+1) \text{th factor}} \nspc \cdots \R^{\nou}}_{N \text{\;factors}}   \subset \R^m
\end{align*}
for \m{t \in [N-1],} and the set of initial conditions as
\begin{align*}
\Omega_{B} \defas \{0\} \times \bigl(\R^{\noq}\bigr)^{N} \times \{\bar{x}_0\} \times \bigl(\R^{\nox}\bigr)^{N} \times \bigl(\R^{\nou}\bigr)^{N} \subset \R^m.
\end{align*}

 	We define the cost function, the dynamics, and the state constraints in terms of \m{\vz} as: 
 \begin{align*} 
 &C(\vz) \defas \sum_{t=0}^{N-1} c_t \left(\psi_t\left(\veta\right),x_t,u_t\right) + c_N\left(\psi_N\left(\veta\right),x_N\right),\\
 &S_t(\vz) \defas -\eta_{t+1} + \left(\vechm \circ \e^{-1}\right)  \circ s_t\left(\psi_t\left(\veta\right),x_t\right),\\
 &F_t(\vz) \defas -x_{t+1} + f_t \left(\psi_t\left(\veta\right),x_t,u_t\right), \\
 &G^j_t(\vz) \defas g_t^j\left(\psi_t\left(\veta\right),x_t\right) \quad \text{for}\quad j=1,\ldots,\nog{t}.\\
 \end{align*} 
The optimal control problem \eqref{eq:soptlcl} can now be defined in the augmented space \m{\R^m} as
\begin{equation}
\label{staticopt}
\begin{aligned}
\minimize_{\vz }&&& C(\vz) \\	
\text{subject to} &&&
\begin{cases}
S^j_t(\vz) = 0 \quad \text{for\;} t \in [N-1], \; j=1,\ldots,\noq,  \\
 F^j_t(\vz) = 0 \quad \text{for\;} t \in [N-1], \; j=1,\ldots,\nox, \\
G^j_t (\vz) \leq 0 \quad \text{for\;} t=1,\ldots,N, \; j=1,\ldots,\nog{t},\\
\vz \in \left(\overset{N-1}{\underset{t=0}{\bigcap}}\Omega^t_u \right) \cap \Omega_B.
 \end{cases}
\end{aligned} 
\end{equation}
 Let us define 
\[\op{\vz} \defas (\op{\veta}, \op{\vx},\op{\vu}) = \Psi(\op{\vq}, \op{\vx},\op{\vu}) \in \R^m,\]
where \m{\left(\op{\vq}, \op{\vx},\op{\vu}\right) \in \mathcal{M}} is an optimal state-action trajectory. Then the point \m{\op{\vz}} is a solution of the optimization problem \eqref{staticopt}. The necessary conditions for optimality for an optimization problem are defined in terms of dual cones. Before deriving the necessary conditions for optimality for the optimization problem \eqref{staticopt}, we provide a quick revision to cones and their dual cones.
 
	Recall that the dual cone \m{K^{+}} of a cone \footnote{A set \m{K \subset R^m} is a cone if for every \m{\vz \in K, \alpha \vz \in K } for all \m{\alpha \geq 0}.} \m{K} in \m{\R^m} is a convex cone such that every vector in \m{K^{+}} makes an acute angle with every element of \m{K}, i.e., 
\[
K^{+} \defas \left\{\rho \in \R^m | \ip{\rho}{\vz} \geq 0 \text{\; for all\;} \vz \in K \right\}.
\]
A set \m{K \left(\op{\vz} \right)} in \m{\R^m} is a cone with apex at \m{\op{\vz}} if for every \m{\vz \in K \left(\op{\vz} \right), \op{\vz} + \alpha \left(\vz-\op{\vz} \right) \in K \left(\op{\vz} \right) } for all \m{\alpha \geq 0.} Subsequently, the dual cone \m{K^{+}\left(\op{\vz}\right)} for the cone \m{K \left(\op{\vz} \right)} is defined as \cite[p.\ 8]{tent}
\begin{align}\label{eq:dualcone}
    K^{+} \left(\op{\vz}\right) \defas \left\{a \in \R^m \left|\ip{a}{\vz-\op{\vz}} \geq 0, \vz \in K \left(\op{\vz}\right) \right.\right\}.
\end{align}
Clearly, \m{K^{+} \left(\op{\vz}\right)} is a closed convex cone. Equipped with the notions of dual cones, we state a theorem of Boltyanskii that provides necessary conditions for optimality for an optimization problem. This theorem is a key result for deriving necessary conditions for optimality for the optimization problem \eqref{staticopt}.
\begin{theorem}[{{\cite[Theorem 18, p.\ 22]{tent}}}]
 \label{thm:Bolt}
 Let \m{\Omega_1, \ldots,\Omega_s}  be convex subsets of \m{\R^m}, and let \m{C} be a real-valued smooth function whose domain contains the set
 \[ \Sigma = \Omega_1 \cap \ldots \cap \Omega_s\cap \Omega^{*},\]
where 
\begin{align*}
\Omega^{*} \defas \big\{ \vz\in \R^m \;| \; &F_1(\vz)=0, \ldots, F_{r_{\text{eq}}} (\vz)=0, f_1(\vz) \leq 0, \ldots,f_{r_{\text{iq}}}(\vz) \leq 0\big\},
\end{align*}

and \m{F_i,f_j} are real valued smooth functions for all \m{i,j.} Let \m{\op{\vz} \in \Sigma} and let \m{K_i\left(\op{\vz}\right)} be the support cone\footnote{The support cone \m{K\left(\op{\vz}\right)} of a convex set \m{\Omega \subset \R^m}  with apex at \m{\op{\vz} \in \Omega } is defined as 
\[
K_{\Omega}\left(\op{\vz}\right) \defas \text{closure}\Bigl(\underset{\alpha>0}{\bigcup} \left\{\op{\vz} + \alpha \left(\vz - \op{\vz} \right) \;|\; \vz \in \Omega \right\} \Bigr).
\] For more details see \cite[p.\ 29]{boltyanski}.} of \m{\Omega_i} with apex at \m{\op{\vz}}. If \m{C} attains its minimum at \m{\op{\vz}} relative to \m{\Sigma}, then there exist scalars \m{ \nu,\eta^1_F, \ldots, \eta^{r_{\text{eq}}}_F, \eta^1_f, \ldots, \eta^{r_{\text{iq}}}_f,} and vectors \m{a_j \in K^{+}_j\left(\op{\vz}\right) } 
for \m{j=1,\ldots,s,} satisfying the following conditions:
	\begin{enumerate}[label={\rm (\roman*)}, leftmargin=*, widest=b, align=left]
 \item \m{\nu \leq 0,} and if \m{\nu =\eta^1_F = \cdots =\eta^{r_{\text{eq}}}_F=\eta^1_f=\cdots= \eta^{r_{\text{iq}}}_f= 0}, then at least one of the vectors \m{a_1,\ldots, a_s} is not zero;
 \item \m{\nu \mathcal{D}_{\vz} C(\op{\vz}) + \sum_{i=1}^{r_{\text{eq}}} \eta^i_F \mathcal{D}_{\vz} F_i (\op{\vz}) + \sum_{j=1}^{r_{\text{iq}}} \eta^j_f \mathcal{D}_{\vz} f_j(\op{\vz}) + a_1 + \ldots + a_s = 0;} 
\item for every \m{j=1,\ldots,r_{\text{iq}},} we have
\[
\eta^j_f \leq 0, \quad \text{and} \quad \eta_f^j f_j(\op{\vz}) = 0.
\]
 \end{enumerate}
 \end{theorem}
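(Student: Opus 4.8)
The statement is a Fritz--John--type multiplier rule, and the natural route is Boltyanskii's \emph{method of tents}, which near the candidate optimum \m{\op{\vz}} replaces each set entering \m{\Sigma} by a convex cone (a \emph{tent}) that approximates it to first order, and then separates these cones. The plan is therefore to first attach to every relevant set a tent with apex at \m{\op{\vz}}: for each convex \m{\Omega_j} the support cone \m{K_j(\op{\vz})} is itself a tent (this is immediate from convexity); for the smooth constraint set \m{\Omega^*} a tent is the linearized cone \m{\{\vz : \mathcal{D} F_i(\op{\vz})(\vz-\op{\vz})=0,\ \mathcal{D} f_j(\op{\vz})(\vz-\op{\vz})\le 0\ \text{for active } j\}}; and for the strict sublevel set \m{\{C<C(\op{\vz})\}} a local tent is the open descent half-space \m{\{\vz : \mathcal{D} C(\op{\vz})(\vz-\op{\vz})<0\}}. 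Verifying that these are genuine tents is a first-order Taylor estimate in the convex and the descent cases, and an appeal to the inverse/implicit function theorem to manufacture the required approximating map in the smooth-constraint case.

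The crux is the separation step. I would argue that optimality of \m{\op{\vz}} forces the tents to be \emph{separable}, i.e. the descent tent and the constraint tents cannot all be in general position with a common interior direction. Indeed, a common interior ray would let one follow that direction and, using the approximating maps that define the individual tents, assemble an admissible point arbitrarily close to \m{\op{\vz}} with strictly smaller cost, contradicting minimality. This is precisely Boltyanskii's topological lemma, and its proof is the genuine obstacle: it rests on a Brouwer fixed-point / no-retraction argument, in which the hypothetical common ray is used to build a continuous map of a small simplex around \m{\op{\vz}} into the feasible set that would be forced to avoid \m{\op{\vz}}, an impossibility by a degree computation. Everything else in the proof is first-order calculus and convex duality; it is exactly this topological core that distinguishes the discrete-time theory from naive variational reasoning, since an extremal point need not admit a feasible neighborhood of admissible directions.

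With inseparability excluded, the conclusion is convex-analytic. Separability supplies a nonzero linear functional that is nonnegative on the descent and constraint cones; equating it, via the identity that the dual of an intersection of cones is the (closure of the) sum of the dual cones, gives a decomposition into the dual cones \m{K_j^+(\op{\vz})} and the duals of the linearized constraint cones. The dual of each support cone yields the vectors \m{a_j \in K_j^+(\op{\vz})}; the dual of the linearized equality cone yields the free-sign scalars \m{\eta_F^i}; the dual of the active-inequality cone yields the signed \m{\eta_f^j} and, because only active inequalities enter the tent, the complementary slackness \m{\eta_f^j f_j(\op{\vz})=0}; and the descent half-space contributes the \m{\nu \mathcal{D}_{\vz} C(\op{\vz})} term with sign \m{\nu\le 0}. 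Their vanishing sum is exactly assertion (ii), and nontriviality (i) is just the statement that the separating functional was nonzero. Finally, I would note that the degenerate case in which the constraint gradients are linearly dependent is absorbed automatically by permitting \m{\nu=0}; this is what makes the result a Fritz--John rather than a KKT rule and removes any need for a constraint qualification.
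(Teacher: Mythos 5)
The paper does not prove this statement at all: it is Boltyanskii's multiplier rule, quoted verbatim from \cite{tent} and used as a black box in Step~(II) of the proof of Theorem~\ref{thm:DMP}. There is therefore no in-paper argument to compare yours against; what can be said is that your sketch is a faithful reconstruction of the proof in the original source. You correctly identify the three ingredients: (a) a tent for each set entering \(\Sigma\) together with the descent set \(\{C < C(\op{\vz})\}\) (support cones for the convex \(\Omega_j\), linearizations for the smooth constraints, the descent half-space for the cost); (b) the topological core --- if the family of tents were inseparable one could manufacture feasible points arbitrarily near \(\op{\vz}\) with strictly smaller cost, and the proof of that implication is indeed a Brouwer/no-retraction argument, which is exactly the part that cannot be replaced by naive variational reasoning; and (c) convex duality converting separability into vectors \(a_j \in K_j^{+}(\op{\vz})\) and covectors in the duals of the linearized cones summing to zero, with complementary slackness falling out because an inactive \(\{f_j \le 0\}\) admits the whole space as a tent and hence contributes only the zero covector. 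One imprecision is worth fixing: you tent \(\Omega^{*}\) as a single set by its full linearized cone. That cone is a tent of \(\Omega^{*}\) only under a constraint qualification of Mangasarian--Fromovitz type, which would silently turn the statement into a KKT rule. Boltyanskii instead tents each set \(\{F_i=0\}\) and \(\{f_j\le 0\}\) separately --- a hyperplane or a half-space, each individually a tent by the implicit function theorem whenever the relevant gradient is nonzero --- and disposes of vanishing gradients by exhibiting a nontrivial multiplier directly. It is this per-constraint tenting, and not the allowance \(\nu=0\) alone, that keeps the theorem free of constraint qualifications.
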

 To apply Theorem \eqref{thm:Bolt} to the optimization problem \eqref{staticopt}, we define support cones of the convex sets \m{\Omega^0_u,\ldots, \Omega^{N-1}_u, \Omega_B} at \m{\op{\vz}.} Let us derive the support cone \m{K^t_u (\op{\vz})} of the set \m{\Omega^t_{u}} with apex at \m{\op{\vz} \in \Omega^t_{u}} for \m{t \in [N-1]} as 
\begin{align*}
 K^t_u (\op{\vz}) & \defas \text{closure}\Bigl(\underset{\alpha>0}{\bigcup} \left\{\op{\vz} + \alpha \left(\vz - \op{\vz} \right) \;|\; \vz \in \Omega^t_u \right\} \Bigr) \\
& =  \text{closure}\Bigl(\bigl(\R^{\noq}\bigr)^{N+1} \times \bigl(\R^{\nox}\bigr)^{N+1} \times \underbrace{\R^{\nou} \cdots \nspc \overbrace{\mathcal{U}_t }^{(t+1) \text{th factor}} \nspc \cdots \R^{\nou}}_{N \text{\;factors}} \Bigr),
\end{align*}
where \m{\mathcal{U}_t \defas \bigl\{\underset{\alpha>0}{\bigcup} \left\{\op{u}_t + \alpha \left(u_t - \op{u}_t \right) \;|\; u_t \in \cset \right\} \bigr\}} is the collection of rays emanating from \m{\op{u}_t} and passing through \m{u_t \in \cset.} Consequently, the support cone \m{K^t_u (\op{\vz})} is defined by
\begin{align*}
K^t_u (\op{\vz}) = \bigl(\R^{\noq}\bigr)^{N+1} \times \bigl(\R^{\nox}\bigr)^{N+1} \times \underbrace{\R^{\nou} \cdots \nspc\;\, \overbrace{Q_u^t\left(\op{u}_t\right)}^{(t+1) \text{th factor}} \nspc\;\, \cdots \R^{\nou}}_{N \text{\;factors}},
\end{align*}
where 
\[ Q^t_u \left(\op{u}_t\right) \defas \text{closure} \Bigl(\underset{\alpha>0}{\bigcup} \left\{\op{u}_t + \alpha \left(u_t - \op{u}_t \right) \;|\; u_t \in \cset \right\} \Bigr) \] is the support cone of the convex set \m{\cset} with apex at \m{\op{u}_t}. Note that the set \m{\Omega_B} is an affine subspace passing through \m{\op{\vz}}, and therefore the support cone \m{K_{B}\left(\op{\vz}\right)} of the set \m{\Omega_B} with apex at \m{\op{\vz}} is the set \m{\Omega_B} itself, i.e.,
\begin{align}\label{eq:coneb}
K_{B}\left(\op{\vz}\right)\defas \Omega_B.
\end{align}
Subsequently, the dual cone of the support cone  \m{K_{B}\left(\op{\vz}\right)} is given by
\begin{align} \label{eq:conebb}
	K^{+}_{B}\left(\op{\vz}\right) & \defas  \left\{ \rho \in \R^m \; | \; \ip{\rho}{\vz-\op{\vz}} \geq 0 \text{ for all } \vz \in  K_{B}\left(\op{\vz}\right) \right\}.	
\end{align}
We know from \eqref{eq:soptlcl} that \m{\big(\op{\eta}_0,\op{x}_0 \big) = \big(0,\bar{x}_0 \big),} and therefore from \eqref{eq:coneb}, it is easy to conclude that 
\[
K_{B}\left(\op{\vz}\right) = \op{\vz} + K_{B},
\]
where 
\[K_{B} \defas \{0\} \times \bigl(\R^{\noq}\bigr)^{N} \times \{0\} \times \bigl(\R^{\nox}\bigr)^{N} \times \bigl(\R^{\nou}\bigr)^{N}. \]
For \m{v\defas \vz - \op{\vz} \in \R^m}, the dual cone \eqref{eq:conebb} is rewritten as 
\[
K^{+}_{B}\left(\op{\vz}\right) = \left\{ \rho \in \R^m \; | \; \ip{\rho}{v} \geq 0 \text{ for all } v \in  K_{B} \right\}.
\]
Note that \m{K_{B}} is a subspace of \m{\R^m}. Therefore, the dual cone \m{K^{+}_{B}\left(\op{\vz}\right)} is the orthogonal complement of the subspace \m{K_{B}}, i.e.,   
\[
K^{+}_{B}\left(\op{\vz}\right) = \R^{\noq} \times \bigl(\{0\}\bigr)^{N} \times \R^{\nox} \times \bigl(\{0\}\bigr)^{2N}\subset \R^m.
\]
Now we are in the position to apply Theorem \eqref{thm:Bolt} to the optimization problem \eqref{staticopt}. 
	
	By Theorem \ref{thm:Bolt}, if the function \m{C} attains its minimum at the point \m{\op{\vz}} then there exist scalars \m{\mu^{t}_j}  for \m{j=1, \ldots, \nog{t} \text{ and } t=1,\ldots,N,} \m{\lambda^{t}_j} for \m{j=1, \ldots, \noq \text{ and } t \in [N-1],} \m{\xi^{t}_j} for \m{j=1, \ldots, \nox \text{ and } t \in [N-1],} vectors \m{ b \in K^{+}_{B}\left(\op{\vz}\right), a_t \in \left(K^t_u\right)^{+}\left(\op{\vz}\right)} for \m{t \in [N-1],} and a scalar \m{\nu \in \R}  satisfying the following conditions:
 \begin{enumerate}[label=(Opt-\roman*), leftmargin=*, widest=b, align=left]
 \item \label{Opt:nontri} 
 \m{\nu \leq 0,} and if \m{\nu =0, \mu^t_j = 0} for \m{j=1, \ldots, \nog{t}} and \m{t=1,\ldots,N,} \m{\lambda^{t}_j=0} for \m{j=1, \ldots, \noq} and \m{t \in [N-1],} \m{\xi^{t}_j=0} for \m{j=1, \ldots, \nox} and \m{t \in [N-1],} then at least one of the vectors \m{a_1,\ldots, a_{N-1},b} is not zero;
\item \label{Opt:lag}
\m{ \sum_{t=0}^{N-1} \sum_{j=1}^{\noq} \lambda^t_j \mathcal{D}_{\vz} S_t^j (\op{\vz}) + \sum_{t=0}^{N-1} \sum_{j=1}^{\nox} \xi^t_j \mathcal{D}_{\vz} F_t^j (\op{\vz}) +\sum_{t=1}^{N} \sum_{j=1}^{\nog{t}} \mu^t_j \mathcal{D}_{\vz} G_t^j(\op{\vz}) \\ 
+ \nu \mathcal{D}_{\vz} C(\op{\vz}) + a_0+\cdots +a_{N-1} + b =0;}
\item \label{Opt:nonneg}
for every \m{j=1,\ldots,\nog{t}} we have
\[
\mu_j^t \leq 0, \quad \mu_j^t G^j_t(\op{\vz}) = 0 \quad \text{for all\;} t=1,\ldots,N.
\]
 \end{enumerate}

\begin{claim} \label{claim:inseparable}
The family of cones \m{K^0_u (\op{\vz}), \ldots, K^{N-1}_u (\op{\vz}), K_B(\op{\vz}) } is inseparable in \m{\R^m}.\footnote{A family of convex cones is said to be \textit{separable} if there exists a hyperplane that separates one of them from the intersection of the others. If the family is not \textit{separable} then it is called \textit{inseparable}. For more details see \cite[p.\ 2575]{jankovic}.} 
\end{claim}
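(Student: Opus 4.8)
The plan is to argue directly by contradiction, exploiting the product structure of all the cones involved. First I would translate everything so that the common apex \m{\op{\vz}} sits at the origin: writing \m{\widetilde{K}\defas K-\op{\vz}} for each cone, the sets become genuine cones through \m{0}, and since two convex cones sharing an apex can be separated only by a hyperplane through that apex (the apex lies in both, so it cannot be strictly on one side), any separating functional may be taken linear, say \m{\vz\mapsto\ip{p}{\vz}} with \m{p\neq 0}. In these coordinates each control cone reads
\[
\widetilde{K}^t_u(\op{\vz}) = \bigl(\R^{\noq}\bigr)^{N+1}\times\bigl(\R^{\nox}\bigr)^{N+1}\times\R^{\nou}\cdots\overbrace{\widehat{Q}^t}^{(t+1)\text{th}}\cdots\R^{\nou},
\]
where \m{\widehat{Q}^t\defas Q^t_u(\op{u}_t)-\op{u}_t} is a convex cone with apex at \m{0} occupying the \m{(t+1)}-th control block while every remaining block is a full subspace, whereas \m{\widetilde{K}_B(\op{\vz})=K_B} is the coordinate subspace constraining only the \m{\eta_0}- and \m{x_0}-blocks to \m{0}.

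There are exactly two kinds of cone to be separated from the intersection of the rest, and I would treat them in turn. Suppose first that a hyperplane separates some control cone \m{\widetilde{K}^{t_0}_u} from \m{\bigcap_{t\neq t_0}\widetilde{K}^t_u\cap K_B}, say \m{\ip{p}{\cdot}\geq 0} on the former and \m{\ip{p}{\cdot}\leq 0} on the latter. Since \m{\widetilde{K}^{t_0}_u} contains the full subspace obtained by zeroing the \m{(t_0+1)}-th control block (here \m{0\in\widehat{Q}^{t_0}} is used) and leaving all other blocks free, nonnegativity of \m{\ip{p}{\cdot}} on a subspace forces \m{p} to annihilate it; hence \m{p} is supported on the \m{(t_0+1)}-th control block alone. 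But none of the remaining control cones nor \m{K_B} touches that block, so the intersection \m{\bigcap_{t\neq t_0}\widetilde{K}^t_u\cap K_B} contains the whole line \m{\R^{\nou}} there, and \m{\ip{p}{\cdot}\leq 0} on it forces the \m{(t_0+1)}-th component of \m{p} to vanish, whence \m{p=0}, a contradiction.

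The second case, separating \m{K_B} from \m{\bigcap_{t}\widetilde{K}^t_u}, is symmetric. Since \m{K_B} is a subspace, any \m{p} nonnegative on it must vanish on it, so \m{p} is supported only on the \m{\eta_0}- and \m{x_0}-blocks; but \m{\bigcap_t\widetilde{K}^t_u} leaves every \m{\eta}- and \m{x}-block (in particular \m{\eta_0} and \m{x_0}) free, and \m{\ip{p}{\cdot}\leq 0} there again forces \m{p=0}. As no separating hyperplane exists in either case, the family \m{K^0_u(\op{\vz}),\ldots,K^{N-1}_u(\op{\vz}),K_B(\op{\vz})} is inseparable.

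The computations are light; the one point that needs care — and which I regard as the crux — is the bookkeeping of precisely which coordinate block each cone constrains, so as to confirm that whenever a candidate normal \m{p} is forced to live on a single block, the complementary intersection is genuinely unconstrained on that same block. The nonemptiness of all the intersections, needed for separation to be meaningful in the first place, is automatic because \m{0\in\widehat{Q}^t} for every \m{t}, so the relevant coordinate subspaces are actually contained in the intersections.
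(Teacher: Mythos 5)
Your argument is correct, and it reaches the conclusion by a genuinely more self-contained route than the paper. The paper's proof also translates the apex to the origin and exploits the same block structure, but it does so by exhibiting a decomposition of \m{\R^m} into complementary subspaces \m{\mathcal{L}_0,\ldots,\mathcal{L}_N} (one per control block, plus one for all the \m{\eta}- and \m{x}-blocks), writing each cone as \m{\mathrm{conv}(\mathcal{L}^{\nabla}_t \cup \tilde{K}^t_u)} with \m{\tilde{K}^t_u \subset \mathcal{L}_t}, and then invoking Boltyanskii's general inseparability criterion \cite[Theorem~7]{tent} as a black box. You instead rule out a separating functional directly: any hyperplane separating two cones with a common apex must pass through that apex, so the candidate normal \m{p} is a linear functional vanishing on every full coordinate subspace contained in the distinguished cone, forcing \m{p} to be supported on the single constrained block; since the complementary intersection contains that block as a full coordinate subspace (because \m{0} lies in every \m{\widehat{Q}^t} and \m{K_B} is free in all control coordinates), \m{p} must vanish there too, hence \m{p=0}. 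Treating the two cases (a control cone versus \m{K_B} as the distinguished member) covers the definition of separability used in the footnote. What the paper's route buys is brevity and a ready-made theorem that scales to more intricate decompositions; what yours buys is transparency — the reader sees exactly why the product structure kills every candidate hyperplane — at the cost of the case analysis and the block-by-block bookkeeping you correctly identify as the crux.
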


\begin{proof}
See Appendix \ref{app:inseparable}.
\end{proof}

So, using the inseparability of this family, we arrive at stricter conditions for optimality than \ref{Opt:nontri}-\ref{Opt:lag}. To compress notation, let us define covectors (row vectors) as
\begin{align*}
& \left(\R^{\nog{t}}\right)^* \ni \mu^{t} \defas \bigl( \mu^{t}_1, \ldots, \mu^{t}_{\nog{t}} \bigr) \quad \text{for} \quad t=1,\ldots,N, \\
& \left(\R^{\noq}\right)^* \ni \lambda^{t} \defas \bigl( \lambda^{t}_0, \ldots, \lambda^{t}_{\noq} \bigr) \quad \text{for} \quad t \in [N-1], \\
& \left(\R^{\nox}\right)^* \ni \xi^{t} \defas \bigl( \xi^{t}_0, \ldots, \xi^{t}_{\nox} \bigr) \quad \text{for} \quad t \in [N-1].
\end{align*}
Then the optimality conditions \ref{Opt:nontri}-\ref{Opt:lag} translates to the following:
\begin{enumerate}[label=(\roman*), leftmargin=*, widest=b, align=left]
\item Suppose that the covectors \m{\mu^{1}, \ldots,\mu^{N}, \lambda^{0},\ldots,\lambda^{N-1},} \m{\xi^{0},\ldots,\xi^{N-1},} and the scalar \m{\nu} are all zero. Then the optimality conditions \ref{Opt:nontri}-\ref{Opt:lag} lead to the following:
there exist vectors \m{a_0,\ldots,a_{N-1},b} not all zero such that
\[
a_0+\cdots+a_{N-1} + b	 = 0.
\]
This contradicts inseparability of the family of cones \m{K^0_u (\op{\vz}), \cdots, K^{N-1}_u (\op{\vz})}, \m{K_B(\op{\vz}) }, and therefore \ref{Opt:nontri} leads to the following stronger non-triviality condition than  \ref{Opt:nontri}:
\begin{align}\label{eq:antrv}
& \nu \leq 0, \text{\;and if\;} \nu = 0, \text{\; then at least one of the covectors\;} \nonumber \\
&\mu^{1}, \ldots,\mu^{N}, \lambda^{0},\ldots,\lambda^{N-1}, \xi^{0},\ldots,\xi^{N-1} \text{\;is not zero}.
\end{align}
\item Note that $\tilde{\vz} + \op{\vz} \in K_u(\op{\vz}) := \left(\overset{N-1}{\underset{t=0}{\bigcap}}K^t_u(\op{\vz}) \right) \cap K_B(\op{\vz}).$ Therefore \m{\tilde{\vz} + \op{\vz}  \in K^t_u(\op{\vz})} for \m{t \in [N-1]} and \m{\tilde{\vz} + \op{\vz}  \in K_B}. Since, \m{a_t \in  \left(K^t_u\right)^+(\op{\vz})} for \m{t \in [N-1]} and \m{b \in \left(K_B\right)^+}, using the definition of dual cone leads to the following:  
\[
\langle a_t, \tilde{\vz} \rangle \geq 0, \quad \langle b, \tilde{\vz} \rangle \geq 0  \quad \text{ for } \tilde{\vz} + \op{\vz} \in K_u(\op{\vz}). 
\]
Consequently, using the fact that 
\[
\ip{a_0+\cdots+a_{N-1} + b}{\tilde{\vz}}\geq 0 \quad \text{for all\;} \tilde{\vz} + \op{\vz} \in K_u(\op{\vz});
\]
the \ref{Opt:lag} is rewritten as
\begin{equation}\label{eq:lag}
\begin{aligned}
\Big \langle \sum_{i=0}^{N-1} \lambda^i \mathcal{D}_{\vz} S_i (\op{\vz}) + \sum_{k=0}^{N-1} \xi^k \mathcal{D}_{\vz} F_k (\op{\vz}) + \sum_{t=1}^{N} \mu^t \mathcal{D}_{\vz} G_t(\op{\vz})  + \nu \mathcal{D}_{\vz} C(\op{\vz}), \tilde{\vz} \Big \rangle \leq 0 &  \\
\text{for all\;} \tilde{\vz} + \op{\vz} \in  K_u(\op{\vz}).& 
\end{aligned}
\end{equation}
\end{enumerate}
The variational inequality \eqref{eq:lag} gives rise to ``\emph{state and adjoint system dynamics}'' \ref{main:dyn}, ``\emph{transversality conditions}'' \ref{main:trans} and ``\emph{Hamiltonian non-positive gradient condition}'' \ref{main:hmax}. The optimality condition \ref{Opt:nonneg} establishes ``\emph{complementary slackness conditions}'' \ref{main:comp} and ``\emph{non-positivity conditions}'' \ref{main:npos}, and the ``\emph{non-triviality condition}'' \ref{main:ntriv} comes from \eqref{eq:antrv}. Now we shall derive these conditions \ref{main:dyn}-\ref{main:ntriv} in the configuration variables. 
	
\subsection{\ref{step:3}. Representation of the necessary conditions for optimality in terms of the configuration variables:} \label{ssec:crno}

	Let us define the Hamiltonian as
\begin{align*}
& \hamdef \ni \left(\tau, \zeta,\xi, q, x,u \right) \mapsto \\
& \ham{ \left(\tau, \zeta,\xi, q, x,u\right)}  \defas \nu c_\tau \left(q,x,u\right) + \ip{\zeta }{\e^{-1} \left(s_\tau \left(q,x\right)\right)}_{\liea} + \ip{\xi}{f_\tau \left(q,x,u\right)}  \in \R,
\end{align*}
with \m{\nu \in \R.} 

\begin{itemize}[leftmargin=*]
\item \textbf{Adjoint system dynamics and transversality conditions \ref{main:dyn}-\ref{main:trans}:} In order to derive the adjoint system dynamics and the transversality conditions corresponding to the state variable \m{x}, we restrict the choice of the variable \m{\tilde{\vz}} in \eqref{eq:lag} to
\begin{align}
K_x^t(\op{\vz}) \defas & \{\op{\veta}\} \times \overbrace{\{\op{x}_0\} \cdots \times\nspc \underbrace{\R^{\nox}}_{(t+1) \text{th factor}} \nspc \times \cdots \{\op{x}_N\}}^{(N+1) \text{\;factors}} \times \{\op{\vu}\}  \quad \text{for} \quad t = 1, \ldots, N.
\end{align}
For a fixed \m{t = 1, \ldots, N,} choosing the collection of \m{\tilde{\vz}} in \eqref{eq:lag} such that \m{\tilde{\vz}+\op{\vz} \in K^t_{x}\left(\op{\vz}\right)}, leads to the following set of conditions:
for 
	\[
	\zeta^t \defas \vechm^*\left(\lambda^t\right) \quad \text{and} \quad \op{\gamma}_t \defas \left(t,\zeta^t,\xi^t,\op{q}_t,\op{x}_t, \op{u}_t\right),
	\]
		\begin{itemize}[leftmargin=*, label=\(\circ\)]
\item adjoint equations corresponding to the state variables \m{x:}
\begin{align*}
\xi^{t-1}  = \mathcal{D}_{x} \ham{\left(\op{\gamma}_t \right)} + \mu^t \mathcal{D}_{x} g_t \left(\op{q}_t,\op{x}_t \right) \quad \text{for} \quad t = 1, \ldots, N-1,
\end{align*} 
\item transversality conditions corresponding to the state variables \m{x:}
\begin{align*}
\xi^{N-1} = \nu \mathcal{D}_{x} c_N \left(\op{q}_N,\op{x}_N \right) + \mu^N \mathcal{D}_{x} g_N \left(\op{q}_N,\op{x}_N \right).
\end{align*}
\end{itemize}
Similarly, to derive the adjoint and transversality conditions corresponding to the state variable \m{q}, we restrict the choice of the variable \m{\tilde{\vz}} in \eqref{eq:lag} such that 
\[\tilde{\vz}+\op{\vz} \in K_{q}\left(\op{\vz}\right),\]
where
\begin{align}\label{eq:coneq}
K_q (\op{\vz}) \defas \{0\} \times \bigl(\R^{\noq}\bigr)^{N} \times \{\op{\vx}\} \times \{\op{\vu}\}.
\end{align}
The choice of \m{\tilde{\vz}} in \eqref{eq:lag} such that \m{\tilde{\vz}+\op{\vz} \in K_{q}\left(\op{\vz}\right),} leads to the following condition:
\begin{equation}\label{eq:opeta}
\begin{aligned}
& +\sum_{t=1}^{N} \ip{\mu^t \mathcal{D}_{q}g_{t}\left(\op{q}_t,\op{x}_t\right)}{\mathcal{D}_{\veta} \psi_{t}\left(\op{\veta}\right) \tilde{\veta}} + \sum_{t=0}^{N-1} \ip{\mathcal{D}_{q} \ham{\left(\op{\gamma}_t \right)}}{\mathcal{D}_{\veta} \psi_{t}\left(\op{\veta}\right) \tilde{\veta}} \\
& - \sum_{t=0}^{N-1} \ip{\lambda^t}{\tilde{\eta}_{t+1}} +\ip{\nu \mathcal{D}_{q}c_{N}\left(\op{q}_N,\op{x}_N\right)}{\mathcal{D}_{\veta} \psi_{N}\left(\op{\veta}\right) \tilde{\veta}}  = 0 \quad \text{for all\;} \tilde{\veta} \in K_{\veta}, 
\end{aligned}
\end{equation}
where 
\[
\tilde{\veta}\defas \left(\tilde{\eta}_0,\tilde{\eta}_1,\ldots,\tilde{\eta}_{N}\right)\; \text{and} \; K_{\veta} \defas \{0\} \times \bigl(\R^{\noq}\bigr)^{N}.
\]
In order to represent \eqref{eq:opeta} in configuration variable, let us establish an association between \m{\tilde{\veta}} and tangent vectors \m{\tilde{q}_t \in T_{\op{q}_t}\lieg} for \m{t \in [N]}. It is evident from the diffeomorphism \eqref{eq:Diff} that 
\begin{align}\label{eq:qv}
\tilde{q}_t \defas \mathcal{D}_{\veta} \psi_{t}\left(\op{\veta}\right) \tilde{\veta} \in T_{\op{q}_t}\lieg \quad \text{for all\;} t \in [N].
\end{align}
Furthermore, we derive \m{\tilde{\veta}} in terms of \m{\tilde{\vq} \defas \left(\tilde{q}_0,\ldots,\tilde{q}_N\right)} via \eqref{eq:InvDiff} as follows
\begin{align} \label{eq:etav}
\tilde{\eta}_t\defas \left.\frac{d}{ds}\right|_{s=0} \eta_t (s) \defas \left.\frac{d}{ds}\right|_{s=0} \inexv{q_{t-1}(s)^{-1} q_{t}(s)} 
\end{align}
for \m{t \in [N],} where \m{q_{-1}(s) \equiv \bar{q}_0,} \m{\eta_t(0) = \op{\eta}_t,} and with \m{\chi_t \defas T_{\op{q}_{t}}  \Phi_{\op{q}_{t}^{-1}} \left(\tilde{q}_{t}\right) \in \liea \text{ for } \; t \in [N]} and the map 
\begin{align*}
\varphi_{t-1} \defas \dexp,
\end{align*}
\eqref{eq:etav} simplifies to \m{\tilde{\eta}_0 = \vechm \left(\chi_0\right)} and 
\begin{align} \label{eq:etavg}
\tilde{\eta}_t = - \vechm \circ \varphi_{t-1} \Big(\ad{\op{q}_t^{-1}\op{q}_{t-1}} \left( \chi_{t-1}\right) - \chi_t \Big) \quad \text{for} \quad t=1,\ldots,N.
\end{align}
Using the fact that \m{\tilde{q}_t =  \ltriv{\op{q}_t}\left(\chi_t\right)}  for \m{ t \in [N], } 
and with \eqref{eq:qv} and \eqref{eq:etavg}, the necessary conditions \eqref{eq:opeta} leads to the following conditions on the Lie algebra \m{\liea} as
\begin{equation}\label{eq:lagg}
\begin{aligned}
& \sum_{t=1}^{N} \ip{\vechm^* \left(\lambda^{t-1}\right)}{ \varphi_{t-1} \Big(\ad{\op{q}_t^{-1}\op{q}_{t-1}} \left(\chi_{t-1}\right) - \chi_{t} \Big)} + \ip{\triv{\op{q}_{t}} \Bigl( \mu^t \mathcal{D}_{q} g_{t} \left(\op{q}_t,\op{x}_t\right)\Bigr)}{\chi_t} \\
+ & \sum_{t=0}^{N-1} \ip{\triv{\op{q}_{t}} \Big(\mathcal{D}_{q} \ham{\left(\optraj{t} \right)} \Big)}{\chi_t} +\ip{\triv{\op{q}_{N}}\Bigl(\nu \mathcal{D}_{q}c_{N} \left(\op{q}_N,\op{x}_N\right)\Bigr)}{\chi_N}  = 0 \; 
\end{aligned}
\end{equation}
for all \m{ \vchi \in K_\chi, \; } where \m{ \quad \vchi\defas \left(\chi_0,\chi_1,\ldots,\chi_{N}\right) \quad \text{and} \quad  K_\chi \defas \{0\} \times \liea^{N}.}
Let us now restrict the choice of the variable \m{\vchi} in \eqref{eq:lagg} to 
\begin{align*}
K_\chi^t \defas \overbrace{\{0\} \cdots \times \nspc \underbrace{\liea}_{(t+1) \text{th factor}} \nspc \times\cdots \{0\}}^{(N+1) \text{\;factors}} \subset K_\chi \quad \text{for}\; t=1,\ldots,N,
\end{align*}
to derive the adjoint dynamics and the transversality conditions corresponding to the states \m{q \in \lieg.}
For a fixed \m{t = 1,\ldots,N,} choosing \m{\chi_t \in K^t_\chi} in \eqref{eq:lagg} leads to the following set of conditions:
 \begin{itemize}[leftmargin=*, label=\(\circ\)]	
\item adjoint system corresponding to the states \m{q:}
\begin{align}\label{eq:adj_q}
\varphi_{t-1}^{*}(\zeta^{t-1}) & = \ad{\e^{-\mathcal{D}_{\zeta} \ham{\left(\optraj{t} \right)}}}^* \circ \varphi_{t}^{*}(\zeta^t) + \triv{\op{q}_t}\Big(\mathcal{D}_{q} \ham{\left(\optraj{t} \right)} + \mu^t \mathcal{D}_{q} g_{t} \left(\op{q}_t,\op{x}_t\right) \Big) 
\end{align}
for \m{ t = 1,\ldots,N-1,}
\item transversality conditions corresponding to the states \m{q:}
\begin{align}\label{eq:trans_q}
 \varphi_{N-1}^{*}(\zeta^{N-1}) & = \triv{\op{q}_{N}}\Big(\nu \mathcal{D}_{q}c_{N}\left(\op{q}_N,\op{x}_N\right) + \mu^N \mathcal{D}_{q}g_{N}\left(\op{q}_N,\op{x}_N\right) \Big).
\end{align}
\end{itemize}

\item \textbf{Hamiltonian non-positive gradient condition \ref{main:hmax}:} Hamiltonian non-positive gradient condition is derived via restricting the choice of the variable \m{\tilde{\vz}} in \eqref{eq:lag} such that 
\[\tilde{\vz}+\op{\vz} \in \tilde{K}^t_{u}\left(\op{\vz}\right),\]
where
\begin{align*}
\!\! \tilde{K}_u^t(\op{\vz}) \defas \{\op{\veta}\} \times \{\op{\vx}\} \times \underbrace{\{\op{u}_0\} \cdots  \times \nspc \;\overbrace{Q_u^t(\op{u}_t)}^{(t+1) \text{th factor}} \nspc\; \times \cdots \{\op{u}_{N-1}\}}_{N \text{\;factors}} 
\end{align*}
 for \m{t \in [N-1].} For a fixed \m{t \in [N-1],} choosing the collection of \m{\tilde{\vz}} in \eqref{eq:lag} such that \m{\tilde{\vz}+\op{\vz} \in \tilde{K}^t_{u}\left(\op{\vz}\right)}, leads to the following set of conditions:
\begin{align} \label{eq:hamnons}
\ip{\mathcal{D}_{u} \ham{\left(\op{\gamma}_t\right)}}{\tilde{u}_t} \leq 0 \quad \text{for all} \quad \op{u}_t + \tilde{u}_t \in Q^t_{u}(\op{u}_t)
\end{align}
for \m{t \in [N-1].} Since, \m{ \cset \subset Q^t_{u}(\op{u}_t)} , \eqref{eq:hamnons} simplifies to the following
\begin{align*}
\ip{\mathcal{D}_{u} \ham{\left(\op{\gamma}_t\right)}}{w-\op{u}_t} \leq 0 \quad \text{for all} \quad  w \in \cset.
\end{align*} 

\item \textbf{Complementary slackness conditions \ref{main:comp}:} The complementary slackness conditions in the configuration variables follows from \ref{Opt:nonneg} as:
\begin{align*}
\mu^{t}_j g^j_{t}(\op{q}_t,\op{x}_t) = 0 \; \text{for all}\; j=1,\ldots,\;\nog{t} \; \text{and} \; t=1,\ldots,N. 
\end{align*}

\item \textbf{Non-positivity condition \ref{main:npos}:} The non-positivity condition follows from \ref{Opt:nonneg} as:
\[ \mu^t \leq 0 \quad \text{for all}\quad t=1,\ldots,N.\]

\item \textbf{Non-triviality condition \ref{main:ntriv}:} The variable \m{\lambda^t} in the non-triviality condition \ref{Opt:nontri} is zero if and only if \m{\zeta^t \defas \vechm^*\left(\lambda^t\right)} is zero because \m{\vechm^* :\left(\R^{\noq}\right)^* \rightarrow \liea^*} is a vector space homeomorphism. 

	The adjoint variables \m{\left\{\left(\zeta^t,\xi^t\right)\right\}_{t=0}^{N-1}}, covectors \m{\left\{\mu^t\right\}_{t=1}^{N}} and the scalar \m{\nu} do not simultaneously vanish.
\item \textbf{State dynamics \ref{main:dyn}:}     
The system dynamics \eqref{eq:sys} in terms of the Hamiltonian is
\begin{align}\label{eq:hamsd}
\op{q}_{t+1} = \op{q}_t \e^{\mathcal{D}_{\zeta} \ham{\left(\optraj{t} \right)}}, \quad
\op{x}_{t+1} = \mathcal{D}_{\xi} \ham{\left(\optraj{t} \right)}.
\end{align}
\end{itemize}

\subsection{\ref{step:4}. The representation of the necessary conditions for optimality is coordinate free:}
Let us establish that the necessary conditions derived in \secref{ssec:crno} are independent of the choice of the coordinate system. Suppose we have a curve 
 \[ I \ni s \mapsto \mathbf{q}_{t}(s) \in \lieg \quad \text{with\;\;} \mathbf{q}_{t}(0) = \op{q}_t,\]
 and the two different coordinate charts
 \[\mathcal{X} : D^{\mathcal{X}} \subset \lieg \rightarrow R^{\mathcal{X}} \subset \R^{\noq}, \quad \mathcal{Y} : D^{\mathcal{Y}} \subset \lieg \rightarrow R^{\mathcal{Y}} \subset \R^{\noq}.\] 
such that our curve \m{\mathbf{q}_{t}} is contained in both domains, that is
\[\mathbf{q}_{t}(s) \in D^{\mathcal{X}} \cap D^{\mathcal{Y}} \quad \text{for all \;\;} s \in I.\]
Suppose our two coordinate charts are \m{\mathcal{C}^1} related, in the following sense: 
\begin{definition}[{{\cite[Definition 2.3.1, p.\ 8]{Sussmann_coinv}}}]
Let \m{\lieg} be a manifold, and let \m{\mathcal{X}, \mathcal{Y},} be two \m{\noq -} dimensional charts on \m{\lieg}. Let \m{k} be a nonnegative integer. We say that \m{\mathcal{X}} and \m{\mathcal{Y}} are \m{\mathcal{C}^k} related if
\begin{enumerate}
\item the images \m{\mathcal{X}\left(D^{\mathcal{X}} \cap D^{\mathcal{Y}}\right)}, \m{\mathcal{Y}\left(D^{\mathcal{X}} \cap D^{\mathcal{Y}}\right)} of the ``overlap set" \m{D^{\mathcal{X}} \cap D^{\mathcal{Y}}} under the coordinate maps \m{\mathcal{X}, \mathcal{Y}, } are open in \m{R^{\mathcal{X}}}, \m{R^{\mathcal{Y}},} respectively;
\item the ``change of coordinates" maps
\[\mathbf{\Phi}: \mathcal{X}\left(D^{\mathcal{X}} \cap D^{\mathcal{Y}}\right) \rightarrow \mathcal{Y}\left(D^{\mathcal{X}} \cap D^{\mathcal{Y}}\right), \]
\[\mathbf{\Psi}: \mathcal{Y}\left(D^{\mathcal{X}} \cap D^{\mathcal{Y}}\right) \rightarrow \mathcal{X}\left(D^{\mathcal{X}} \cap D^{\mathcal{Y}}\right), \]
defined by the conditions
\begin{align*}
\mathbf{\Phi}(X) = Y \quad & \text{whenever} \quad X = \mathcal{X}(q), Y = \mathcal{Y}(q) \quad \text{for some}\quad  q \in D^{\mathcal{X}} \cap D^{\mathcal{Y}}, \\
\mathbf{\Psi}(Y) = X \quad & \text{whenever} \quad X = \mathcal{X}(q), Y = \mathcal{Y}(q) \quad \text{for some} \quad q \in D^{\mathcal{X}} \cap D^{\mathcal{Y}}, 
\end{align*}
are of class \m{\mathcal{C}^k.}
\end{enumerate}
\end{definition}
We know that \m{X, Y} are coordinate representations of the configuration \m{q \in D^{\mathcal{X}} \cap D^{\mathcal{Y}}} and \m{\dot{X}, \dot{Y}} are coordinate representations of the tangent vectors \m{\dot{q} \in T_{q} \lieg} in the coordinate charts \m{\mathcal{X}, \mathcal{Y}} respectively. Then we can express the tangent vector \m{\dot{Y} (\text{or\;} \dot{X})} in one coordinate chart as a function of the configuration \m{X (\text{or\;} Y)} and the tangent vector \m{\dot{X} (\text{or\;} \dot{Y})} in the other coordinate chart. The \textit{transformation rules for tangent vectors} is given by the following formulae:
 \begin{align} \label{eq:trvxy}
 \dot{X} = \frac{\partial \mathbf{\Psi}}{\partial Y} (Y) \dot{Y}, \quad \dot{Y} = \frac{\partial \mathbf{\Phi}}{\partial X} (X) \dot{X}, 
 \end{align}
where \m{\left[\frac{\partial \mathbf{\Psi}}{\partial Y} (Y) \right]_{i j} \defas \frac{\partial \mathbf{\Psi^i}}{\partial Y_j}(Y) } is the Jacobian matrix.
 
	In order to show the coordinate invariance of the adjoint system and the transversality conditions, we need to establish the transformation rules for the covectors \m{\rho^{t-1}}, \m{\ad{\e^{-\mathcal{D}_{\zeta} \ham{\left(\optraj{t} \right)}}}^* \rho^{t}} and \m{\triv{\op{q}_t}\Big(\mathcal{D}_{q} \ham{\left(\optraj{t} \right)}\Big) \in \liea^*}, and under that transformation rule these covectors admit unique representations in the coordinate charts \m{\mathcal{X}, \mathcal{Y}.} This task is accomplished in the following steps:
	\begin{enumerate}
	\item We prove that the local representation \m{\dot{X}, \dot{Y}} of \m{\dot{q} \in T_{q}\lieg} in the coordinate charts \m{\mathcal{X}, \mathcal{Y}} respectively, has a unique representation \m{\chi^X, \chi^Y \in \liea,} the Lie algebra of the Lie group \m{\lieg} via the tangent lift of the left action. Further, we establish the transformation rule for the vectors \m{\chi^X, \chi^Y} that is derived via the transformation rule of vectors \m{\dot{X}, \dot{Y}} given by \eqref{eq:trvxy}.
	\item Through the duality property, we derive the transformation rule for a local representation \m{\rho_X, \rho_Y} in the coordinate charts \m{\mathcal{X}, \mathcal{Y}} respectively, of the covector \m{\rho \in \liea^*} and establish the invariance of the adjoint equations and transversality conditions. 
	\end{enumerate}
	  
In the case of a Lie group \m{\lieg}, the tangent space at a point \m{\op{q}_t \in \lieg} is characterized by the left (right) invariant vector fields \cite{sachkovnotes}, i.e.,
\[T_{\op{q}_t}\lieg = \left\{ T_e\Phi_{\op{q}_t} \left(\chi\right) \; \big|\; \chi \in\liea \right\}.\]
Let \m{\chi^{X}, \chi^{Y} } be the coordinate representation of \m{ \chi \in \liea } in the coordinate charts \m{\mathcal{X}} and \m{\mathcal{Y}}  respectively and therefore from \eqref{eq:homeo} , we know 
\begin{align}\label{eq:homeo}
T_e\Phi_{\op{q}^X_t} \left(\chi^{X}\right) = \frac{\partial \mathbf{\Psi}}{\partial Y} (Y) \; T_e\Phi_{\op{q}^Y_t} \left(\chi^{Y}\right).
\end{align}
Then the transformation rule for the vectors \m{\chi^{X}, \chi^{Y} \in \liea } is derived from \eqref{eq:trvxy} as
\begin{align}\label{eq:trliea}
\chi^{X} = \Gamma_t \left( \chi^{Y}\right), \quad \chi^{Y} = \Delta_t \left( \chi^{X}\right)
\end{align}
where \[\Gamma_t \defas \left(\left(T_e\Phi_{\op{q}^X_t}\right)^{-1} \circ \frac{\partial \mathbf{\Psi}}{\partial Y} (Y) \circ T_e\Phi_{\op{q}^Y_t} \right)  \text{ and } \Delta_t \defas \Gamma_t^{-1} \] are linear transformations.
These transformation rules of vectors in the Lie algebra \m{\liea} induces the transformation rules for the corresponding covectors, i.e., for \m{\zeta_X, \zeta_Y \in \liea^*,} if 
\[\ip{\zeta_X}{\chi^{X}} = \ip{\zeta_Y}{\Delta_t \left(\chi^{X}\right)}\quad \text{ for all } \chi^{X} \in  \liea, \] 
then 
\begin{align}\label{tr:covec}
\zeta_X = \Delta_t^* \left( \zeta_Y \right) \text{ and } \zeta_Y = \Gamma_t^* \left( \zeta_X \right).
\end{align}
Therefore coordinate representations of the covector \m{\triv{\op{q}_t}\Big( \mathcal{D}_{q} \big( \ham{\left( \optraj{t} \right)} + \mu^t g_{t} \left(\op{q}_t,\op{x}_t\right) \big)\Big) \in \liea^{*} } in chart \m{\mathcal{X}} as \m{\triv{\op{q}^X_t}\Big( \mathcal{D}_{q^X} \big( \ham{\left( \optraj{t} \right)} + \mu^t g_{t} \left(\op{q}_t,\op{x}_t\right) \big)\Big)} and in charts \m{\mathcal{Y}} as \m{\triv{\op{q}^Y_t}\Big( \mathcal{D}_{q^Y} \big( \ham{\left( \optraj{t} \right)} + \mu^t g_{t} \left(\op{q}_t,\op{x}_t\right) \big)\Big)} are related via the transformation rule \eqref{tr:covec} as
\begin{equation}\label{eq:3x}
\begin{aligned}
& \triv{\op{q}^X_t}\Big( \mathcal{D}_{q^X} \big( \ham{\left( \optraj{t} \right)} + \mu^t g_{t} \left(\op{q}_t,\op{x}_t\right) \big)\Big) = \Delta_t^* \bigg( \triv{\op{q}^Y_t}\Big( \mathcal{D}_{q^Y} \big( \ham{\left( \optraj{t} \right)} + \mu^t g_{t} \left(\op{q}_t,\op{x}_t\right) \big)\Big) \bigg).
\end{aligned}
\end{equation}
In the similar manner, it can be established that 
\begin{align}\label{eq:2x}
 \ad{\e^{-\mathcal{D}_{\zeta_X} \ham{\left(\optraj{t} \right)}}}^* \rho^{t}_{X} = \Delta_t^* \Big( \ad{\e^{-\mathcal{D}_{\zeta_Y} \ham{\left(\optraj{t} \right)}}}^* \rho^{t}_{Y} \Big)
\end{align}
where 
\begin{align*}
\ip{\ad{\e^{-\mathcal{D}_{\zeta_X} \ham{\left(\optraj{t} \right)}}}^* \rho^{t}_{X}}{\chi^{X}} = \ip{\ad{\e^{-\mathcal{D}_{\zeta_Y} \ham{\left(\optraj{t} \right)}}}^* \rho^{t}_{Y}}{\chi^{Y}}. 
\end{align*}
Furthermore, we derive the transformation rules for the covectors \m{ \rho^{t}_{Y}, \rho^{t}_{X} \in \liea } as
\begin{align*}
\ip{\rho^{t}_{X}}{\ad{\e^{-\mathcal{D}_{\zeta_X} \ham{\left(\optraj{t} \right)}}} \chi^{X}} & = \ip{\rho^{t}_{Y}}{\ad{\op{q}^{-1}_{t-1}\op{q}_t} \Delta_t  \left(\chi^{X}\right)} = \ip{\rho^{t}_{Y}}{\Delta_{t+1} \Big(\ad{\op{q}^{-1}_{t-1}\op{q}_t} \chi^{X} \Big)} \\
& = \ip{\Delta_{t+1}^*\left(\rho^{t}_{Y}\right)}{\ad{\e^{-\mathcal{D}_{\zeta_X} \ham{\left(\optraj{t} \right)}}} \chi^{X}} 
\end{align*}
for all \m{\chi^{X} \in \liea,} which leads to
\begin{align} \label{eq:1x}
\rho^{t}_{X} = \Delta_{t+1}^*\left(\rho^{t}_{Y}\right).
\end{align}
Using \eqref{eq:3x}, \eqref{eq:2x} and \eqref{eq:1x}, we conclude that the adjoint system \eqref{eq:adj_q} in coordinate charts \m{\mathcal{X}} transform naturally to the adjoint system \eqref{eq:adj_q} in coordinate charts \m{\mathcal{Y},} i.e.,
\begin{align*}
& \rho_X^{t-1} - \ad{\e^{-\mathcal{D}_{\zeta_X} \ham{\left(\optraj{t} \right)}}}^* \rho_X^{t} - \triv{\op{q}_t}\Big(\mathcal{D}_{q^X} \ham{\left(\optraj{t} \right)} + \mu^t \mathcal{D}_{q^X} g_{t} \left(\op{q}_t,\op{x}_t\right) \Big) \\ \nonumber
& =\Delta^*_t\bigg(\rho_Y^{t-1} - \ad{\e^{-\mathcal{D}_{\zeta_Y} \ham{\left(\optraj{t} \right)}}}^* \rho_Y^{t} - \triv{\op{q}_t}\Big(\mathcal{D}_{q^Y} \ham{\left(\optraj{t} \right)} + \mu^t \mathcal{D}_{q^Y} g_{t} \left(\op{q}_t,\op{x}_t\right) \Big)\bigg).
\end{align*}
In other words, 
\begin{align*}
\rho_X^{t-1} & - \ad{\e^{-\mathcal{D}_{\zeta_X} \ham{\left(\optraj{t} \right)}}}^* \rho_X^{t} - \triv{\op{q}_t}\Big(\mathcal{D}_{q^X} \ham{\left(\optraj{t} \right)}+\mu^t \mathcal{D}_{q^X} g_{t} \left(\op{q}_t,\op{x}_t\right)\Big)  = 0
\end{align*}
if and only if 
\begin{align*}
\rho_Y^{t-1} & - \ad{\e^{-\mathcal{D}_{\zeta_Y} \ham{\left(\optraj{t} \right)}}}^* \rho_Y^{t} - \triv{\op{q}_t}\Big(\mathcal{D}_{q^Y} \ham{\left(\optraj{t} \right)}+\mu^t \mathcal{D}_{q^Y} g_{t} \left(\op{q}_t,\op{x}_t\right)\Big) = 0
\end{align*}
because \m{\Delta^*_t} is an invertible linear transformation.	
Similarly, we can conclude that the transversality conditions \eqref{eq:trans_q} in coordinate charts \m{\mathcal{X}} transform naturally to the transversality conditions \eqref{eq:trans_q} in coordinate charts \m{\mathcal{Y}}.
This completes our proof.

\section{Example}
For illustration of our results we pick an example of energy optimal single axis maneuvers of a spacecraft. In effect, the state-space becomes \m{\R \times \SO{2}}, which is isomorphic to \m{\R\times \text{S}^1}. This is a considerably elementary situation compared to general rigid body dynamics on \m{\SO{3}}, but it is easier to visualize and represent trajectories with figures. We adhere to this simpler setting in order not to blur the message of this article while retaining the coordinate-free nature of the problem. Let \m{h>0} be a step length, \m{\ornt{t}, \rot{t} \in \SO{2},}(the set of \m{2 \times 2} orthonormal matrices,) be the orientation of the spacecraft performing single axis maneuvers and the integration step at the  discrete-time instant \m{t} respectively. Let \m{\vel{t} \in \R} be the momentum of the spacecraft about the axis of rotation and \m{\cntrl{t}} be the control applied to the spacecraft about the rotation axis. Consider the discrete-time model of a spacecraft performing single axis maneuvers given by
\begin{align}
\begin{cases}
\ornt{t+1} = \ornt{t} F(\vel{t}),\\
\vel{t+1} =  \vel{t} + h \cntrl{t},
\end{cases}
\end{align}
where \[ F(\omega) \defas \begin{pmatrix} \sqrt{1-h^2 \omega^2} & - h \omega \\  h \omega & \sqrt{1-h^2 \omega^2} \end{pmatrix} .\]

The optimal control problem is to maneuver a spacecraft from a fixed initial configuration \m{\left(\ornt{i}, \vel{i}\right)} to a fixed final configuration \m{\left(\ornt{f}, \vel{f}\right)} via a minimum energy path obeying state and action constraints simultaneously, i.e., \m{\abs{\vel{t}} \leq d, \abs{\cntrl{t}} \leq c } for all \m{t}.
The optimal control problem in discrete-time can be defined as
\begin{equation}
\label{eq:exmpl}
\begin{aligned}
\minimize_{\{u_t\}_{t=0}^{N-1}} &&& \mathscr{J} \left(\vu\right) \defas \sum_{t=0}^{N-1} \frac{\cntrl{t}^2}{2} \\
\text{subject to} &&&
\begin{cases} 
\begin{cases}
\ornt{t+1} = \ornt{t} F(\vel{t})\\
\vel{t+1}  =  \vel{t} + h \cntrl{t}\\
\abs{\cntrl{t}} \leq c
\end{cases} \text{for\;} t \in [N-1], \\
\frac{1}{2}\left(\vel{t}^2-d^2 \right)\leq 0 \quad \text{for\;} t =1,\ldots,N-1,\\
 \left(\ornt{0},\vel{0} \right) = \left(\ornt{i}, \vel{i}\right),\\  
 \left(\ornt{N},\vel{N} \right) = \left(\ornt{f}, \vel{f}\right).
 \end{cases}
\end{aligned}
\end{equation}

A set of first order necessary conditions for optimality for \eqref{eq:exmpl} is given by Corollary \ref{thm:c1DMP} as follows:

\subsection{First Order Necessary Conditions of optimality:}
Let \m{\hat{\left(\cdot \right)} : \R \rightarrow \so{2}^* } be a vector space homeomorphism. Define the Hamiltonian   for the optimal control problem \eqref{eq:exmpl} as
\begin{equation}
\begin{aligned}
& \so{2}^{*} \times \R  \times \SO{2} \times \R \times \R \ni \left(\hat{\zeta}, \xi,  R, \omega, u \right) \mapsto \\
& H^\nu (\hat{\zeta}, \xi, R, \omega, u)\defas \nu \frac{u^2}{2} + \ip{\hat{\zeta}}{\e^{-1}\left(F\left(\omega\right)\right)}_{\liea} + \xi \left(\omega + h u \right) \\
&\phantom{H^\nu (\hat{\zeta}, \xi, \mu, R, \omega, u)} = \nu \frac{u^2}{2} + \zeta \sin^{-1}\left(h \omega \right) + \xi \left(\omega + h u \right)  \in \R 
\end{aligned}
\end{equation}
Let \m{\{\op{u}_t\}_{t=0}^{N-1}} be an optimal control that solves the problem \eqref{eq:exmpl}.
Then there exist a state-adjoint trajectory \m{\left(\left\{\left(\hat{\zeta}^t,\xi^t\right)\right\}_{t=0}^{N-1}, \left\{\left(\op{\ornt{t}},\op{\vel{t}}\right)\right\}_{t=0}^{N}\right)} on the cotangent bundle \m{\so{2}^*\times \R^* \times \SO{2} \times \R}, covectors \m{\left\{\mu^t\right\}_{t=1}^{N} \subset \R^*} and a scalar \m{\nu \in \{-1, 0\}}, not all zero, such that the following hold:

\begin{enumerate}[leftmargin=*, label=(\roman*), widest=b, align=left]
\item State and adjoint system dynamics given by
\begin{align*}
& \hat{\zeta}^{t-1} = \ad{\e^{-\mathcal{D}_{\hat{\zeta}} H^\nu \left(\hat{\zeta}^t,\xi^t,\op{\ornt{t}},\op{\vel{t}},\op{\cntrl{t}} \right)}}^* \hat{\zeta}^{t} = F\left(\op{\vel{t}}\right)  \hat{\zeta}^{t} F\left(\op{\vel{t}}\right)^\top = \hat{\zeta}^{t} ,\\
& \xi^{t-1} = \mathcal{D}_{\omega} H^\nu \left(\hat{\zeta}^t,\xi^t,\op{\ornt{t}},\op{\vel{t}},\op{\cntrl{t}} \right) + \mu^t  \mathcal{D}_{\omega} g_{t} \left(\op{\vel{t}}\right)\\
& \phantom{\xi^{t-1}} = \frac{h\zeta^t }{\sqrt{1-h^2 \op{\vel{t}}^2}} + \xi^{t} + \mu^t \op{\vel{t}} .
\end{align*}
The state and adjoint system can be written as
\begin{align}\label{eq:adjpmp}
\text{adjoint} & \begin{cases}
\zeta^{t-1} = \zeta^t,\\
\xi^{t-1} = \frac{h\zeta^t }{\sqrt{1-h^2 \op{\vel{t}}^2}} + \xi^{t} + \mu^t \op{\vel{t}} ,
\end{cases}
\\ \label{eq:stpmp}
\text{state} & \begin{cases}
\op{R}_{t+1} = \op{\ornt{t}} F\left(\op{\vel{t}}\right),\\
\op{\omega}_{t+1} =  \op{\vel{t}} + h \op{\cntrl{t}},
\end{cases}
\end{align}
\item Complementary slackness conditions given by:
\begin{align}\label{eq:comppmp}
\mu^{t} \left(\op{\vel{t}}^2 - d^2 \right) = 0 \quad \text{for all} \quad t=1,\ldots,N-1, 
\end{align}

\item Non-positivity condition given by:
\[ \mu^t \leq 0 \quad \text{for all}\quad t = 1,\ldots,N-1,\]
\item Hamiltonian maximization pointwise in time, given by:
\[H^\nu \left(\hat{\zeta}^t,\xi^t,\op{\ornt{t}},\op{\vel{t}},\op{\cntrl{t}} \right) \defas \max_{w \in [-c,c]} H^\nu \left(\hat{\zeta}^t,\xi^t,\op{\ornt{t}},\op{\vel{t}}, w \right).\]
\end{enumerate}
\begin{remark}
Note that, since the Hamiltonian is concave in \m{u} the non-positive gradient condition, in this case, leads to the maximization of the Hamiltonian pointwise in time. 
\end{remark}
It follows that 
\begin{align*}
\op{\cntrl{t}}& = \text{arg}\max_{w \in [-c,c]} H^\nu \left(\hat{\zeta}^t,\xi^t,\op{\ornt{t}}, \op{\vel{t}}, w \right),\\
& = \text{arg}\max_{w \in [-c,c]} \left( \nu \frac{w^2}{2} + \xi^t h w \right).
\end{align*}
If \m{\nu =-1} then
\begin{align}
\op{\cntrl{t}} = \begin{cases}
c \quad &\text{if} \; h \xi^t \geq c, \\
-c \quad &\text{if} \; h \xi^t \leq -c, \\
h \xi^t \quad & \text{elsewhere}.
\end{cases}
\end{align}
In the case of an abnormal extremal, i.e., \m{\nu = 0},
\begin{align}
\op{\cntrl{t}} = \begin{cases}
c \quad &\text{if} \; \xi^t >0, \\
-c \quad &\text{if} \; \xi^t <0. 
\end{cases}
\end{align}

\begin{remark}
Note that the optimal control \m{\op{\cntrl{t}}} is the saturation function of the co-state \m{h  \xi^t} when \m{\nu = -1}, and the control is bang-bang in the case of an abnormal extremal, i.e., \m{\nu = 0}.
\end{remark}
	The constrained boundary value problem \eqref{eq:adjpmp}-\eqref{eq:stpmp} subject to boundary conditions \m{ \left(\ornt{0},\vel{0} \right) = \left(\ornt{i}, \vel{i}\right),  \left( \ornt{N}, \vel{N} \right) = \left(\ornt{f}, \vel{f}\right),} the complementary slackness conditions \eqref{eq:comppmp} and the state constraints \m{\abs{\vel{t}} \leq d} for \m{t=1,\ldots,N-1,} is solved using multiple shooting methods\cite{karmmsm}. 

	Assume that a satellite has an inertia $I = \unit{800}{\kilogram \meter^2}$ about the axis of rotation, and is fitted with an actuation device capable of producing torque of the magnitude $\unit{20}{\newton \meter}$. The attitude maneuvers of the satellite are subject to a maximum permissible magnitude of the momentum of $\unit{70}{\newton \meter \second}$. The model of the satellite scaled to unit inertia has been considered for the simulations with the following data:   
\begin{itemize}
\item sampling time ($T$) = \unit{0.05}{\second}, 
\item maximum torque or control bound ($c$) = \unit{25}{\emph{m} \newton\meter}, 
\item maximum momentum ($d$) = \unit{87.5}{\emph{m} \newton\meter\second}, 
\item time duration ($t_{\max}$) can range between \unit{0}{\second} and \unit{150}{\second}. 
\end{itemize}
For ease of representation of the initial and final orientations in figures, we denotes the initial and final orientations with the rotation angle \m{\theta} such that 
\begin{align*}
\ornt{t} \defas \begin{pmatrix} \phantom{-}\cos(\theta_t) & \sin(\theta_t)\\
							-\sin(\theta_t) & \cos(\theta_t)
			\end{pmatrix} \quad \text{for\;\;} \theta_t \in \ensuremath{\left[ 0, 2 \pi \right[. } 
\end{align*}
So, the initial and final configurations for the trajectories are defined by \m{\left(\theta_i,\vel{i} \right)} and \m{\left(\theta_f,\vel{f} \right)} respectively. Three maneuvers with different initial and final conditions have been simulated:
\begin{itemize}[leftmargin=*]
\item \m{\mathcal{T}_1:} \m{\left(\theta_i,\vel{i} \right)}= \m{\left(\unit{0}{\degree}, \unit{0}{\newton\meter\second}\right)},  \m{\left(\theta_f,\vel{f} \right)} = \m{\left(\unit{90}{\degree}, \unit{80}{\emph{m} \newton\meter\second}\right)}, \m{t_f = \unit{100}{\second}},
\item \m{\mathcal{T}_2:} \m{\left(\theta_i,\vel{i} \right)}= \m{\left(\unit{0}{\degree}, \unit{0}{\newton\meter\second}\right)},  \m{\left(\theta_f,\vel{f} \right)} = \m{\left(\unit{75}{\degree}, \unit{0}{\newton\meter\second}\right)}, \m{t_f = \unit{19}{\second}},
\item \m{\mathcal{T}_3:} \m{\left(\theta_i,\vel{i} \right)}= \m{\left(\unit{90}{\degree}, \unit{0}{\newton\meter\second}\right)},  \m{\left(\theta_f,\vel{f} \right)} = \m{\left(\unit{265}{\degree}, \unit{0}{\newton\meter\second}\right)}, \m{t_f = \unit{40}{\second}}.
\end{itemize}
	The distinguishing feature of this approach is that the system dynamics is defined in the configuration space in contrast to the local representation that enables one to find optimal trajectories which need more than one chart for local representation, as shown in Figure \ref{fig:Man1}. The trajectory \m{\mathcal{T}_1} admits all orientations on $\SO{2}$, and these can't be represented on a single chart. So, the local representation of the system dynamics cannot characterize such optimal trajectories.   
\begin{figure}[H]
\centering
\subfloat[energy optimal trajectory]{
\includegraphics{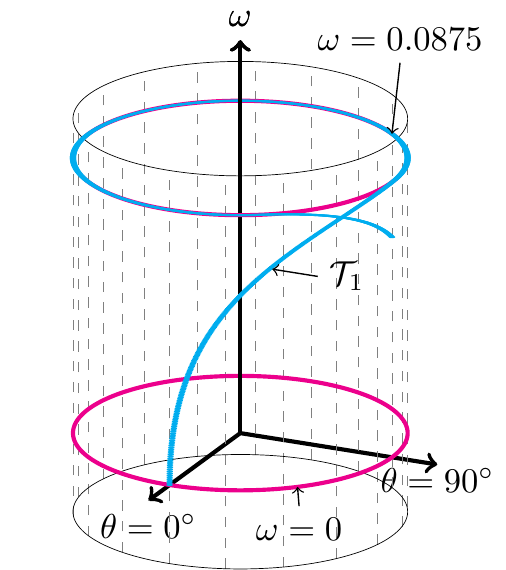}
\label{fig:Man1}
}
\quad
\subfloat[Optimal control profile]{
\setlength\figureheight{=0.4\textwidth} 
\setlength\figurewidth{=0.35\textwidth}
\input{OpCnt90_0.8_100.tikz}
\label{fig:OpCnt1}
}
\caption{The state trajectory and the corresponding optimal control  for the maneuver \m{\mathcal{T}_1}.}
\end{figure}  

The trajectories \m{\mathcal{T}_2,\mathcal{T}_3} are shown in Figure \ref{fig:Man2}-\ref{fig:Man3}, and their corresponding optimal control profiles are shown in Figure \ref{fig:OpCnt2}-\ref{fig:OpCnt3}. It is important to note that whenever the state constraints are active i.e. \m{\abs{\vel{t}} = d, \text{\;where\;} t \in [N],} the control actions at such time instances will be zero. The optimal control corresponding to the maneuver \m{\mathcal{T}_2} saturates at the end points in order to achieve the maneuver in a specified time, see Figure \ref{fig:OpCnt2}. On the other hand, the optimal control corresponding to trajectories  \m{\mathcal{T}_1,\mathcal{T}_3} does not saturate because the time duration of these maneuvers is higher then the minimum time needed for such maneuvers as shown in Figure \ref{fig:OpCnt3}. 
\begin{figure}[H]
\centering
\subfloat[Energy optimal trajectory]{
\includegraphics{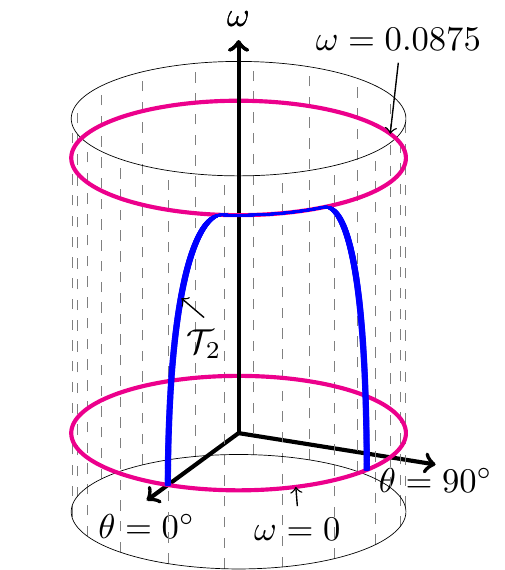}
\label{fig:Man2}
}
\quad
\subfloat[Optimal control profile]{
\setlength\figureheight{=0.4\textwidth} 
\setlength\figurewidth{=0.35\textwidth}
\input{OpCnt75_0_19.tikz}
\label{fig:OpCnt2}
}
\caption{The state trajectory and the corresponding optimal control  for the maneuver \m{\mathcal{T}_2}.}
\end{figure}    

\begin{figure}[H]
\centering
\subfloat[Energy optimal trajectory]{
\includegraphics{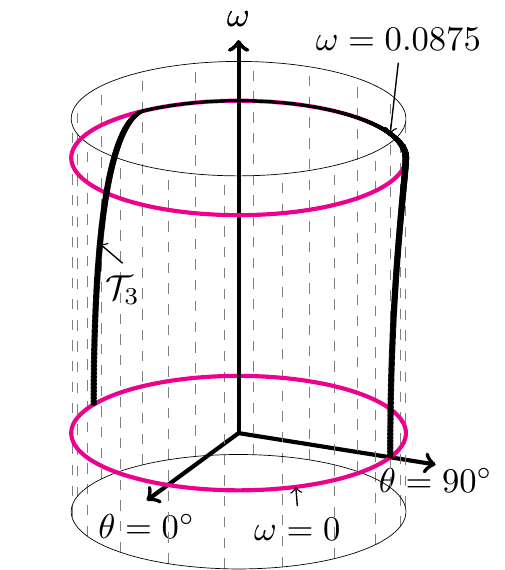}
\label{fig:Man3}
}
\quad
\subfloat[Optimal control profile]{
\setlength\figureheight{=0.4\textwidth} 
\setlength\figurewidth{=0.35\textwidth}
\input{OpCnt175_0_40.tikz}
\label{fig:OpCnt3}
}
\caption{The state trajectory and the corresponding optimal control  for the maneuver \m{\mathcal{T}_3}.}
\end{figure}

	\begin{remark}
		The conjunction of discrete mechanics and optimal control (DMOC) for solving constrained optimal control problems while preserving the geometric properties of the system has been explored in \cite{sinathesis}. The indirect geometric optimal control technique employed in our present article differs from the aforementioned DMOC technique on the account that ours is an indirect method; consequently \cite{trelat}, the proposed technique is likely to provide more accurate solutions than the DMOC technique. Another important feature of our PMP is that it can characterize abnormal extremal unlike DMOC and other direct methods. It would be interesting to develop an indirect method for solving optimal control problems for higher-order geometric integrators.
	\end{remark}
    
\section*{Acknowledgments}
This work was partially supported by a research grant (14ISROC010) from the Indian Space Research Organization. The authors acknowledge the fruitful discussions with Harish Joglekar, Scientist, of the Indian Space Research Organization.

\appendix

\section{Proofs of Corollaries \ref{thm:c1DMP} - \ref{thm:c3DMP}.} \label{app:corProofs}
\subsection{Proof of Corollary \ref{thm:c1DMP}}
\begin{proof}
In order to apply Theorem \ref{thm:DMP} to optimal control problem \eqref{eq:c1sopt}, let us define the manifold \m{M_{\mathrm{fin}}} in a neighborhood of \m{(\op{q}_N, \op{x}_N) \in M_{\mathrm{fin}}} as a zero level set of a smooth submersion. In other words, there exists an open set \m{\mathcal{O}_{\left(\op{q}_N,\op{x}_N\right)} \subset \lieg \times \R^{\nox}} containing \m{\left(\op{q}_N,\op{x}_N\right)}  and a smooth submersion 
\[ \mathcal{O}_{\left(\op{q}_N,\op{x}_N\right)} \ni (q_N,x_N) \mapsto b_{\mathrm{fin}}(q_N,x_N) \in \R^{\noq+\nox - m_f} \]
such that
\[ \;b_{\mathrm{fin}}^{-1}(0) = \mathcal{O}_{\left(\op{q}_N,\op{x}_N\right)} \cap M_{\mathrm{fin}},\] 
where \m{m_f} is the dimension of the manifold \m{M_{\mathrm{fin}}}.  So,  the end point constraint is represented in the neighborhood \m{ \mathcal{O}_{\left(\op{q}_N,\op{x}_N\right)} } of the optimal point \m{\left(\op{q}_N,\op{x}_N\right)}  as
\begin{align} \label{eq:eqcnst}
b_{\mathrm{fin}}\left( q_N,x_N\right) = 0.
\end{align}
Let us define the equality constraint \eqref{eq:eqcnst} in the form of inequality constraints  
\begin{align} \label{eq:gndef}
\bar{g}_{N} \left( q_N,x_N\right) \leq 0, 
\end{align}
where 
\begin{align*}
\bar{g}_{N} \left( q_N,x_N\right) \defas 	\begin{pmatrix}
						 b_{\mathrm{fin}} \left( q_N,x_N\right) \\
					 	- b_{\mathrm{fin}} \left( q_N,x_N\right)
					\end{pmatrix} 
\in \R^{2\left(\noq+\nox - m_f\right)}.
\end{align*}
It is evident from \eqref{eq:gndef} that   \m{\bar{g}_{N} \left( q_N,x_N\right) \leq 0} if and only if \m{b_{\mathrm{fin}} \left( q_N,x_N\right)=0.}
Now applying Theorem \ref{thm:DMP} to optimal control problem \eqref{eq:c1sopt} with end point conditions defined by \eqref{eq:gndef} in addition to state inequality constraints on the end points leads to the following set of conditions:
	\begin{enumerate}[leftmargin=*, label={\rm (\roman*)}, widest=b, align=left]
\item \ref{main:dyn} holds,
\item The transversality conditions \ref{main:trans}  can then be rewritten as
d\begin{equation} \label{eq:transmod}
\begin{aligned}
&T_{\op{q}_N}^* \Phi_{\op{q}^{-1}_N} \left( \rho^{N-1} \right) - \nu \mathcal{D}_{q} c_N \left(\op{q}_N,\op{x}_N \right) - \mu^N \mathcal{D}_{q} g_N \left(\op{q}_N,\op{x}_N \right) - \sigma \mathcal{D}_{q} \bar{g}_N \left(\op{q}_N,\op{x}_N \right) = 0,\\
& \xi^{N-1} - \nu \mathcal{D}_{x} c_N \left(\op{q}_N,\op{x}_N \right) - \mu^N \mathcal{D}_{x} g_N \left(\op{q}_N,\op{x}_N \right) - \sigma \mathcal{D}_{x} \bar{g}_N \left(\op{q}_N,\op{x}_N \right)= 0,
\end{aligned}
\end{equation}
where \m{\left(\R^{\nog{N}}\right)^* \ni \sigma \leq 0 , \;} and \m{\;\nog{N}\defas 2 \left(\noq+\nox - m_f\right). }
Using \eqref{eq:gndef}, the transversality conditions \eqref{eq:transmod} further simplify and written in compressed form as
\begin{align} \label{eq:transver}
\bigl(T_{\op{q}_N}^* \Phi_{\op{q}^{-1}_N}\left( \rho^{N-1} \right), & \xi^{N-1} \bigr) -\nu \mathcal{D}_{\left(q,x\right)} c_N \left(\op{q}_N,\op{x}_N \right) \\
 & -\mu^N \mathcal{D}_{\left(q,x\right)} g_N \left(\op{q}_N,\op{x}_N \right) =\left( \sigma_{+}-\sigma_{-} \right) \mathcal{D}_{\left(q,x\right)} b_{\mathrm{fin}} \left(\op{q}_N,\op{x}_N \right) , \nonumber
\end{align}
where 
\[ \sigma_{-} \defas \left( \sigma_{1},\ldots, \sigma_{\nog{N}/2}\right) \; \text{and} \;
 \sigma_{+} \defas \left(\sigma_{\nog{N}/2+1}, \ldots,   \sigma_{\nog{N}}\right). \]
Note that the submanifold \m{M_{\mathrm{fin}}}  is locally embedded  in \m{\lieg \times \R^{\nox}} via the embedding \m{b_{\mathrm{fin}} }. Hence,  it follows that \cite[p.\ 60]{holm} 
\[
\mathcal{D}_{(q,x)} b_{\mathrm{fin}} \left(\op{q}_N,\op{x}_N\right) v = 0 \quad \text{for all} \quad  v \in  T_{(\op{q}_N,\op{x}_N)}M_{\mathrm{fin}}, 
\]
and hence for a covector \m{\alpha \in \left(\R^{\noq+\nox - m_f}\right)^*},  we have
\[ \alpha \mathcal{D}_{(q,x)} b_{\mathrm{fin}} \left(\op{q}_N,\op{x}_N \right) v = 0 \quad \text{\; for all \;} v \in T_{(\op{q}_0,\op{x}_0)}M_{\mathrm{init}}, \]
which is equivalent to the following:
\begin{align}\label{eq:cotans}
T_{(\op{q}_N,\op{x}_N)}^* \left(\lieg \times \R^{\nox}\right)\ni \alpha & \mathcal{D}_{(q,x)} b_{\mathrm{fin}} \left(\op{q}_N,\op{x}_N \right) \perp T_{(\op{q}_N,\op{x}_N)}M_{\mathrm{fin}}. 
\end{align}
So, the transversality conditions \eqref{eq:transver} with \eqref{eq:cotans} leads to 
\begin{align*}
 \big\{\bigl(T_{\op{q}_N}^* & \Phi_{\op{q}^{-1}_N} ( \rho^{N-1} ), \xi^{N-1} \bigr)- \nu \mathcal{D}_{\left(q,x\right)} c_N \left(\op{q}_N,\op{x}_N \right)  -\mu^N \mathcal{D}_{\left(q,x\right)} g_N \left(\op{q}_N,\op{x}_N \right) \big\} \perp T_{\left(\op{q}_N,\op{x}_N\right)}M_{\mathrm{fin}}.
\end{align*}
\item \ref{main:hmax} holds,
\item \ref{main:comp} holds, 
\item \ref{main:npos} holds,
\item Suppose the adjoint variables \m{\left\{\left(\zeta^t,\xi^t\right)\right\}_{t=0}^{N-1}}, covectors \m{\left\{ \mu^t  \right\}_{t=1}^{N}} and the scalar \m{\nu} are all zero. Then \eqref{eq:transver} modifies to
\begin{align}\label{eq:sep}
\left( \sigma_{+}-\sigma_{-} \right) \mathcal{D}_{\left(q,x\right)} b_{\mathrm{fin}} \left(\op{q}_N,\op{x}_N \right) =0.
\end{align}
Since the map \m{b_{\mathrm{fin}}}  is a smooth submersion at \m{\left(\op{q}_N,\op{x}_N\right)},  \m{\mathcal{D}_{\left(q,x\right)} b_N \left(\op{q}_N,\op{x}_N \right)} is full rank, and therefore 
\begin{equation}
\left( \sigma_{+}-\sigma_{-} \right) = 0.
\end{equation}
In the case \m{\sigma_{+}=\sigma_{-} \neq 0,} \eqref{eq:sep} leads to triviality, i.e.,
the half spaces
\begin{align*}
H^{-} \defas \Big\{ v \in T_{\left(\op{q}_N,\op{x}_N\right)} & \left(G \times \R^{\nox}\right) \big| \ip{\sigma_{+} \mathcal{D}_{\left(q,x\right)} b_{\mathrm{fin}} \left(\op{q}_N,\op{x}_N \right)}{v} \leq 0\Big\}
\end{align*} 
and 
\begin{align*}
H^{+} \defas \Big\{ v \in T_{\left(\op{q}_N,\op{x}_N\right)} & \left(G \times \R^{\nox}\right) \big| \ip{\sigma_{+} \mathcal{D}_{\left(q,x\right)} b_{\mathrm{fin}} \left(\op{q}_N,\op{x}_N \right)}{v} \geq 0\Big\}
\end{align*} 
are separable.
Otherwise, contradicts non-triviality condition \ref{main:ntriv}. 
So, the non-triviality condition translates to the following:\\
the adjoint variables \m{\left\{\left(\zeta^t,\xi^t\right)\right\}_{t=0}^{N-1}}, covectors \m{\left\{ \mu^t  \right\}_{t=1}^{N}} and the scalar \m{\nu} are all not zero,
\end{enumerate}
which completes our proof.
\end{proof}
\subsection{Proof of Corollary \ref{thm:c2DMP}}
\begin{proof}
Assume that the inequality constraints in Corollary \ref{thm:c1DMP} are given by 
\begin{align}\label{eq:cnstinq}
\lieg \times \R^{\nox} \ni \left( q,x\right) \mapsto g_t\left(q,x\right) \defas -1 \in \R 
\end{align}
for \m{t=1,\ldots,N,} the final cost is given by
\[
\lieg \times \R^{\nox} \ni \left( q,x\right) \mapsto c_N\left(q,x\right) \defas 0 \in \R,
\]
and the fixed boundary point is an immersed manifold, i.e., 
\[
M_{\mathrm{fin}} \defas \left( \bar{q}_N,\bar{x}_N\right) \subset \lieg \times \R^{\nox}
\]
Under these assumptions, applying Corollary \ref{thm:c1DMP} to optimal control problem \eqref{eq:c2sopt} leads to the following set of conditions:
\begin{enumerate}[leftmargin=*, label=(\roman*), widest=b, align=left]
\item Note that \m{\mathcal{D}_{\left(q,x\right)} g_t\left(\op{q}_t,\op{x}_t\right) = 0 \; \text{for} \; t=1,\ldots,N,}
and that simplifies the state and adjoint dynamics to
\begin{align*}
	\text{state} & \begin{cases}
q_{t+1} = q_t \e^{\mathcal{D}_{\zeta} \ham{\left(\optraj{t} \right)}},\\
x_{t+1} = \mathcal{D}_{\xi}  \ham{\left(\optraj{t} \right)},
\end{cases}
 \\
	\text{adjoint} &  \begin{cases}
 \rho^{t-1} = \ad{\e^{-\mathcal{D}_{\zeta} \ham{\left(\optraj{t} \right)}}}^* \rho^{t} + \triv{\op{q}_t}\Big(\mathcal{D}_{q} \ham{\left(\optraj{t} \right)} \Big),\\
\xi^{t-1} = \mathcal{D}_{x} \ham{\left(\optraj{t} \right)},
 \end{cases}
\end{align*}

\item The manifold \m{M_{\mathrm{fin}}} is a singleton. Hence
\[T_{(\op{q}_N,\op{x}_N)}M_{\mathrm{fin}}=0,
\]
and therefore the annihilator of the tangent space  \m{T_{(\op{q}_N,\op{x}_N)}M_{\mathrm{fin}}} in \m{T_{(\op{q}_N,\op{x}_N)}^* \left(\lieg \times \R^{\nox}\right)} is \m{T_{(\op{q}_N,\op{x}_N)}^* \left(\lieg \times \R^{\nox}\right)}. 
So, the transversality conditions of Corollary \ref{thm:c1DMP} are trivially satisfied.
\item Hamiltonian non-positive gradient condition:
\[\ip{\mathcal{D}_{u}\ham{\left(t,\zeta^t,\xi^t,\op{q}_t,\op{x}_t,\op{u}_t\right)} }{w-\op{u}_t} \leq 0 \;\; \text{for all} \;\; w \in \cset.\]

\item In view of \eqref{eq:cnstinq} and the complementary slackness conditions of Corollary \ref{thm:c1DMP}, we obtain
\[ \mu^t = 0 \quad \text{for all} \quad t=1,\ldots,N.\]
The non-positivity condition of the  Corollary \ref{thm:c1DMP} is trivially satisfied and the non-triviality condition translates to the following:\\
adjoint variables \m{\left\{\zeta^t,\xi^t\right\}_{t=0}^{N-1}} and the scalar \m{\nu} do not simultaneously vanish.
\end{enumerate}
\end{proof}

\subsection{Proof of Corollary \ref{thm:c3DMP}}
\begin{proof}
We first prove that there exist a unique representation of \m{s_t} in the neighborhood of \m{e \in \lieg} satisfying \m{v_t\left(s_t,q_t,x_t\right)=0.} Then the optimal control problem is posed in the standard form \eqref{eq:c2sopt} and necessary conditions are derived by applying Corollary \ref{thm:c2DMP}.  
\begin{restatable}{lemma}{implicit}
\label{lemma:implicit}
If the map \m{v_t\left(\cdot,q_t,x_t\right) : \mathcal{O}_{e} \subset \lieg \rightarrow \R^{\noq}} is a diffeomorphism onto its image for all feasible pairs 
\[ \mathcal{A}_t \defas \left\{ (q_t,x_t) \in \lieg \times \R^{\nox} | \left\{ \left( q_t,x_t\right)\right\}_{t=0}^{N} \; \text{ satisfying } \eqref{eq:sys} \right\}, \]
where \m{\mathcal{O}_{e}} is an open set in \m{\lieg} containing \m{e \in \lieg ,} then for a fixed \m{\left(\left( \bar{q}_t,\bar{x}_t,\right) , \bar{s}_t\right) \in \mathcal{A}_t \times \mathcal{O}_{e} }, there exist open  neighborhoods \m{\mathcal{N}_t} of \m{\left( \bar{q}_t,\bar{x}_t,\right)} in \m{\lieg \times \R^{\nox}, } \m{\mathcal{R}_t} of \m{\bar{s}_t} in \m{\lieg, } and a continuously differentiable map \m{\kappa_t : \mathcal{N}_t \rightarrow \mathcal{R}_t} such that 
\[s_t = \kappa_t\left(q_t,x_t\right), \quad  \text{satisfies} \quad v_t \left( \kappa_t \left(q_t,x_t\right) ,q_t,x_t\right) = 0.
\]
Furthermore, the derivative \m{D \kappa_t \left(\bar{q}_t,\bar{x}_t\right): T_{\bar{q}_t}\lieg \times \R^{\nox} \rightarrow T_{\bar{s}_t}\lieg} is given by 
\[
\mathcal{D}_q \kappa_t\left(\bar{q}_t,\bar{x}_t\right) = -\mathcal{D}_s v_t\left(\bar{s}_t,\bar{q}_t,\bar{x}_t\right)^{-1} \circ \mathcal{D}_q v_t\left(\bar{s}_t,\bar{q}_t,\bar{x}_t\right)
\] and 
\[
\mathcal{D}_x \kappa_t\left(\bar{q}_t,\bar{x}_t\right) = -\mathcal{D}_s v_t\left(\bar{s}_t,\bar{q}_t,\bar{x}_t\right)^{-1} \circ \mathcal{D}_x v_t\left(\bar{s}_t,\bar{q}_t,\bar{x}_t\right).
\] 
\end{restatable}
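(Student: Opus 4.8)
The plan is to deduce the statement from the implicit function theorem, after first verifying that the partial derivative of \m{v_t} in the group variable is a linear isomorphism at the base point. Accordingly, the proof splits into three stages: establishing nondegeneracy of \m{\mathcal{D}_s v_t}, invoking the implicit function theorem in a chart to produce \m{\kappa_t}, and computing the derivatives of \m{\kappa_t} by implicit differentiation.

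First I would record the nondegeneracy. By hypothesis, the map \m{\mathcal{O}_{e} \ni s \mapsto v_t(s,q_t,x_t) \in \R^{\noq}} is a diffeomorphism onto its image for every feasible pair \m{(q_t,x_t) \in \mathcal{A}_t}. Since \m{\mathcal{O}_{e}} is an open subset of the \m{\noq}-dimensional manifold \m{\lieg} while the codomain \m{\R^{\noq}} has the same dimension, and since the derivative of a diffeomorphism is a linear isomorphism at every point, the map
\[ \mathcal{D}_s v_t\left(\bar{s}_t,\bar{q}_t,\bar{x}_t\right) : T_{\bar{s}_t}\lieg \to \R^{\noq} \]
is invertible. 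This is precisely the rank condition needed to solve for \m{s_t}.

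Second, to reduce the manifold statement to the classical Euclidean implicit function theorem, I would fix a coordinate chart \m{\varphi} of \m{\lieg} around \m{\bar{s}_t} and pass to the coordinate representative \m{\tilde{v}_t(y,q_t,x_t) \defas v_t\left(\varphi^{-1}(y),q_t,x_t\right)}, defined on an open subset of \m{\R^{\noq} \times \lieg \times \R^{\nox}}. Its partial derivative in \m{y} at the base point is the composition of \m{\mathcal{D}_s v_t\left(\bar{s}_t,\bar{q}_t,\bar{x}_t\right)} with the (invertible) derivative of \m{\varphi^{-1}}, hence invertible by the previous step. The implicit function theorem then furnishes open neighborhoods \m{\mathcal{N}_t} of \m{\left(\bar{q}_t,\bar{x}_t\right)} in \m{\lieg \times \R^{\nox}} and \m{\mathcal{R}_t} of \m{\bar{s}_t} in \m{\lieg}, together with a continuously differentiable map \m{\kappa_t : \mathcal{N}_t \to \mathcal{R}_t} satisfying \m{\kappa_t\left(\bar{q}_t,\bar{x}_t\right)=\bar{s}_t} and \m{v_t\left(\kappa_t(q_t,x_t),q_t,x_t\right)=0} throughout \m{\mathcal{N}_t}; smoothness of \m{v_t} (Assumption \ref{asm:1}) transfers to \m{\kappa_t}. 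Finally, differentiating the identity \m{v_t\left(\kappa_t(q_t,x_t),q_t,x_t\right)=0} with respect to \m{q} and to \m{x} and applying the chain rule yields
\[ \mathcal{D}_s v_t \circ \mathcal{D}_q \kappa_t + \mathcal{D}_q v_t = 0, \qquad \mathcal{D}_s v_t \circ \mathcal{D}_x \kappa_t + \mathcal{D}_x v_t = 0, \]
all evaluated at \m{\left(\bar{s}_t,\bar{q}_t,\bar{x}_t\right)}; solving for \m{\mathcal{D}_q \kappa_t} and \m{\mathcal{D}_x \kappa_t} using the invertibility of \m{\mathcal{D}_s v_t} gives the two claimed formulae.

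The main obstacle is bookkeeping rather than depth: one must take care that the implicit function theorem is applied so that its output \m{\kappa_t} genuinely takes values in \m{\lieg} (and not merely in a coordinate chart), and that the invertibility of \m{\mathcal{D}_s v_t} is correctly extracted from the diffeomorphism hypothesis through the dimension count \m{\dim \lieg = \noq}. Once these identifications of tangent spaces are in place, the derivative computation is a routine chain-rule manipulation.
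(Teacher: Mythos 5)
Your proposal is correct and follows essentially the same route as the paper: extract invertibility of \(\mathcal{D}_s v_t\) from the diffeomorphism hypothesis via the dimension count \(\dim \lieg = \noq\), pass to a chart to invoke the classical implicit function theorem, transport \(\kappa_t\) back to the manifold, and obtain the derivative formulae by implicit differentiation of \(v_t\left(\kappa_t(q_t,x_t),q_t,x_t\right)=0\). The only cosmetic difference is that the paper also introduces an explicit chart \(\beta\) around \(\bar{q}_t\) so that the implicit function theorem is applied to a map between open subsets of Euclidean spaces, a bookkeeping point you already flag.
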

\begin{proof}
See Appendix \ref{app:implicit}.
\end{proof}

Let  \m{\left( \left\{ \op{q}_t,\op{x}_t\right\}_{t=0}^{N},\left\{ \op{u}_t\right\}_{t=0}^{N-1}\right)} be an optimal state-action trajectory with \m{\left\{\op{s}_t \right\}_{t=0}^{N-1} \subset \mathcal{O}_e} such that  \m{v_t\left(\op{s}_t, \op{q}_t, \op{x}_t \right)=0} for \m{t=0,\ldots,N-1}. Using Lemma \ref{lemma:implicit}: there exist open neighborhoods \m{\mathcal{N}_t \subset \lieg\times\R^{\nox}} of \m{\left(\op{q}_t,\op{x}_t\right), } \m{\mathcal{R}_t \subset \lieg} of \m{\op{s}_t \in \mathcal{O}_e, } and the maps \m{\kappa_{t}:\mathcal{N}_t \rightarrow \mathcal{R}_t} such that

\[
s_t = \kappa_t\left(q_t,x_t\right), \quad  \text{satisfies} \quad v_t \left( \kappa_t \left(q_t,x_t\right) ,q_t,x_t\right) = 0.
\]
Let us define 
\[
\mathcal{N} \defas \mathcal{N}_0 \times \cdots \times \mathcal{N}_N,
\]
where \m{\mathcal{N}_N \subset \lieg \times \R^{\nox} } is a neighborhood containing \m{\left(\op{q}_N,\op{x}_N\right).} Note that the system dynamics in \eqref{eq:c3sopt} is smooth in the control variables \m{\{u_t\}_{t=0}^{N-1}}. Therefore there exists a number \m{r>0} such that state trajectories starting at \m{\left(\bar{q}_0,\bar{x}_0\right)} remains in \m{\mathcal{N}} under all control actions
\[
\left(u_0,\ldots,u_{N-1} \right) \in \tilde{U} \defas \tilde{U}_0 \times \cdots\times \tilde{U}_{N-1}, 
\]
where \m{\tilde{U}_t \defas \left(\cset \cap \mathcal{B}_{r} \left( \op{u}_t \right)\right) \subset \R^{\nou},} and \m{\mathcal{B}_{r} \left( \op{u}_t \right)} is an open neighborhood of radius \m{r} containing \m{\op{u}_t.}

This leads to the representation of the optimal control problem \eqref{eq:c3sopt} in the neighborhood \m{\mathcal{N} \times \tilde{U}} of the optimal state-action trajectory as :
\begin{equation}
\label{eq:c3soptlcl}
\begin{aligned}
\minimize_{\left\{u_t\right\}_{t=0}^{N-1}} &&& \mathscr{J} \left(\vq,\vx,\vu\right) = \sum_{t=0}^{N-1} c_t \left(\kappa_t\left(q_t,x_t\right),q_t,x_t,u_t\right) \\
\text{subject to} &&&
\begin{cases}
\begin{cases} q_{t+1} = q_t \kappa_t(q_t,x_t)\\
x_{t+1} = f_t \left(\kappa_t\left(q_t,x_t\right),q_t,x_t,u_t\right) \\
 u_t \in \tilde{U}_t
 \end{cases} \text{for all\;} t \in [N-1], \\
  \left(q_0, x_0\right) = \left(\bar{q}_0,\bar{x}_0\right),\\  
 \left(q_N, x_N\right) = \left(\bar{q}_N,\bar{x}_N\right),\\
 \text{Assumption\;} \ref{ass:asm}.
 \end{cases}
\end{aligned}
\end{equation}
Let us define the Hamiltonian function  as in \eqref{eq:sham}
and then applying Corollary \ref{thm:c2DMP} to \eqref{eq:c3soptlcl} will immediately leads to the following conditions: for
\[
\optraj{t} \defas \left(t,\zeta^t,\xi^t,\op{s}_t,\op{q}_t,\op{x}_t,\op{u}_t \right), \quad \op{v}_t \defas v \left(\op{s}_t,\op{q}_t,\op{x}_t\right),
\]
 and \[\rho^{t} \defas \codexp(\zeta^t),\]
\begin{enumerate}[label=(\roman*), leftmargin=*, widest=b, align=left]

\item State and adjoint system dynamics given by
\begin{align*}
	\text{state} & \begin{cases}
\op{q}_{t+1} = \op{q}_t \e^{\mathcal{D}_{\zeta} \ham{\left(\optraj{t} \right)}},\\ v_t\left(\op{s}_t,\op{q}_t,\op{x}_t\right) = 0,\\
\op{x}_{t+1} = \mathcal{D}_{\xi} \ham{\left(\optraj{t} \right)} ,
\end{cases}
 \\
	\text{adjoint} & \begin{cases}
\rho^{t-1} = \ad{\e^{-\mathcal{D}_{\zeta} \ham{\left(\optraj{t} \right)}}}^* \rho^{t} +\triv{\op{q}_t}\left(\mathcal{D}_{q} \ham{\left(\optraj{t} \right)}-\mathcal{D}_{s} \ham{\left(\optraj{t} \right)} \circ \mathcal{D}_s \op{v}_t^{-1} \circ \mathcal{D}_q \op{v}_t  \right),\\
\xi^{t-1} = \mathcal{D}_{x} \ham{\left(\optraj{t} \right)} - \mathcal{D}_{s} \ham{\left(\optraj{t} \right)} \circ \mathcal{D}_s \op{v}_t^{-1} \circ \mathcal{D}_x \op{v}_t.
 \end{cases}
\end{align*}

\item Hamiltonian non-positive gradient condition given by
\[
\ip{\mathcal{D}_{u}\ham{\left(\optraj{t} \right)} }{\tilde{w}-\op{u}_t } \leq 0  \quad \text{for all} \quad \tilde{w} \in \tilde{U}_t .
\]
Since the set \m{\cset} is convex,
\[
\tilde{w}_{\alpha} \defas  \op{u}_t + \alpha \left( w - \op{u}_t\right)  \in \cset \; \text{for any} \; \alpha \in [0,1] \; \text{and} \; w \in \cset.
\]
In particular, choosing 
\[
\bar{\alpha} \defas 
\begin{cases}
\min \left\{ \frac{r}{2\norm{w -\op{u}_t}}, 1 \right\} \quad &\text{for} \quad w\neq \op{u}_t ,\\
0 & \text{otherwise,}
\end{cases}
\]
ensures that \m{\tilde{w}_{\bar{\alpha}} \in \tilde{U}_t}, and consequently,
\[
\ip{\mathcal{D}_{u}\ham{\left(\optraj{t} \right)} }{\bar{\alpha} \left(w-\op{u}_t \right)} \leq 0  \quad \text{for all} \quad w \in \cset .
\]
Since \m{\bar{\alpha} \geq 0,} it follows that 
\[
\ip{\mathcal{D}_{u}\ham{\left(\optraj{t} \right)} }{ w-\op{u}_t } \leq 0  \quad \text{for all} \quad w \in \cset,
\]
and this proves the assertion.
\item Non-triviality condition given by:\\
adjoint variables \m{\left\{\zeta^t,\xi^t\right\}_{t=0}^{N-1}} and the scalar \m{\nu} do not simultaneously vanish.
\end{enumerate}
\end{proof}

\section{Proofs} \label{app:auxiliary}
\subsection{Proof of Claim \ref{claim:diffeomorphism}} \label{app:diffeomorphism}
\begin{proof}
	Let \m{\bigl(\left\{\left(q^{'}_t,x^{'}_t\right)\right\}_{t=0}^{N},\{u^{'}_t\}_{t=0}^{N-1}\bigr)} be a feasible state-action trajectory starting at \m{\left( \bar{q}_0,\bar{x}_0\right).} If the state-action trajectory lies in the image of the diffeomorphism \m{\Psi}, then it can be represented by
    \[ \left(\vq^{'},\vx^{'},\vu^{'}\right) \defas \left(\psi_0(\veta^{'}),\ldots, \psi_N(\veta^{'}), \vx^{'}, \vu^{'} \right) \in \Psi \left(\Lambda \right) \]
for some  \m{\left(\veta^{'},\vx^{'},\vu^{'} \right) \in \Lambda,} where \m{\veta^{'} \defas \left(\eta^{'}_0,\ldots,\eta^{'}_N \right).} Observe that there exists a unique  \m{\left(\veta^{'},\vx^{'},\vu^{'} \right) \in \Lambda}  with  
 \[ \eta^{'}_0 \defas  0, \quad \eta^{'}_{0:t} \defas \left(\eta^{'}_0,\ldots,\eta^{'}_t \right) \quad  \text{for} \quad t \in [N-1],\]
 \[
\eta^{'}_{t+1} \defas \left(\vechm \circ \e^{-1} \circ s_t\right) \left(\tilde{\psi}_t\left(\eta^{'}_{0:t}\right),x_t\right)\quad \text{for} \quad t \in [N-1],
 \]
 and
 \[
 \tilde{\psi}_{t} \left(\eta^{'}_{0:t}\right) \defas \bar{q}_0 \exv{\eta_0^{'}} \cdots \exv{\eta_t^{'}} \; \text{for} \; t \in [N-1],
 \] 
 such that 
 \[
  \Psi \left(\veta^{'},\vx^{'},\vu^{'} \right) = \left(\vq^{'},\vx^{'},\vu^{'}\right).  
 \]
 Notice that 
 \[ 
 \eta^{'}_{t} \in \vechm \left(\mathcal{O}\right)\quad \text{for}\quad  t \in [N],
 \]
 because of the assumption \ref{asm:2}, i.e.,  \m{s_t\left( q_t,x_t\right) \in \e\left(\mathcal{O}\right)} for all feasible pairs \m{\left( q_t,x_t\right).} This proves our claim.
\end{proof}

\subsection{Proof of Claim \ref{claim:inseparable}} \label{app:inseparable}
\begin{proof}
We know that inseparability of convex cones is translation invariant. Therefore, without loss of generality, we assume that \m{\op{\vz} = 0 \in \R^m}. Define the subspaces
\begin{align*}
& \mathcal{L}_t \defas \{0\} \times \Big(\R^{\noq}\Big)^{N} \nspc\; \times \{0\} \times \Big(\R^{\nox}\Big)^{N} \nspc\; \times \underbrace{\{0\} \cdots \nspc\; \overbrace{\R^{\nou}}^{(t+1) \text{th factor}} \nspc\;\, \cdots \{0\}}_{N \text{\;factors}} \quad \text{for} \quad t \in [N-1], \\
&  \mathcal{L}_N \defas \Big(\R^{\noq}\Big)^{N+1}\nspc\; \times \Big(\R^{\nox}\Big)^{N+1}\nspc\; \times \Big(\{0\}\Big)^{N},
\end{align*}
and convex cones 
\begin{align*}
& \tilde{K}^t_u \defas \{0\} \times \Big(\R^{\noq}\!\Big)^{N} \nspc\; \times \{0\} \times \Big(\R^{\nox}\!\Big)^{N} \nspc\; \times \underbrace{\{0\} \cdots \nspc\; \overbrace{Q^t_u(0)}^{(t+1) \text{th factor}} \nspc\;\, \cdots \{0\}}_{N \text{\;factors}} \quad \text{for } \quad t \in [N-1],\\
& \tilde{K}_B \defas \{0\} \times \Big(\R^{\noq}\!\Big)^{N} \nspc\; \times \{0\} \times \Big(\R^{\nox}\!\Big)^{N}\nspc\;\times \Big(\!\{0\}\!\Big)^{N}
\end{align*}
such that \m{\tilde{K}^t_u \in \mathcal{L}_t} and \m{\tilde{K}_B \in \mathcal{L}_N}. Note that \m{\sum_{t=1}^{N}  \mathcal{L}_t= \R^m} and for \m{\mathcal{L}^{\nabla}_t \defas \sum_{j=1, j\neq t}^{N}  \mathcal{L}_j} for \m{t \in [N]}, 
\begin{align*}
& K^t_u(0) = \text{conv}(\mathcal{L}^{\nabla}_t \cup \tilde{K}^t_u) \text{ for } t \in [N-1] \text{ and }\quad K_B (0) = \text{conv}(\mathcal{L}^{\nabla}_N \cup \tilde{K}_B),
\end{align*}
where \m{\text{conv}(M) \subset \R^m} is the smallest convex set containing \m{M\subset \R^m}.
By \cite[Theorem~7]{tent}, the family of convex cones \m{K^0_u (0), \ldots, K^{N-1}_u (0), K_B(0) } are inseparable in \m{\R^m}. This proves our claim. 
\end{proof}

\subsection{Proof of Lemma \ref{lemma:implicit}} \label{app:implicit}
\begin{proof}
Given a fixed feasible pair \m{\left(\bar{q}_t,\bar{x}_t\right) \in \mathcal{A}_t}, the map
\begin{align} \label{eq:vmap}
v_t(\cdot,\bar{q}_t,\bar{x}_t): \mathcal{O}_e \rightarrow v_t(\mathcal{O}_e,\bar{q}_t,\bar{x}_t)
\end{align}
is a local diffeomorphism. 
Let \m{\left(U,\alpha\right)} and \m{\left(V, \beta\right)} be the local charts of the Lie group \m{\lieg} at \m{\bar{s}_t \in U} and \m{\bar{q}_t \in V} respectively, and subsequently define 
\begin{align*}
U \times V \times \R^{\nox}  \ni & \left(s,q,x\right) \mapsto \varphi_t\left( s,q,x\right) \defas \left(\alpha(s),\beta(q),x\right) \in \alpha \left(U\right) \times \beta\left( V \right) \times \R^{\nox} 
\end{align*}
So, the local representation of the map \eqref{eq:vmap}  
\[
\alpha (U) \ni (\varsigma) \mapsto \tilde{v}_t (\varsigma)  \defas \left( v_t \circ \varphi^{-1}_t\right) \left(\varsigma, \beta(\bar{q}_t), \bar{x}_t\right) \in \R^{\noq}
\]
is  a diffeomorphism onto its image, and therefore \m{\mathcal{D}\tilde{v}_t(\alpha(\bar{s}_t))} is invertible with  
\m{\tilde{v}_t(\alpha(\bar{s}_t))=0.}

Using the Implicit Function Theorem \cite[p.\ 100]{duistermaat}:  there exist neighborhoods \m{\tilde{\mathcal{N}}_t } of \m{\left(\beta(\bar{q}_t), \bar{x}_t \right) \in \R^{\noq} \times \R^{\nox}}, \m{\tilde{\mathcal{R}}_t } of \m{\alpha(\bar{s}_t) \in \R^{\noq} }, and a \m{C^1} differentiable map 
\m{\tilde{\kappa}_t: \tilde{\mathcal{N}}_t  \rightarrow \tilde{\mathcal{R}}_t}
such that 
\[\tilde{s}_t = \tilde{\kappa}_t\left(\beta(q_t),x_t\right) \text{\;\;and\;\;}\left(v_t\circ \varphi^{-1}_t \right) \left(\tilde{s}_t, \beta(q_t), x_t\right)=0.
\] 
We know that the map 
\[ \beta(V) \times \R^{\nox} \ni (\sigma,x) \mapsto \delta(\sigma,x) \defas (\beta^{-1}(\sigma), x) \in \lieg \times \R^{\nox} 
\]
is a diffeomorphism onto its image. Therefore, the sets  \m{\quad \mathcal{N}_t \defas \delta \left(\tilde{\mathcal{N}}_t \right) \subset \lieg \times \R^{\nox}, \quad \mathcal{R}_t \defas\alpha^{-1}\left( \tilde{\mathcal{R}}_t\right) \subset \lieg} are open, and the map
\[
\mathcal{N}_t   \ni \left(q,x\right) \mapsto \kappa_t  \left(q,x\right) \defas \left(\alpha^{-1} \circ \tilde{\kappa}_t \circ \delta^{-1}\right)  \left(q,x\right) \defas s \in  \mathcal{R}_t 
\]
represent \m{s } in terms of \m{\left(q,x\right)} uniquely. In other words,
\[ 
s_t = \kappa_t\left(q_t,x_t\right) \text{\;\;and\;\;} v_t\left(\kappa_t\left(q_t,x_t\right),q_t,x_t\right) = 0.
\]
Taking the derivative of the equation \m{v_t\left(\kappa_t\left(q_t,x_t\right),q_t,x_t\right) = 0 } with respect to \m{x} and \m{q} gives the following results: 
\[\mathcal{D}_q \kappa_t\left(q_t,x_t\right) = -\mathcal{D}_s v_t\left(s_t,q_t,x_t\right)^{-1} \circ \mathcal{D}_q v_t\left(s_t,q_t,x_t\right)\] and \[\mathcal{D}_x \kappa_t\left(q_t,x_t\right) = -\mathcal{D}_s v_t\left(s_t,q_t,x_t\right)^{-1} \circ \mathcal{D}_x v_t\left(s_t,q_t,x_t\right).\] 
\end{proof}
\bibliographystyle{siam}

\end{document}